\newtheorem{theorem}{Theorem}[section]
\newtheorem{lemma}[theorem]{Lemma}
\newtheorem{corollary}[theorem]{Corollary}
\newtheorem{claim}[theorem]{Claim}
\newtheorem{proposition}[theorem]{Proposition}
\newtheorem{definition}[theorem]{Definition}
\newtheorem{remark}[theorem]{Remark}
\newtheorem{fact}[theorem]{Fact}
\newtheorem*{rep@theorem}{\rep@title}
\newcommand{\newreptheorem}[2]{%
\newenvironment{rep#1}[1]{%
\def\rep@title{#2 \ref{##1}}%
\begin{rep@theorem}}%
{\end{rep@theorem}}}
\newenvironment{proofof}[1]{\smallskip\noindent{\bf Proof of #1.}}%
{\hspace*{\fill}$\Box$\par}
\newcommand{\pref}{\prettyref}
\newcommand{\E}{{\mathbb{E}}}
\newcommand{\eps}{\varepsilon}
\newcommand{\cB}{\mathcal{B}}
\newcommand{\cD}{\mathcal{D}}
\newcommand{\cI}{\mathcal{I}}
\newcommand{\cP}{\mathcal{P}}
\newcommand{\cG}{\mathcal{G}}
\newcommand{\cM}{\mathcal{M}}
\newcommand{\cX}{\mathcal{X}}
\newcommand{\cQ}{\mathcal{Q}}
\newcommand{\cU}{\mathcal{U}}
\newcommand{\cT}{\mathcal{T}}
\newcommand{\cF}{\mathcal{F}}
\newcommand{\cJ}{\mathcal{J}}
\newcommand{\Patrascu}{P\u{a}tra\c{s}cu}
\newcommand{\ip}[1]{\left\langle #1 \right\rangle}
\newcommand{\Field}{\mathbb{F}}
\newcommand{\abs}[1]{\left| #1 \right|}
\newcommand{\norm}[1]{\| #1 \|}
\newcommand{\KL}[2]{\mathsf{D}_{KL} ( {#1} || {#2} )}
\newcommand{\poly}{{\operatorname{poly}\xspace}}
\newcommand{\AND}{\mathsf{AND}}
\newcommand{\OR}{\mathsf{OR}}
\newcommand{\supp}{\mathsf{supp}}
\newcommand{\odisc}{\mathsf{odisc}}
\newcommand{\adv}{ \overrightarrow{\mathsf{adv}}}
\newcommand{\IP}{\mathsf{IP}}
\DeclareMathOperator*{\argmax}{arg\,max}
\title{Unifying the Landscape of Super-Logarithmic \\
Dynamic Cell-Probe Lower Bounds}
\author{Young Kun Ko}
\affil[]{Department of Computer Science and Engineering, Pennsylvania State University}
\affil[]{Email: ykko@psu.edu}
\date{\today}
\begin{document}

\maketitle

\begin{abstract}
    We prove a general translation theorem for converting one-way communication lower bounds over a product distribution to dynamic cell-probe lower bounds. 
    
    Specifically, we consider a class of problems considered in \cite{patrascu_mihai_towards_2010} where:
    \begin{itemize}
        \item $S_1, \ldots, S_m \in \{0, 1\}^n$ are given and publicly known.
        \item $T \in \{0, 1\}^n$ is a sequence of updates, each taking $t_u$ time.
        \item For a given $Q \in [m]$, we must output $f(S_Q, T)$ in $t_q$ time.
    \end{itemize}
    Our main result shows that for a ``hard'' function $f$, for which it is difficult to obtain a non-trivial advantage over random guessing with one-way communication under some product distribution over $S_Q$ and $T$ (for example, a uniform distribution), then the above explicit dynamic cell-probe problem must have $\max \{ t_u, t_q \} \geq \tilde{\Omega}(\log^{3/2}(n))$ if $m = \Omega(n^{0.99})$. This result extends and unifies the super-logarithmic dynamic data structure lower bounds from \cite{larsen_crossing_2020} and \cite{larsen_super-logarithmic_2025} into a more general framework.

    From a technical perspective, our approach merges the cell-sampling and chronogram techniques developed in \cite{larsen_crossing_2020} and \cite{larsen_super-logarithmic_2025} with the new static data structure lower bound methods from \cite{ko_adaptive_2020} and \cite{ko_lower_2025}, thereby merging all known state-of-the-art cell-probe lower-bound techniques into one. 

    As a direct consequence of our method, we establish a super-logarithmic lower bound against the Multiphase Problem \cite{patrascu_mihai_towards_2010} for the case where the data structure outputs the Inner Product (mod 2) of $S_Q$ and $T$. We suspect further applications of this general method towards showing super-logarithmic dynamic cell-probe lower bounds. We list some example applications of our general method, including a novel technique for a one-way communication lower bound against small-advantage protocols for a product distribution using average min-entropy, which could be of independent interest.
\end{abstract}

\newpage

\section{Introduction}

We consider the following general class of dynamic data structure problems, initiated by the so-called Multiphase Program \cite{patrascu_mihai_towards_2010}.
\begin{itemize}
    \item $S_1, \ldots, S_m \in \{ 0, 1 \}^{n}$ are given and publicly known.\footnote{In the original Multiphase Problem, these are given as pre-processing inputs and are pre-processed into a data structure.}
    \item $T \in \{ 0, 1 \}^{n}$ is given as a sequence of updates, using $t_u$ time per update.
    \item For any given $i \in [m]$, output $f ( S_i, T )$ using $t_q$ time.
\end{itemize}
We analyze this dynamic data structure problem in the cell-probe model \cite{yao_should_1981}, the most powerful model of computation for data structures. An input is pre-processed into a data structure of $s$ cells, each with a word of size $w$-bits. In this model, we only charge for memory accesses (probes) of cells, while all computation on top of the probed cells is free. Because of its unimaginable strength, a lower bound in this model applies to any reasonable data structure.

The Multiphase Conjecture \cite{patrascu_mihai_towards_2010, thorup_mihai_2013} states that if $f$ is two-party Disjointness between $S_i$ and $T$ with $m = \poly (n)$, then $\max \{ t_u, t_q \} \geq n^{\delta}$ for some constant $\delta > 0$. The Multiphase Conjecture then implies polynomial lower bounds for Graph Reachability (for directed graphs), Dynamic Shortest Path (for undirected graphs), and other interesting dynamic problems. 

In this work, we consider a generalized version of the Multiphase Program, considering $f$ to be any function from a general class of functions. For instance, we consider $f$ to be a ``lifted" version of some function $\psi$. Suppose we divide $n$ into blocks of size $k$ each, denoted by $S_i [ j ], T [ j ]$ where $j \in [n / k]$. Let $g$ be some inner gadget, $g: \{0,1\}^k \times \{0,1\}^k \to \{ 0 , 1 \}.$ Some standard inner gadgets $g$ include indexing functions (where the first $k$ bits have only one 1) or functions that output the inner product of the two $k$-bit strings. Then consider $f$'s of the form
$$f ( S_i, T ) := \psi ( g ( S_i [ 1 ] , T [ 1 ] ) , \ldots, g ( S_i [ n/k ] , T [ n/k ] ).$$
For example, this captures both Disjointness and Inner Product (mod 2), by setting $k=1$ and $g$ as the bitwise-AND function. If we take $\psi$ as $\bigvee$, $f$ is then Disjointness, while if we take $\psi$ as $\bigoplus$, $f$ is then Inner Product (mod 2). This is the class of functions studied extensively in so-called lifting theorems (see \cite{rao_anup_communication_2018, bun_approximate_2022} and references therein).

\subsection{Our Result}

Our main result is to show a ``lifting" type result for the dynamic data structure problem. If $f$ is ``hard" against an $n / \poly \log(n)$ length message (i.e., a lower bound against one-way communication) under a product distribution, then the Multiphase Problem for such a function $f$ has $\max \{ t_u, t_q \} \geq \tilde{\Omega} ( \log^{3/2} (n) )$. Thus we lift $f$ which is hard against one-way communication (a more tractable task) to hard against dynamic data structures, albeit only up to super-logarithmic hardness. As a corollary, we obtain the state-of-the-art lower bound against the Multiphase Conjecture \cite{patrascu_mihai_towards_2010, clifford_new_2015, brody_adapt_2015, ko_adaptive_2020, dvorak_lower_2020, ko_lower_2025} when the function in question is Inner Product (mod 2) instead of Disjointness, improving upon the previously known $\Omega ( \log (n) )$ bound \cite{clifford_new_2015, brody_adapt_2015}.

Formally, we consider the following set of explicit dynamic data structure problems, which we denote as the generalized Multiphase Problem:
\begin{Problem}
    \begin{itemize}
    \item $S_1, \ldots, S_m \in \{ \{0 , 1 \}^k \}^{n/k}$ are given, where each $S_Q$ is divided into $n/k$ blocks $S_Q [1], \ldots, S_Q [n/k]$, each of size $k$.
    \item $T = ( T [1], \ldots, T [n/k] ) \in \{ \{0 , 1 \}^k \}^{n/k} $ is given as a sequence of updates, using $t_u$ time per update (in the cell-probe model with word size $w$).
    \item For any given $Q \in \cQ = [m]$, output $$f(S_Q, T) = \psi \left( g( S_Q [1], T [1] ), \ldots , g( S_Q [n/k], T [n/k] ) \right)$$ in $t_q$ time.
    \end{itemize}
    \caption{Generalized Multiphase Problem \label{prob:gmp} }
\end{Problem}
Note that the functions $f$ we consider are precisely the set of functions used in the Pattern Matrix Method \cite{sherstov_pattern_2011}, and Lifting Theorems (\cite{rao_anup_communication_2018, bun_approximate_2022} and references therein). Our main result is showing super-logarithmic lower bounds (as in \cite{larsen_crossing_2020, larsen_super-logarithmic_2025}) for a general class of ``hard" $f$'s which includes inner product (mod 2). 

Before fully stating our main result, we would like to formally describe which $f$'s are ``hard" for our proof. As we also use the {\bf chronogram method}, we will divide the $n/k$ blocks of $T$ into $\ell$ epochs, $\{ T_i \}_{i=1}^\ell$, where $T_i$ consists of $n_i$ blocks and $|T_i| = k \cdot n_i \approx k \cdot \gamma^i.$ $\gamma$ is a parameter to be defined within the proof, and $\sum n_i = n/k$. If we fix all epochs other than $i$, this results in our function $f$ becoming
\begin{equation*}
    f_i ( S_Q, T_i )  := f ( S_Q, T )
\end{equation*}
With fixed $S_Q^{-i}$ and $T_{-i}$, we consider the real-valued matrix $\Psi_i |_{S_Q^{-i}, T_{-i}} \in [ - 1, + 1]^{ 2^{|S_Q^i|} \times  2^{|T_i|} }$ (assuming without loss of generality that $\Pr_{S_Q^i, T_i} [ f ( S_Q, T ) = +1] \geq \Pr_{S_Q^i, T_i} [ f ( S_Q, T ) = -1]$)
\begin{equation*}
    \Psi_i (S_Q^i, T_i ) := \begin{cases}
        \frac{1}{\Pr_{T_i} [ f ( S_Q, T ) = +1]} - 1 & \mbox{if } f ( S_Q, T ) = +1 \\
       - 1  & \mbox{otherwise}  
    \end{cases} 
\end{equation*}
Observe that $\Psi_i$ has been normalized so that for every setting of $S_Q^i$, $\E_{T_i} \left[ \Psi_i (S_Q^i, T_i ) \right] = 0$.

We need two technical components regarding $\Psi_i$. First, we cannot have $ \min_{S_Q^i, T_i} \abs{ \Psi_i (S_Q^i, T_i ) } $ be too small, which corresponds to having a lower bound of $\beta$ (we will set the bound on $\beta$ later in the section) on
$$ \min \left\{ \frac{\Pr_{T_i} [ f ( S_Q, T ) = -1]}{\Pr_{T_i} [ f ( S_Q, T ) = +1]} , \frac{\Pr_{T_i} [ f ( S_Q, T ) = +1]}{\Pr_{T_i} [ f ( S_Q, T ) = -1]} \right\} \geq \beta $$
Otherwise, we cannot use the simulation theorem from \cite{larsen_super-logarithmic_2025}. This is equivalent to having the function somewhat balanced, i.e., 
$$ \abs{ \E_{T_i} \left[ f_i ( S^i_Q, T_i ) \right] } \leq \frac{1 - \beta}{1 + \beta} $$

As the next component, we would like $\Psi_i$ to be resilient against a short message. If a short message dependent only on $T_i$ is sent, we would like the corresponding distribution on $\Psi_i$ to have small discrepancy. This results in the following definition.
\begin{definition}[One-way Discrepancy]
Fix $S_Q^{-i}, T_{-i}$. We define one-way discrepancy of $\Psi_i$ under the distribution $\mu$ over $T_i$ as 
\begin{equation*}
    \odisc_\mu ( \Psi_i ):= \E_{S^i_Q} \abs{ \sum_{T_i}  \Psi_i ( S^i_Q, T_i ) \cdot \mu (T_i) } 
\end{equation*}    
with the parameter $C > 0$ as follows
\begin{equation*}
    \odisc_C ( \Psi_i ):= \max_{ M(T_i): |M| < C }  \E_{M} \left[ \odisc_{T_i | M} ( \Psi_i ) \right]
\end{equation*}   
\end{definition}

An intuitive interpretation of one-way discrepancy is to use the following communication game. Suppose Alice is given $S_Q^i$ and Bob is given $T_i$, distributed independently (i.e. a product distribution). Bob generates a message $M$ of length at most $C$ (note that the message is independent of $S_Q^i$). Alice then generates a guess $\{ \pm 1 \}$ that maximizes the advantage in \cite{larsen_super-logarithmic_2025} for the balanced version of $f$. \footnote{If the function is a balanced function, $\Psi_i$ is simply a $\pm 1$ matrix (or the usual communication matrix for $f_i$).} Since Bob's message is independent of Alice's input, we can simply take the expectation over $S_Q^i$ to measure the expected advantage. The intuition of a ``hard" function is $f$ which is resilient against a short message by Bob. 

The above model is not new. Such a communication model was originally studied in \cite{kremer_randomized_1995}, that is, the lower bound for one-way communication under a product distribution but in the small constant-error regime.\footnote{\cite{kremer_randomized_1995} made a connection between the VC-dimension $d$ of $f_i$ with fixed $S^i_Q$ in the small constant-error regime. However, their bound does not apply in the low-advantage regime.} Here, we want to consider the low-advantage (high-error) regime.

The aforementioned two conditions can be summarized in the following definition of a hard $f$: 
\begin{definition} \label{def:hard}
    We say $f$ is ``hard" if for each $i \in [ \ell / 3 , \ell]$, with probability $\geq g_i \geq \Omega(1)$ over $S_Q^{-i}$ and $T_{-i}$, $f_i$ and its corresponding $\Psi_i$ satisfy both
    \begin{itemize}
        \item (Balanced) $\Pr_{S^i_Q} \left[ \abs{ \E_{T_i} \left[ f_i ( S^i_Q, T_i )  \right] }    \geq \frac{1 - \beta}{1+ \beta} \right] \leq  n^{-2}$ with $\beta \geq 2^{- o ( \sqrt{ \log n })}$
        \item (Low Discrepancy) $\odisc_C \left( \Psi_i \right) \leq n^{-2} $ for $C \leq n_i / \poly \log (n)$
    \end{itemize}
\end{definition}
Intuitively, this corresponds to saying that $f$ is ``hard" if for any given epoch $i \in [ \ell / 3 , \ell]$, the corresponding $f_i$ is balanced and resistant against a short one-way message with good probability over the remaining coordinates $S_Q^{-i}$ and $T_{-i}$. 

Then for such a hard $f$, we prove the following super-logarithmic lower bound on $\max \{ t_u, t_q \}$.
\begin{theorem}[Informal] 
    For the explicit dynamic data structure \pref{prob:gmp} equipped with a hard $f$, the update time $t_u$ and query time $t_q$ with $m = \Omega(n^{0.99})$ must have
    \begin{equation*}
        t_q \geq \Omega \left( \frac{ \log^{3/2} n }{\log^2 (w t_u)} \right).
    \end{equation*}
\end{theorem}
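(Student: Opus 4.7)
The plan is to run a chronogram-plus-cell-sampling argument in the spirit of \cite{larsen_crossing_2020, larsen_super-logarithmic_2025}, but with the usual encoding/counting bottleneck replaced by the one-way communication game underlying $\odisc_C$. First I would fix a product distribution $\mu_S \times \mu_T$ (for concreteness, uniform) on $(S_Q, T)$, draw the public datasets $S_1, \ldots, S_m$ i.i.d.\ from $\mu_S$, take $Q$ uniform over $[m]$, and partition the $n/k$ update blocks of $T$ into $\ell$ geometrically increasing epochs $T_1, \ldots, T_\ell$ with $|T_i| = k n_i$ and $n_i \approx \gamma^i$, for $\gamma$ slightly larger than $w t_u \cdot \poly\log n$; the exact choice of $\ell$ and $\gamma$ will be pinned down in the final step. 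For each query I classify the probed cells by the epoch in which they were last written, producing sets $C_1, \ldots, C_\ell$.

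The next step is to isolate a single hard epoch. An averaging argument over $i \in [\ell/3, \ell]$ produces some $i^\star$ with $\E\,|C_{i^\star}| = O(t_q / \ell)$; combining this with the two bullets of Definition~\ref{def:hard}, I would condition simultaneously on the $\Omega(1)$-mass event in which $f_{i^\star}$ is balanced, $\odisc_C(\Psi_{i^\star}) \leq n^{-2}$ for every $C \leq n_{i^\star}/\poly\log n$, and $|C_{i^\star}|$ is close to its mean. Since the conditioning touches only $S_Q^{-i^\star}$ and $T_{-i^\star}$, the coordinates $S_Q^{i^\star}$ and $T_{i^\star}$ remain independently distributed from $\mu_S \times \mu_T$, and the data structure, viewed as a function of $T_{i^\star}$ alone, still outputs $f_{i^\star}(S_Q^{i^\star}, T_{i^\star})$ correctly with probability $\geq 1 - o(1)$.

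Now I would apply cell-sampling to epoch $i^\star$: independently retain each of the at most $n_{i^\star} t_u$ cells written during that epoch with probability $p$, together with its address. With probability at least $p^{O(t_q/\ell)}$ every cell of $C_{i^\star}$ survives, and the resulting payload occupies $O(p \cdot n_{i^\star} t_u \cdot w)$ bits. Since the memory state is a deterministic function of $T_{i^\star}$ once $T_{-i^\star}$ has been fixed, Bob, who holds only $T_{i^\star}$, can compute this sample unilaterally and transmit it as his message $M$ in the one-way game; Alice, who holds $S_Q^{i^\star}$, then simulates the query, using the sampled contents for any probe into a cell last written in epoch $\geq i^\star$ and abstaining if a probe misses. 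Feeding this protocol into the $\pm 1$ simulation theorem of \cite{larsen_super-logarithmic_2025} together with the balance hypothesis yields a guess whose expected correlation with $\Psi_{i^\star}$ is at least $\Omega(\beta \cdot p^{O(t_q/\ell)})$. Forcing $|M| \leq n_{i^\star}/\poly\log n$ bounds $p$ from above, while demanding that this correlation exceed $\odisc_C(\Psi_{i^\star}) \leq n^{-2}$ bounds $t_q / \ell$ from above; tuning $p, \ell, \gamma$, and absorbing the $\beta = 2^{-o(\sqrt{\log n})}$ slack, gives exactly $t_q \geq \Omega(\log^{3/2} n / \log^2(w t_u))$.

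The main obstacle I expect is the joint parameter-tuning in the last step: the $\sqrt{\log n}$ slack supplied by $\beta$ is precisely what promotes the vanilla chronogram bound $\Omega(\log n)$ into the $\log^{3/2}$ regime, and getting this exponent right requires carefully tracking how \cite{larsen_super-logarithmic_2025}'s simulation theorem (stated for $\pm 1$ matrices) transfers to the re-weighted $\Psi_{i^\star}$ through the balance condition. A secondary worry is checking that the $n^{-2}$ failure probabilities in Definition~\ref{def:hard} survive union bounds both over the choice of hard epoch and over the $\Omega(n^{0.99})$ possible queries $Q$, and that conditioning on ``good'' $(S_Q^{-i^\star}, T_{-i^\star})$ does not quietly leak information about $T_{i^\star}$ in a way that spoils the product structure on which $\odisc_C$ is defined.
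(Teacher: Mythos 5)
There is a genuine gap, and it sits exactly at the step the paper identifies as its main contribution. Your final step bounds the advantage of the simulated one-way protocol directly by $\odisc_C(\Psi_{i^\star}) \leq n^{-2}$, but the hardness condition in Definition~\ref{def:hard} only controls messages $M(T_{i^\star})$ that are functions of $T_{i^\star}$ alone. The message you construct (sampled cells written during epoch $i^\star$, plus the cache) is a function of the data structure's memory, which depends on the publicly known $\vec{S}$ as well as on the updates; since Bob knows $\vec{S}$, his message is correlated with Alice's input $S_Q^{i^\star}$ (he knows $S_Q^{i^\star}$ for every $Q$, just not which $Q$ Alice holds), so the product structure that $\odisc_C$ presupposes is destroyed and the discrepancy bound cannot be invoked as you do. This is precisely why the counting/encoding arguments of \cite{larsen_crossing_2020, larsen_super-logarithmic_2025} do not transfer to the Multiphase setting, and it is what the paper's Theorem~\ref{thm:main_combinatorial} is built to repair: one introduces a random sequence $\cP$ of $l$ queries and the process $Z$, proves via a direct-sum argument that the $C$-bit message reveals only $O((C+l)/m)$ bits about a random $S_{\rho_J}^{i}$ (Claim~\ref{cl:little_info_s}), proves that conditioning on $Z$ adds only $O(C/l)$ correlation between $S_{\rho_J}^{i}$ and $T_i$ (Claim~\ref{cl:low_correlation}), and only after these reductions—using $m = \Omega(n^{0.99})$ and a suitable choice of $l$—does the low-discrepancy hypothesis apply to what is now effectively a message about $T_{i^\star}$ alone. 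Your proposal contains no substitute for this reduction, so the contradiction in your last paragraph does not go through as written.

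Two secondary points. First, your stated advantage $\Omega(\beta \cdot p^{O(t_q/\ell)})$ is not the bound given by Theorem~\ref{thm:ly_simulation}; the actual bound contains the term $\sqrt{t_{tot}(t_q \log(1/p) + \log(1/\beta))}$ in the exponent, and it is this geometric-mean term (with $t_q \approx t_{tot}/\ell$ and $\ell \approx \log n/\log\log n$) that produces the $\log^{3/2} n$ threshold; your simplified form would suggest a different exponent, so the parameter tuning you defer cannot be carried out from the formula you wrote. Second, the naive cell-sampling-and-abstain protocol you describe is exactly the scheme the paper explains fails (Alice cannot tell whether a probed cell was written in epoch $i^\star$, so she cannot detect a miss); this is harmless only if you treat the simulation theorem of \cite{larsen_super-logarithmic_2025} strictly as a black box rather than re-deriving it by independent sampling. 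The first half of your plan (chronogram, choosing a good epoch, conditioning on the balanced and low-discrepancy events, invoking the simulation theorem) does match the paper's Section on the one-way simulation, but the impossibility half needs the information-theoretic machinery above.
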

Therefore, our work essentially reduces proving a super-logarithmic dynamic cell-probe lower bound to proving a lower bound against one-way communication over a product distribution (i.e., showing that the function $f$ is ``hard").

\subsubsection{Comparison with Previous Results}

Progress on dynamic cell-probe lower bounds has been slow but steady over the past three decades. The seminal work of Fredman and Saks~\cite{fredman_cell_1989} provided the first $\Omega ( \log n / \log \log n )$ dynamic cell-probe lower bound. It took another 15 years to remove the $\log \log n$ factor from the denominator. \Patrascu~and Demaine \cite{patrascu_lower_2004, patrascu_logarithmic_2006} gave an $\Omega ( \log n )$ bound for an explicit dynamic problem. Yet achieving a super-logarithmic lower bound (i.e., $\omega( \log n)$ time lower bound on update/query time) seemed elusive~\cite{thorup_mihai_2013}.

Larsen~\cite{larsen_cell_2012} gave an $\Omega ( ( \log n / \log \log n )^2 )$ bound for the 2-dimensional range sum problem, breaking the logarithmic barrier for dynamic data structure lower bounds. Larsen's result, however, has a caveat. The number of output bits per query is $\Theta ( \log n )$. Therefore, $\max \{ t_u, t_q \}$ {\em per output bit} is still $\tilde{\Omega} ( \log n )$. It took roughly an extra decade to finally provide a super-logarithmic lower bound for a Boolean dynamic data structure problem, established in \cite{larsen_crossing_2020}, which gave an $\tilde{\Omega} ( \log^{3/2} n )$ lower bound for the dynamic 2-dimensional range parity sum problem. \cite{larsen_super-logarithmic_2025} then further extended the technique to dynamic graph $s$-$t$ reachability (in a directed acyclic graph).

\paragraph{Multiphase Program}

On the other hand, the most promising avenue of attack in achieving a super-logarithmic lower bound (or even a polynomial bound) on $\max \{ t_u, t_q \}$ for the past decades has been the Multiphase Program \cite{patrascu_mihai_towards_2010, thorup_mihai_2013}, the holy grail in dynamic cell-probe lower bounds.

\Patrascu's original approach for the Multiphase Program was to provide a lower bound for the following communication game.
\begin{Problem}
    \begin{itemize}
    \item Alice is given all the inputs but $T \in \{0,1\}^n$. Bob is given all the inputs but $\vec{S}$ consisting of $S_1, \ldots, S_m \in \{0 ,1 \}^n$. Merlin is given all the inputs but $Q \in [m]$. 
    \item Merlin sends a message of length $t_u \cdot n \cdot w$ to Bob. Alice and Bob then proceed in a standard two-party communication protocol to output $f(S_Q,T)$ after communicating $t_q w$ bits. 
    \end{itemize}
    \caption{Multiphase Communication Game}
\end{Problem}
There are results directly attacking the above communication game (namely, \cite{chattopadhyay_little_2012, ko_adaptive_2020, dvorak_lower_2020, ko_lower_2025}). But we emphasize that attacking the communication game is a much harder problem than the underlying dynamic data structure problem~\cite{ko_adaptive_2020, ko_lower_2025}. For instance, the Multiphase Communication Game has connections to a very old circuit lower bound problem \cite{jukna_circuits_2010}. 

Furthermore, there is a known separation between the communication model and the dynamic data structure problem in question \cite{ko_lower_2025}. There exists $f$ (namely indexing) that is easy in the communication model, but hard under the dynamic cell-probe model. 

But even if we turn to a weaker dynamic cell-probe lower bound, only a logarithmic lower bound is known. When $f$ is Disjointness or Inner Product (mod 2), \cite{brody_adapt_2015, clifford_new_2015} showed $\max \{ t_u, t_q \} \geq \Omega \left( \log n \right)$.\footnote{From a technical perspective, this is under a weaker model, where the query algorithm has no knowledge of $\vec{S}$ and therefore must probe for it as well. The setting we consider is slightly stronger as the querier already knows $\vec{S}$.}

In summary, even though the Multiphase Problem is conjectured to be harder than problems considered in \cite{larsen_crossing_2020, larsen_super-logarithmic_2025}, no super-logarithmic lower bounds were known for the Multiphase Problem prior to our work, due to a technical challenge that we explain in \pref{sec:technical}. 

\subsection{Technical Contribution} \label{sec:technical}

\subsubsection{One way simulation theorem of \cite{larsen_crossing_2020}}

One technical ingredient of our work is the one-way simulation theorem from \cite{larsen_crossing_2020, larsen_super-logarithmic_2025} which pioneered ``a recipe" for super-logarithmic dynamic cell-probe lower bounds. In this section, we highlight the contributions in \cite{larsen_crossing_2020, larsen_super-logarithmic_2025} to provide context for our technical contribution.

Consider a fixed dynamic data structure problem $f : \cQ \times T \to \{ \pm 1 \}$, where $\cQ$ is the set of queries, and $T$ is the sequence of updates. Therefore if $T$ is given as the sequence of updates, when $Q \in \cQ$ is given as the query, the data structure must output $f ( Q, T )$.

The first building block for the one-way simulation theorem is the seminal chronogram technique of Fredman and Saks \cite{fredman_cell_1989}. A simple way to interpret chronogram techniques is as follows. As the dynamic data structure must be ready to handle queries at any point during the sequence of updates, in contrast to the static data structure, we can divide the update $T$ into epochs of length $n_1, \ldots, n_\ell$ such that $\sum_{i=1}^\ell n_i = n/k$, where each $n_i = \gamma n_{i-1}$ for some parameter $\gamma$. Therefore, they are geometrically decreasing. The main observations in the chronogram technique are as follows: (i) The average number of cells that are probed which were last updated in epoch $i$ is $t_{tot} / \ell$; (ii) The number of cells changed by all subsequent epochs is small, as the number of updates decreases geometrically over the epochs. 

Now we focus on a fixed epoch $i$ to give a lower bound on $t_{tot} / \ell$. 
Such a static data structure (for a fixed epoch $i$) has pre-initialized memory (updates performed in previous epochs) and a cache (updates performed in subsequent epochs). Observe that a data structure in epoch $i$, with $n_i$ updates, which would generate $t_u n_i$ many updated cells, has $\sum_{j > i } t_u n_j \leq o ( n_i ) $ cells in cache, and $\sum_{j < i } t_u n_j$ cells in pre-initialized memory. 

To prove a lower bound for such a static data structure, \cite{larsen_crossing_2020, larsen_super-logarithmic_2025} introduce a one-way communication game which can simulate a static data structure with pre-initialized memory and a cache, to then argue that no ``too-good-to-be-true" one-way protocol can exist.
\begin{Protocol}
    \begin{itemize}
        \item Bob is given all the updates $\{ T_i \}_{i=1}^\ell$, but not the desired query $Q \in \cQ$. 
        \item Alice is given all the updates, except the updates in epoch $i$ (i.e., $T_i$), and the desired query.
        \item Bob sends a one-way message $M$ of length $C$-bits to Alice. Then Alice announces $f ( Q, T )$. 
    \end{itemize}
    \caption{One-way Communication Simulation $G_f$}
\end{Protocol}
Alice, without Bob's message, can only generate the pre-initialized memory on her own. 
In the extreme case where $C \geq w ( t_u n_i + o ( n_i ) )$, Bob can then send over the entire cache (which is at most $o(n_i)$ cells) and the memory state generated in epoch $i$. Therefore, Alice will announce the correct $f ( Q, T )$ every time. In fact, just sending $T_i$ would allow Alice to recover $f ( Q, T )$ for all $Q \in \cQ$. The goal is then to analyze the correctness of the protocol when $C \ll n_i$.

As a measure of correctness of the one-way communication protocol, \cite{larsen_super-logarithmic_2025} introduces the following quantity called the advantage of the protocol over the product distribution on the query set and update set $\cD = \cD_{\cQ} \times \cD_{T}$. 
\begin{equation*}
    \adv ( G_{f}, \cD_{\cQ}, \cD_{T}, C ) := \max_{M} \E_{Q, M} \left[ \abs{ \E_{T \sim \cD_{T} | M} \left[ \overline{f} ( Q, T ) | M \right] }  \right]
\end{equation*}
where $\overline{f} : \cQ \times T \to [ -1, +1 ]$ is the normalized version of $f$ so that for every $Q \in \cQ$, $\E_{T \sim \cD_{T}} [\overline{f} ( Q, T ) ] = 0$. Note that the normalization is only necessary when $f$ is not a balanced function for every $Q$. If $f$ is completely balanced, i.e., the setting for \cite{larsen_crossing_2020}, no normalization is necessary, and the above quantity exactly captures the small advantage over a fair random coin toss.

The remaining high-level argument of both \cite{larsen_crossing_2020, larsen_super-logarithmic_2025} is as follows: (i) show that a ``too-good-to-be-true" data structure implies a ``too-good-to-be-true" one-way communication protocol with a short one-way message (small $C$) and good advantage (i.e., the simulation theorem); (ii) show that such a ``too-good-to-be-true" one-way communication protocol cannot exist. 

As the main contribution in \cite{larsen_crossing_2020, larsen_super-logarithmic_2025} is the one-way simulation theorem (i.e., step (i)), let us focus on obtaining (i). To obtain (i), we would like to generate a short one-way message $M$ from the static data structure. The key technique to reduce the length of the one-way message $M$ is the cell-sampling technique \cite{siegel_universal_2004, panigrahy_lower_2010}. A naive attempt is to sample each cell updated in epoch $i$ with probability $1/(t_u w)^{\Theta(1)}$ independently at random. Bob then sends the sampled cells to Alice as the single-shot message, along with the cache. If all cells required to be read have been sampled, then Alice can answer correctly. If not, she just randomly guesses the answer. Since there is a non-negligible probability of sampling all required cells, this would result in some small advantage. However, the technical challenge is that Alice cannot distinguish between the above two cases just from the sampled cells alone. Alice cannot distinguish whether a cell she received from Bob's message has been touched during epoch $i$ or not.

The main technical centerpiece to resolve the above issue is the so-called ``Peak-to-Average" lemma \cite{larsen_crossing_2020}, which states the existence of a small subset of cells such that knowing their contents gives a nontrivial advantage. Using the lemma (which can be interpreted as a clever choice of cells for cell-sampling) and the chronogram technique \cite{fredman_cell_1989}, any ``too-good-to-be-true" dynamic data structure yields a static data structure (with pre-initialized memory and a cache) that achieves a small advantage over random guessing. 

The key insights in \cite{larsen_cell_2012, larsen_crossing_2020, larsen_super-logarithmic_2025} are to carefully subsample cells updated in epoch $i$ such that knowing the contents of such cells gives a non-trivial advantage over random guessing. The cache and sub-sampled cells constitute a short one-way message $M$ that achieves a non-trivial advantage over random guessing.

All these technical components formally culminate in the following one-way simulation theorem for a static data structure with pre-initialized memory and a cache.
\begin{theorem}[\cite{larsen_crossing_2020, larsen_super-logarithmic_2025}] \label{thm:ly_simulation}
    Let $f : \cQ \times T \to [-1, +1]$ be a data structure problem with weights $[-1,+1]$. $\cD_{\cQ} \times \cD_{T}$ be a (product) distribution over the queries and inputs such that $\E_{T \sim \cD_{T}} [ f( Q, T ) ]= 0$ for any $Q \in \cQ$, and $\abs{f(Q,T)} \geq \beta$ for some $\beta>0$. If there exists a data structure with pre-initialized memory and a cache for $f$ with at most $S$ updated cells, $S_{cac}$ cells in cache, expected query time $t_{tot}$, and expected query time into updated cells $t_q$, then for any $p \in (0, 1)$,
    \begin{align*}
        & \adv ( G_{\cP}, \cD_{\cQ}, \cD_{\cI}, ( 8 p S + S_{cac} ) w ) \\
        & \geq \exp \left( - O \left ( \log ( 1/p) ( t_q + \sqrt{ t_{tot} ( t_q \log (1/p) + \log (1 / \beta) )}  + \log (1 / \beta) \right)  \right) - \exp \left( - \Omega( p S ) \right)
    \end{align*}
\end{theorem}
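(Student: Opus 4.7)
The plan is to follow the cell-sampling blueprint pioneered in \cite{larsen_crossing_2020, larsen_super-logarithmic_2025}. Bob has the full input $T$ and hence knows both the set $U$ of the $S$ cells updated in epoch $i$ and their final contents. He samples each cell of $U$ independently with probability $p$ to obtain a subset $R$. By Chernoff, the bad event $|R| > 8pS$ has probability at most $\exp(-\Omega(pS))$, which is exactly the subtractive error in the theorem. On the good event Bob sends $M$ consisting of the addresses and contents of the cells in $R$ together with all $S_{cac}$ cache cells; this fits inside $(8pS + S_{cac})w$ bits. Alice then simulates the query algorithm on $(Q, T)$, using the pre-initialized memory she can recompute together with the cache and the sample delivered in $M$ as the contents of the cells she recognizes. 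Let $\cE$ be the ``lucky sampling'' event that every probe the algorithm makes into a cell in $U$ lies in $R$; conditional on $\cE$, her simulation is faithful and she recovers $f(Q, T)$.

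The central difficulty, as explicitly flagged in the excerpt, is that Alice cannot test $\cE$ from her view: a probe into a cell she does not already know could be either a genuine memory cell she reconstructs correctly or an updated cell she silently misreads. Hence the naive ``output $f$ if I think $\cE$ holds, else output $0$'' is not a valid estimator. The technical centerpiece that resolves this is the Peak-to-Average lemma of \cite{larsen_crossing_2020}, which I would invoke as a structural black box. It extracts from Alice's simulation trace a small, algorithmically identifiable subset of ``peaky'' cells whose contents already induce a non-trivial bias towards $f(Q,T)$, even though $\cE$ itself is undetectable. This lets Alice output a $[-1, +1]$-valued estimator $\hat{y}$ with $\E[\hat{y}\cdot f(Q,T)]$ at least the advertised advantage. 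In the bound, the factor $\exp(-O(\log(1/p)\cdot t_q))$ is the $p^{\Theta(t_q)}$ probability that the $t_q$ average updated-cell probes all land in $R$; the $\log(1/\beta)$ term is the cost of normalizing the $[-1,+1]$-valued target so that the conditional bias of the estimator survives; and the $\sqrt{t_{tot}(\cdots)}$ term comes from a Cauchy--Schwarz step that converts the $t_{tot}$ possibly-confusable probes into a second-moment (rather than worst-case) loss.

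The main obstacle is precisely the Peak-to-Average step: one must show that among Alice's probes there is a small subset which is simultaneously (a) detectable from her view, (b) correlated with the true value of $f$, and (c) cheap enough to name within the $(8pS + S_{cac})w$ message-length budget. The original proofs handle this by an entropy-style averaging argument that locates cells whose local bias over the sample randomness is large, followed by Cauchy--Schwarz, which is precisely the source of the square root on $t_{tot}$ in the exponent as well as of the $\log(1/p)$ overhead attached to every term (roughly one $\log(1/p)$ per addressed sampled cell). Once this lemma is in place, the remainder is bookkeeping: the Chernoff concentration for $|R|$ yields the subtractive $-\exp(-\Omega(pS))$ correction, and substituting the given parameter budgets into the Peak-to-Average estimator's guarantee yields the multiplicative correlation bound in the stated form.
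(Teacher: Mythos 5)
You are attempting to prove a statement that the paper itself does not prove: Theorem \ref{thm:ly_simulation} is imported verbatim from \cite{larsen_crossing_2020, larsen_super-logarithmic_2025} and used as a black box, so there is no in-paper proof to compare against. Your sketch reproduces, at roughly the same level of detail, the informal description the paper already gives in its technical-overview section (cell sampling at rate $p$, Chernoff giving the $-\exp(-\Omega(pS))$ term, the cache plus sampled cells forming the $(8pS+S_{cac})w$-bit message, and the observation that Alice cannot detect whether every probed updated cell was sampled).

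As a proof, however, there is a genuine gap: the entire quantitative content of the theorem is delegated to a black-boxed ``Peak-to-Average'' step. The difficulty you correctly flag --- that Alice cannot certify the lucky event $\cE$ from her view, so she cannot simply output $f$ when the simulation looks consistent --- is exactly what the cited works spend most of their effort on, and invoking the lemma by name does not produce the stated exponent. In particular, you do not derive (i) how the estimator Alice actually outputs is constructed so that its correlation with $f(Q,T)$ is lower-bounded without conditioning on an event she cannot observe, (ii) where the specific combination $\sqrt{t_{tot}\,(t_q\log(1/p)+\log(1/\beta))}$ comes from (your attribution of the square root to ``a Cauchy--Schwarz step'' and of $\log(1/\beta)$ to ``normalizing the target'' is a plausible gloss, not an argument --- note that in the theorem $\beta$ enters through the hypothesis $|f(Q,T)|\ge\beta$, and handling non-balanced, $[-1,+1]$-weighted problems is precisely the extension that \cite{larsen_super-logarithmic_2025} had to add on top of \cite{larsen_crossing_2020}), and (iii) why the message-length accounting survives the fact that cell \emph{addresses}, not just contents, must be communicated or otherwise made identifiable to Alice within the $(8pS+S_{cac})w$ budget. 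Without these, the proposal is a correct roadmap of the known argument rather than a proof; to make it self-contained you would need to state and prove the Peak-to-Average lemma in the weighted ($\beta$-bounded) setting and carry out the exponent bookkeeping explicitly.
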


\subsubsection{Our technical centerpiece} 

Our main point of departure from \cite{larsen_crossing_2020, larsen_super-logarithmic_2025} is (ii): showing that no such one-way communication protocol can exist.

The one-way communication game $G_f$ used in \cite{larsen_crossing_2020, larsen_super-logarithmic_2025} is based on an {\em explicit} static data structure lower bound on the Butterfly Graph \cite{patrascu_unifying_2011}. While this was comparatively straightforward for \cite{larsen_crossing_2020}, the primary technical challenge in \cite{larsen_super-logarithmic_2025} lies in proving a new static lower bound on the Butterfly Graph, as this requires establishing a strong {\em explicit} static data structure lower bound even for a small advantage over random guessing. This approach follows from a standard encoding-decoding (and therefore a counting) argument, inspired by the static data structure lower bound from \cite{patrascu_unifying_2011}.

A key distinction of our method is how it overcomes the limitations of the encoding-decoding (or counting) arguments used in prior work \cite{larsen_crossing_2020, larsen_super-logarithmic_2025} to establish one-way communication lower bounds. This approach is effective when the query distribution, $\cM$, is uniform over a small support, as a simple counting argument suffices (see, e.g., Section 5.2 of \cite{larsen_super-logarithmic_2025}). However, the encoding-decoding strategy fails when the query set is sub-sampled from a distribution with a much larger support. In this scenario, the one-way message can depend arbitrarily on the specific queries chosen, adding a layer of complexity that makes a simple counting argument intractable. This challenge was the primary bottleneck to applying the simulation lemma from \cite{larsen_crossing_2020} to the Multiphase Problem and necessitated our different approach.

More specifically, we need a lower bound for the following modified one-way communication game where both Alice and Bob have access to $\vec{S}$.
\begin{Protocol}
    \begin{itemize}
        \item Bob is given $\vec{S} = S_1, \ldots, S_m$ and all the updates $\{ T_j \}_{j=1}^\ell$, but not the desired query $Q \in \cQ = [m]$.
        \item Alice is given $\vec{S} = S_1, \ldots, S_m$, all updates except those in epoch $i$ (i.e., $T_i$), and the desired query $Q$.
        \item Bob sends a one-way message $M$ of length $C$ bits to Alice. Then Alice announces $f(S_Q, T)$.
    \end{itemize}
    \caption{Modified One-way Communication Game \label{prot:moc}}
\end{Protocol}
The main technical challenge for the encoding-decoding argument is the dependence on $\vec{S}$. Alice and Bob both know $\vec{S}$, and $M$ can be interpreted differently depending on the specific $\vec{S}$ that is chosen. Thus, we must account for all possible $\vec{S}$, which essentially invalidates the counting argument. The issue is compounded when proving the lower bound in the low-advantage regime, i.e., where the advantage is $n^{-\Omega(1)}$.

Our technical contribution is the use of information-theoretic tools from \cite{ko_adaptive_2020, ko_lower_2025} to essentially ``remove" the dependency on $\vec{S}$ when $S_i$ and $T$'s are generated under a product distribution. Our work reduces the problem to a lower bound for the following one-way communication game, which has been studied in various contexts and is far more tractable.
\begin{Protocol}
    \begin{itemize}
        \item Bob is given all the updates $\{ T_j \}_{j=1}^{\ell}$.
        \item Alice is given $\varsigma$, chosen from some distribution $\cD_{\cQ}$, and all updates except those in epoch $i$ (i.e., $T_i$).
        \item Bob sends a one-way message $M$ of length $C = n_i / \poly \log (n)$ bits to Alice. Then Alice announces $f(\varsigma, T)$.
    \end{itemize}
    \caption{Reduced One-way Communication Game \label{prot:reduced}}
\end{Protocol}
We remark that our definition of ``hard" (\pref{def:hard}) exactly captures the lower bound for \pref{prot:reduced}. Hard functions do not have a short protocol with good advantage for \pref{prot:reduced}.
In \pref{sec:multiphase_application} and \pref{sec:0xor}, we list some examples of these hard functions including the usual Inner Product mod 2. We remark that we need a lower bound of the form $C \geq n / \poly \log ( 1 / \eps )$ for an $\eps$-advantage (under a product distribution). We cannot expect such a lower bound for well-studied Indexing or Disjointness, as a simple upper bound with $n / \poly ( 1 / \eps )$ exists. We introduce a new method of lower bounding \pref{prot:reduced} against small advantage under a product distribution using average min-entropy.

\section{Preliminary}

\subsection{Information Theory}

In this section, we provide the necessary background on information theory and information complexity that are used 
in this paper. For further reference, we refer the reader to \cite{cover_elements_2006}.

\begin{definition}[Entropy]
	The entropy of a random variable $X$ is defined as 
	\begin{equation*}
	H(X) := \sum_{x} \Pr[X=x] \log \frac{1}{\Pr[X=x]}.
	\end{equation*}
	Similarly, the conditional entropy is defined as
	\begin{equation*}
	H(X|Y) := \E_{Y} \left[ \sum_{x} \Pr[X = x | Y = y] \log \frac{1}{\Pr[ X = x | Y = y]} \right].
	\end{equation*}
\end{definition}

\begin{fact}[Conditioning Decreases Entropy] \label{fact:conditioningentropy}
For any random variable $X$ and $Y$
\begin{equation*}
    H(X) \geq H(X|Y)
\end{equation*}
\end{fact}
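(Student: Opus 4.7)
The plan is to prove this by expressing the difference $H(X) - H(X\mid Y)$ as a non-negative quantity, namely the mutual information $I(X;Y)$, and then invoking the non-negativity of KL-divergence (equivalently, a single application of Jensen's inequality to the concave logarithm). This is the standard textbook route and avoids any heavy machinery, which is appropriate since the statement is being recorded as a background fact rather than a substantive new result.

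Concretely, I would first expand both $H(X)$ and $H(X\mid Y)$ directly from the definitions given in the preceding Definition. Writing $p(x)=\Pr[X=x]$, $p(y)=\Pr[Y=y]$ and $p(x,y)=\Pr[X=x,Y=y]$, and using $p(x)=\sum_y p(x,y)$ together with $p(x\mid y)=p(x,y)/p(y)$, one gets
\begin{equation*}
H(X) - H(X\mid Y) \;=\; \sum_{x,y} p(x,y)\,\log\frac{p(x,y)}{p(x)\,p(y)}.
\end{equation*}
The right-hand side is $\KL{p_{X,Y}}{p_X\otimes p_Y}$, i.e.\ the KL-divergence between the joint distribution and the product of the marginals.

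The key step is then to show this quantity is non-negative. I would apply Jensen's inequality to the convex function $-\log$: letting the expectation be taken under $p(x,y)$,
\begin{equation*}
\sum_{x,y} p(x,y)\,\log\frac{p(x,y)}{p(x)p(y)} \;=\; -\E_{(x,y)\sim p}\!\left[\log\frac{p(x)p(y)}{p(x,y)}\right] \;\geq\; -\log \E_{(x,y)\sim p}\!\left[\frac{p(x)p(y)}{p(x,y)}\right].
\end{equation*}
Computing the inner expectation yields $\sum_{x,y} p(x)p(y) \leq 1$ (with equality whenever the sum is restricted to the support of $p(x,y)$), so the whole expression is at least $-\log 1 = 0$. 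This gives $H(X) \geq H(X\mid Y)$ as desired.

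There is essentially no obstacle here; the only minor care is to handle terms where $p(x,y)=0$ via the usual convention $0\log 0 = 0$, and to note that restricting the Jensen step to the support of $p_{X,Y}$ is what makes the bound $\sum p(x)p(y) \leq 1$ tight enough to conclude non-negativity rather than merely boundedness. Equality holds iff $p(x,y)=p(x)p(y)$ almost surely, i.e.\ iff $X$ and $Y$ are independent, though the fact as stated does not require characterizing the equality case.
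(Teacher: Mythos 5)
Your proof is correct: writing $H(X)-H(X\mid Y)$ as the KL-divergence between the joint distribution and the product of marginals and applying Jensen's inequality (with the care you note about restricting to the support and the $0\log 0=0$ convention) is the standard and complete argument. The paper itself states this as a background fact without proof, deferring to the information-theory literature, so there is no authorial proof to compare against; your derivation is exactly the textbook route one would cite.
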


\noindent With entropy defined, we can also quantify the correlation between two random variables, or how much information one random variable conveys about the other.

\begin{definition}[Mutual Information]
	Mutual information between $X$ and $Y$ (conditioned on $Z$) is defined as 
	\begin{equation*}
	I( X; Y | Z) := H(X|Z) - H(X|YZ).
	\end{equation*}
\end{definition}

\noindent Similarly, we can also define how much one distribution conveys information about the other distribution.

\begin{definition}[KL-Divergence]
	KL-Divergence between two distributions $\mu$ and $\nu$ is defined as
	\begin{equation*}
	\KL{\mu}{\nu} := \sum_{x} \mu(x) \log \frac{\mu(x)}{\nu(x)}.
	\end{equation*}
\end{definition}

\noindent To bound mutual information, it suffices to bound KL-divergence, due to the following fact.

\begin{fact}[KL-Divergence and Mutual Information] \label{fact:KL_Mutual}
	The following equality between mutual information and KL-Divergence holds 
	\begin{equation*}
	I(A;B|C) = \E_{B,C} \left[ \KL{ A|_{B=b, C=c} }{ A|_{C=c} } \right].
	\end{equation*}
\end{fact}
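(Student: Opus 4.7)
The plan is to verify the identity by direct algebraic manipulation from the definitions already stated in this section: conditional entropy, conditional mutual information $I(A;B\mid C) := H(A\mid C) - H(A\mid BC)$, and KL-divergence. No auxiliary machinery is required; the argument just unwinds the definitions and invokes the law of total probability once.

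First, I would expand the KL term inside the expectation,
\[
\KL{A|_{B=b,C=c}}{A|_{C=c}} = \sum_a \Pr[A{=}a \mid B{=}b, C{=}c] \cdot \log \frac{\Pr[A{=}a \mid B{=}b, C{=}c]}{\Pr[A{=}a \mid C{=}c]},
\]
and split the $\log$ of the ratio into a ``numerator'' piece and a ``denominator'' piece. Summing the numerator contribution $\log \Pr[A{=}a \mid B{=}b, C{=}c]$ against $\Pr[A{=}a \mid B{=}b, C{=}c]$ over $a$ produces $-H(A\mid B{=}b, C{=}c)$ by the definition of entropy. Taking expectation over $(B,C)$ then gives $-\E_{B,C}[H(A\mid B{=}b, C{=}c)] = -H(A\mid BC)$ by the definition of conditional entropy stated in the excerpt.

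For the denominator piece, I would write $\Pr[B{=}b, C{=}c] = \Pr[C{=}c] \cdot \Pr[B{=}b \mid C{=}c]$ and perform the $b$-marginalization \emph{before} the $a$-sum. The key identity is
\[
\sum_b \Pr[B{=}b \mid C{=}c] \cdot \Pr[A{=}a \mid B{=}b, C{=}c] = \Pr[A{=}a \mid C{=}c],
\]
which is simply the law of total probability conditioned on $C{=}c$. This collapses the $b$ dependence, leaving
\[
-\E_C\left[\sum_a \Pr[A{=}a \mid C{=}c]\log \Pr[A{=}a \mid C{=}c]\right] = H(A\mid C).
\]
Combining the two contributions yields $\E_{B,C}\bigl[\KL{A|_{B=b,C=c}}{A|_{C=c}}\bigr] = H(A\mid C) - H(A\mid BC) = I(A;B \mid C)$, as claimed.

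There is no genuine obstacle here, since this is a textbook identity. The only step requiring care is confirming that the $b$-marginalization produces exactly the reference distribution $A|_{C=c}$ appearing inside the $\log$, so that the denominator piece collapses cleanly to $H(A\mid C)$ rather than some spurious average. Once this alignment is verified, the entire derivation is a three-line chain of equalities, and no inequalities (such as Jensen) or approximations are ever invoked.
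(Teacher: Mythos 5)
Your proof is correct: the split of the logarithm, the identification of the first piece with $-H(A\mid BC)$, and the collapse of the second piece via $\sum_b \Pr[B{=}b\mid C{=}c]\Pr[A{=}a\mid B{=}b,C{=}c]=\Pr[A{=}a\mid C{=}c]$ to yield $H(A\mid C)$ all check out. The paper states this as a background fact without proof (deferring to a standard information-theory reference), and your derivation is exactly the standard one it implicitly relies on, so there is nothing to reconcile.
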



\begin{fact}[Pinsker's Inequality] \label{fact:pinsker}
For any two distributions $P$ and $Q$, 
\begin{equation*}
\norm{ P - Q }_{TV} = \frac{1}{2} \norm{ P - Q }_1 \leq \sqrt{ \frac{1}{2 \log e} D( P || Q) }	
\end{equation*}
\end{fact}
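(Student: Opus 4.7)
The plan is to prove Pinsker's inequality by reducing the general case to the binary case via a data-processing argument, and then establishing the scalar (Bernoulli) version through a short calculus argument.

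First, I would exploit the standard characterization of total variation: for $A^* := \{x : P(x) \geq Q(x)\}$, one has $\|P-Q\|_{TV} = P(A^*) - Q(A^*)$, and the identity $\|P-Q\|_{TV} = \tfrac{1}{2}\|P-Q\|_{1}$ follows immediately by splitting the $\ell_1$ sum over $A^*$ and its complement. Setting $p := P(A^*)$ and $q := Q(A^*)$ therefore reduces the left-hand side to $|p-q|$. For the right-hand side, the two-point distributions $(p, 1-p)$ and $(q, 1-q)$ are push-forwards of $P$ and $Q$ under $\mathbf{1}_{A^*}$, so by the log-sum inequality (equivalently, the data-processing inequality for KL-divergence), the binary divergence $d(p\|q) := p\log(p/q) + (1-p)\log((1-p)/(1-q))$ satisfies $d(p\|q) \leq D(P\|Q)$. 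It thus suffices to establish the scalar bound $d(p\|q) \geq 2(\log e)(p-q)^2$.

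For the scalar step, I would fix $p \in [0,1]$ and define $h(q) := d(p\|q) - 2(\log e)(p-q)^2$. A direct differentiation yields
\[
h'(q) \;=\; \log(e)\,(q-p)\left[\frac{1}{q(1-q)} - 4\right],
\]
since $\tfrac{\partial}{\partial q}d(p\|q) = \log(e)\cdot\tfrac{q-p}{q(1-q)}$ and $\tfrac{d}{dq}\bigl[2(\log e)(p-q)^{2}\bigr] = 4(\log e)(q-p)$. The key point is that $q(1-q) \leq \tfrac{1}{4}$ by AM-GM, so the bracketed quantity is non-negative on $(0,1)$. Consequently $h'(q)$ has the same sign as $q-p$, which means $h$ attains its unique minimum at $q=p$. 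Since $h(p)=0$, this gives $h(q) \geq 0$ everywhere, i.e.\ the desired scalar Pinsker inequality.

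Chaining the two inequalities gives $(p-q)^2 \leq d(p\|q)/(2\log e) \leq D(P\|Q)/(2\log e)$, and taking square roots produces the stated bound. The only place where any subtlety arises is the AM-GM step bounding $q(1-q)$: without it, the bracket in $h'(q)$ could change sign away from $p$ and the monotonicity argument would break down. Everything else amounts to standard manipulations of entropy quantities as recalled in the preliminary section.
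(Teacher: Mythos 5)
Your proof is correct: the reduction to the binary case via the partition $\{A^*, (A^*)^c\}$ and the log-sum (data-processing) inequality, followed by the calculus verification that $d(p\|q) - 2(\log e)(p-q)^2$ is minimized at $q=p$ (using $q(1-q)\le 1/4$), is the standard textbook derivation of Pinsker's inequality, and the $\log e$ factor correctly accounts for the base-2 logarithm in the divergence. The paper itself states this as a background Fact without proof, citing the information-theory literature, so there is no internal argument to compare against; your write-up simply supplies the classical proof (modulo the trivial boundary cases $q\in\{0,1\}$, where the divergence is infinite and the bound is vacuous).
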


\noindent We also make use of the following facts on Mutual Information throughout the paper.

\begin{fact}[Chain Rule] \label{fact:chainrule}
For any random variable $A,B,C$ and $D$  
\begin{equation*}
    I(AD;B|C) = I(D;B|C) + I(A;B|CD).
\end{equation*}
\end{fact}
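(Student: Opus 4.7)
The plan is to derive this identity directly from the entropy-difference formula $I(X;Y|Z) = H(X|Z) - H(X|YZ)$ that follows from the definition of mutual information given earlier, combined with the standard chain rule for conditional entropy $H(XY|Z) = H(X|Z) + H(Y|XZ)$. The latter is a one-line consequence of the definition of (conditional) entropy, obtained by factoring $\Pr[X=x, Y=y \mid Z=z] = \Pr[X=x\mid Z=z]\Pr[Y=y\mid X=x,Z=z]$ and substituting into the entropy sum; I would either invoke this as a standard prerequisite or verify it in passing.

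First I would expand the left-hand side using the entropy formula to get $I(AD;B\mid C) = H(AD\mid C) - H(AD\mid BC)$. Next, I would apply the chain rule for conditional entropy to each joint term separately, obtaining $H(AD\mid C) = H(D\mid C) + H(A\mid DC)$ and $H(AD\mid BC) = H(D\mid BC) + H(A\mid DBC)$. Substituting and regrouping yields
\begin{equation*}
I(AD;B\mid C) = \bigl[H(D\mid C) - H(D\mid BC)\bigr] + \bigl[H(A\mid CD) - H(A\mid BCD)\bigr],
\end{equation*}
and each bracketed expression is, by definition, a conditional mutual information: the first is $I(D;B\mid C)$ and the second is $I(A;B\mid CD)$, completing the proof.

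As a sanity check and alternative route, one could instead start from the KL-divergence representation in Fact~\ref{fact:KL_Mutual}, writing $I(AD;B\mid C) = \E_{B,C}\bigl[\KL{(AD)\mid_{B,C}}{(AD)\mid_C}\bigr]$, then factoring the joint law of $(A,D)$ as $\Pr[D\mid \cdot]\Pr[A\mid D,\cdot]$ and using additivity of KL-divergence under such factorizations to split the expression into the two desired terms. Both routes give the same result, and I would favor the entropy route because the conditional entropy chain rule is the more elementary ingredient.

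There is no real obstacle here: the statement is a basic algebraic identity of information theory. The only subtlety worth flagging is bookkeeping of the conditioning sets — one must apply the entropy chain rule so that $H(A\mid DC)$ (not $H(A\mid C)$) appears after expanding $H(AD\mid C)$, which is exactly what makes the final bracket match $I(A;B\mid CD)$ rather than $I(A;B\mid C)$. Once the conditioning is tracked carefully, the identity falls out in a few lines.
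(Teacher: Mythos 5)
Your derivation is correct: expanding $I(AD;B\mid C)=H(AD\mid C)-H(AD\mid BC)$ and applying the conditional-entropy chain rule to both joint terms gives exactly the claimed identity, and the conditioning sets are tracked properly so that the second bracket is $I(A;B\mid CD)$. The paper states this as a standard fact (deferring to \cite{cover_elements_2006}) and gives no proof of its own, so there is nothing to compare against; your argument is the canonical textbook proof and would serve as a complete justification.
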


\begin{fact} \label{fact:chainrule1}
For any random variable $A,B,C$ and $D$, if $I(B;D|C) = 0$
\begin{equation*}
    I(A;B|C) \leq I(A;B|CD).
\end{equation*}
\end{fact}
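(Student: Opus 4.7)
The plan is to derive this inequality from two applications of the chain rule (Fact~\ref{fact:chainrule}) applied to the joint mutual information $I(AD; B \mid C)$, expanded in two different orders, and then invoke the non-negativity of mutual information.

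First I would expand $I(AD; B \mid C)$ by pulling $D$ out first:
\begin{equation*}
    I(AD; B \mid C) = I(D; B \mid C) + I(A; B \mid CD).
\end{equation*}
Next I would expand the same quantity by pulling $A$ out first:
\begin{equation*}
    I(AD; B \mid C) = I(A; B \mid C) + I(D; B \mid CA).
\end{equation*}
Equating the two right-hand sides gives
\begin{equation*}
    I(A; B \mid C) + I(D; B \mid CA) = I(D; B \mid C) + I(A; B \mid CD).
\end{equation*}

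Now I would apply the hypothesis $I(B; D \mid C) = 0$ (noting that mutual information is symmetric, so $I(D; B \mid C) = I(B; D \mid C) = 0$) to drop the first term on the right, yielding
\begin{equation*}
    I(A; B \mid C) + I(D; B \mid CA) = I(A; B \mid CD).
\end{equation*}
Finally, since mutual information is always non-negative, $I(D; B \mid CA) \geq 0$, and rearranging gives the desired inequality $I(A; B \mid C) \leq I(A; B \mid CD)$.

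This is a routine manipulation, so there is no real obstacle; the only subtlety worth flagging is that the hypothesis is stated as $I(B; D \mid C) = 0$ rather than $I(D; B \mid C) = 0$, so one should silently use symmetry of mutual information. It is also worth noting that the converse direction (conditioning \emph{decreasing} information) generally fails without the independence hypothesis, and one can see from the identity above exactly where the hypothesis is used: it precisely cancels the term that would otherwise flip the direction of the inequality.
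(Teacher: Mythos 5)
Your proof is correct and is essentially the same argument as the paper's: the paper's chain of relations $I(A;B|C) \leq I(AD;B|C) = I(B;D|C) + I(A;B|CD) = I(A;B|CD)$ is exactly your two expansions of $I(AD;B|C)$ combined with non-negativity and the hypothesis, just written more compactly. No gap.
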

\begin{proof} By the chain rule and non-negativity of mutual information, 
\begin{align*}
I(A;B|C) \leq I(AD;B|C) = I(B;D|C) + I(A;B|CD) = I(A;B|CD).
\end{align*}
\end{proof}

\begin{fact}\label{fact:chainrule2}
For any random variable $A,B,C$ and $D$, if $I(B;D|AC) = 0$
\begin{equation*}
    I(A;B|C) \geq I(A;B|CD).
\end{equation*}
\end{fact}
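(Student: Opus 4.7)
The plan is to model the proof directly on the three-line derivation of \pref{fact:chainrule1} that appears just above the statement, swapping the roles of the two chain-rule expansions. Specifically, I would apply the chain rule (\pref{fact:chainrule}) to the joint mutual information $I(AD;B\mid C)$ in two different orderings, use the hypothesis $I(B;D\mid AC)=0$ to zero out one of the resulting terms, and then conclude from non-negativity of the remaining ``cross term.'' This gives an equality-plus-drop-a-nonnegative-term argument that is essentially dual to the one used for \pref{fact:chainrule1}.

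Concretely, absorbing $A$ first and absorbing $D$ first into $I(AD;B\mid C)$ yields the two identities $I(AD;B\mid C) = I(A;B\mid C) + I(D;B\mid AC)$ and $I(AD;B\mid C) = I(D;B\mid C) + I(A;B\mid CD)$, so
\[
I(A;B\mid C) + I(D;B\mid AC) \;=\; I(D;B\mid C) + I(A;B\mid CD).
\]
By hypothesis and symmetry of mutual information, $I(D;B\mid AC) = I(B;D\mid AC) = 0$, so the left-hand side collapses to $I(A;B\mid C)$. Rearranging and discarding the non-negative term $I(D;B\mid C) \geq 0$ gives the desired inequality $I(A;B\mid C) \geq I(A;B\mid CD)$.

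I do not expect any real technical obstacle here; the whole argument is a single application of the chain rule combined with non-negativity. The only care point is keeping straight which of the two conditional-mutual-information ``cross terms'' is zeroed out by the hypothesis: it is the one conditioned on $AC$ rather than on $C$, which is precisely why the direction of the inequality flips relative to \pref{fact:chainrule1}, where the hypothesis $I(B;D\mid C)=0$ zeros out the \emph{other} cross term.
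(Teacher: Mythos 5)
Your proposal is correct and matches the paper's argument: the paper's one-line proof $I(A;B|CD) \leq I(AD;B|C) = I(A;B|C) + I(B;D|AC) = I(A;B|C)$ is exactly your two chain-rule expansions of $I(AD;B|C)$, with the inequality coming from dropping $I(D;B|C) \geq 0$ and the hypothesis killing the cross term conditioned on $AC$. Your version just makes the two expansions explicit before equating them.
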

\begin{proof} By the chain rule and non-negativity of mutual information,  
\begin{align*}
I(A;B|CD) \leq I(AD;B|C) = I(A;B|C) + I(B;D|AC) = I(A;B|C).
\end{align*}
\end{proof}

\section{Proof of Main Theorem}

Recall that the class of explicit dynamic problems that we would like to give a lower bound for is the following.
\begin{Problem}
    \begin{itemize}
    \item $S_1, \ldots, S_m \in \{ \{0 , 1 \}^k \}^{n/k}$ are given, each $S_Q$ divided into $n/k$ blocks $S_Q [1], \ldots, S_Q [n/k]$, each of size $k$.
    \item $T = ( T [1], \ldots, T [n/k] ) \in \{ \{0 , 1 \}^k \}^{n/k} $ is given as a sequence of updates, using $t_u$ time per update.
    \item For any given $Q \in \cQ = [m]$, output $$f(S_Q, T) = \psi \left( g( S_Q [1], T[1] ), \ldots , g( S_Q [n/k], T [n/k] ) \right)$$ in $t_q$ time.
    \end{itemize}
\end{Problem}
\noindent which we denote as the generalized Multiphase Problem. The original Multiphase Problem can be easily represented as $k=1$ with $g$ being the standard bit-wise $\AND$, $\psi$ being $\bigvee$, or $\bigoplus$ if $f$ is computing inner product over $\Field_2$.

We formally prove the following result for the generalized Multiphase Problem.
\begin{theorem} \label{thm:main_formal}
    Suppose $f$ is hard (as in \pref{def:hard}) and $m = \Omega( n^{0.99} )$. Then it must be the case that
    \begin{equation*}
        t_q \geq \Omega \left( \frac{ \log^{3/2} n }{\log^2 (w t_u)} \right).
    \end{equation*}
\end{theorem}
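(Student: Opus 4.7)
The plan is to proceed by contradiction: assume $t_q = o(\log^{3/2}(n) / \log^2(w t_u))$ and construct a protocol for the Reduced One-Way Communication Game (\pref{prot:reduced}) whose existence violates the hardness of $f$ (\pref{def:hard}). The construction has three stages: (i) a chronogram reduction to a static problem per epoch, (ii) the one-way simulation theorem \pref{thm:ly_simulation} of \cite{larsen_super-logarithmic_2025} producing a Modified Game (\pref{prot:moc}) protocol, and (iii) an information-theoretic argument that strips the protocol's dependence on $\vec{S}$.

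First, set up the standard chronogram with geometric ratio $\gamma = \poly(w t_u, \log n)$, dividing the update sequence into $\ell = \Theta(\log n / \log(w t_u))$ epochs of sizes $n_i \asymp \gamma^i$. Focus on an epoch $i$ inside $[\ell/3, 0.99 \ell]$; for this epoch the data structure presents a static problem with at most $S \leq t_u n_i$ updated cells, cache size $S_{\text{cac}} = \sum_{j > i} t_u n_j = o(n_i)$, and expected query time $O(t_q/\ell)$ into cells updated in epoch $i$ (by averaging across $\ell$ epochs). Applying \pref{thm:ly_simulation} with $p = (w t_u)^{-\Theta(1)}$ yields a Modified Game protocol with message length $C = O((pS + S_{\text{cac}}) w) = n_i / \poly\log(n)$. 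Balancing the terms in the advantage exponent, together with the balancedness parameter $\log(1/\beta) = o(\sqrt{\log n})$ from \pref{def:hard}, shows that $t_q = o(\log^{3/2} n / \log^2(w t_u))$ makes the exponent $o(\log n)$, so the advantage is $1 - o(1)$.

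The main technical step is converting the Modified Game protocol into a Reduced Game protocol. The obstacle is that Bob's message $M(\vec{S}, T)$ may depend on $S_Q$, whereas the Reduced Game message depends only on $T$. Using information-theoretic tools from \cite{ko_adaptive_2020, ko_lower_2025}, we exploit the product structure: since $(S_j)_{j \in [m]}$ are mutually independent of each other and of $T$, we have $I(M; \vec{S} | T) \leq |M| \leq C$, and by the chain rule (\pref{fact:chainrule}) together with conditioning monotonicity (\pref{fact:chainrule1}, \pref{fact:chainrule2}),
\[ \sum_{Q \in [m]} I(M; S_Q | T) \leq I(M; \vec{S} | T) \leq C. \]
Hence $\E_Q I(M; S_Q | T) \leq C/m = o(1)$ for epochs with $n_i \leq n^{0.99}$, which is guaranteed by $m = \Omega(n^{0.99})$. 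Applying Pinsker's inequality (\pref{fact:pinsker}), for a typical $Q$ the joint distribution $P(M, S_Q | T)$ is $o(1)$-close in total variation to the product $P(M | T) P(S_Q | T)$. This allows Alice and Bob in the Reduced Game, using shared randomness for $(Q, \vec{S}_{-Q})$, to simulate a run of the Modified protocol in which $\varsigma$ plays the role of $S_Q$, with only $o(1)$ loss in advantage.

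Given the resulting Reduced Game protocol with message length $C \leq n_i/\poly\log(n)$ and advantage $1 - o(1)$, the hardness of $f$ (\pref{def:hard}) forces $\odisc_C(\Psi_i) \leq n^{-2}$ on a $g_i = \Omega(1)$ fraction of $(S_Q^{-i}, T_{-i})$, so the protocol's advantage is at most $(1 - g_i) + g_i \cdot n^{-2} \leq 1 - \Omega(1)$, contradicting $1 - o(1)$. The main obstacle is the information-theoretic reduction in the third stage: the naive encoding-decoding arguments used in \cite{larsen_crossing_2020, larsen_super-logarithmic_2025} break down because $\vec{S}$ has support of size $2^{mn}$ and $M$ can encode arbitrary $\vec{S}$-dependent information, invalidating a counting-based approach. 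The crucial leverage we gain is $m = \Omega(n^{0.99})$ far exceeding $n_i \leq n^{0.99}$, so that the bit budget $C$ amortized over the $m$ query slots is below $o(1)$ bits on average, which is what drives the Pinsker-based reduction.
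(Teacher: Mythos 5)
Your overall architecture (chronogram, then the simulation theorem of \cite{larsen_super-logarithmic_2025}, then an information-theoretic step exploiting the product structure and $m \gg C$ to strip the dependence on $\vec{S}$) is the same as the paper's, but there is a genuine quantitative gap at the heart of your argument. \pref{thm:ly_simulation} does \emph{not} give advantage $1-o(1)$: its guarantee is of the form $\exp\left(-O\left(\log(1/p)\left(t_q+\sqrt{\cdots}\right)+\log(1/\beta)\right)\right)-\exp(-\Omega(pS))$, so when the exponent is $o(\log n)$ the advantage is only $n^{-o(1)}$, i.e.\ subpolynomially small and tending to zero. Consequently your final contradiction step (``hardness forces advantage at most $(1-g_i)+g_i n^{-2}\le 1-\Omega(1)$, contradicting $1-o(1)$'') compares the wrong quantities; the actual contradiction in the paper is between an advantage of $n^{-o(1)}$ from the simulation and an upper bound of $n^{-\Omega(1)}$ from hardness (\pref{thm:main_combinatorial}). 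This misreading also invalidates the loss accounting in your third stage: a simulation that loses ``$o(1)$'' in advantage is useless here, since $o(1)$ can swamp $n^{-o(1)}$; every loss must be kept polynomially small. Your own parameter choices make this worse: with epochs allowed up to $0.99\ell$ you can have $n_i\approx n^{0.99}$ and $m=\Theta(n^{0.99})$, so $C/m = 1/\poly\log n$ and the Pinsker loss $\sqrt{C/m}$ is only polylogarithmically small, which does not beat an $n^{-o(1)}$ advantage. The paper restricts to $i\in[\ell/3,5\ell/6]$ precisely so that $n_i\le n^{0.8}$ and losses like $\sqrt{(C+l)/m}$ are $n^{-\Omega(1)}$.

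Beyond the regime issue, your reduction to \pref{prot:reduced} glosses over the conditioning structure that is the paper's main technical work. The advantage from the simulation holds only conditioned on events $E_i^1,E_i^2$ and over a query set $\cQ'_i$ that depend on $\vec{S}$ and $T_{-i}$, and the quantity being bounded involves Bob's guess $B_Q$ inside an absolute value; conditioning on these objects can reintroduce correlation between $S_Q^i$ and $T_i$ even though the underlying distribution is product. A single average bound $\E_Q I(M;S_Q\mid T)\le C/m$ plus Pinsker does not handle this: the paper needs the random query sequence $\cP$ and position $J$, a direct-sum bound $O((C+l)/m)$ on the information about $S_{\rho_J}^i$ (\pref{cl:little_info_s}, \pref{cl:s_l_1}), a separate correlation bound $O(C/l)+O(1/n)$ between $S_{\rho_J}^i$ and $T_i$ given $Z,B_{\rho_J},G_{\rho_J},E_i$ (\pref{cl:low_correlation}), the Markov-style transfer through $\Pr[B_{\rho_J}=b\mid\cdot]$ (\pref{cl:one-way_to_rectangle} together with the $n^{-0.1}$ flipping trick), and finally the $\odisc_C$ hypothesis applied to a fresh independent $\varsigma$. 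Your proposal names the right philosophy but, as written, the step ``simulate the Modified protocol with $\varsigma$ in place of $S_Q$ with $o(1)$ loss'' is not a proof and would not survive the low-advantage regime in which the whole argument must live.
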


The main intuition behind phrasing our result in a generalized manner is that our result ``lifts" some hard function $\psi$ (in a restricted model) to a super-logarithmic data structure lower bound.

\paragraph{Epochs} To give a super-logarithmic lower bound, the first main technical component is the chronogram technique \cite{fredman_cell_1989}, where we divide the sequence of $n/k$ updates into epochs, each containing $n_i := \gamma^i$ blocks (of $k$ bits each), and $\sum_{i=1}^\ell n_i = n/k$. We denote the epochs of updates as $\{ T_i \}_{i=1}^\ell$, and we process the updates in reverse order.

Then we also consider the associated coordinates in $S_Q$. In particular, we denote $S_Q^i$ as the part of $S_Q$ associated with the $i$-th epoch $T_i$. This notation can be further extended to work with multiple indices in $[m]$ and $[\ell]$. $S_Q^{-i}$ denotes $S_Q$ except $S_Q^i$, analogously $T_{-i}$. We use $\vec{S}$ to denote the bundle over all $Q$'s. Similarly, if we use a subset of $[m]$ in the subscript, that denotes $S_Q$'s in the subscript.

\paragraph{Notations and Hard Distribution} 
We will use $\tau_{tot}$ as the random variable for the number of cells (total, across all epochs) probed in the query algorithm. $\tau_q$ is used as the random variable for the number of cells updated in epoch $i$ that are probed.

We assume each $S_Q [ i ]$ is distributed i.i.d. for all $i \in [n/k]$ and $Q \in [m]$, as well as all the coordinates in $T$. This implies that $S_Q [i]$'s and $T_i$'s are all independent. 

We will use the overline notation to denote the normalized version of $f$. That is, with fixed $S_Q^{-i}, T_{-i}$ assuming without loss of generality that $\Pr_{S_Q^i, T_i} [ f ( S_Q, T ) = +1] \geq \Pr_{S_Q^i, T_i} [ f ( S_Q, T ) = -1]$,
\begin{equation*}
    \overline{f}_i (S_Q^i, T_i ) := \begin{cases}
        \frac{1}{\Pr_{T_i} [ f ( S_Q, T ) = +1]} - 1 & \mbox{if } f ( S_Q, T ) = +1 \\
       - 1  & \mbox{otherwise}  
    \end{cases} 
\end{equation*}
and vice-versa if $\Pr_{S_Q^i, T_i} [ f ( S_Q, T ) = +1] \leq \Pr_{S_Q^i, T_i} [ f ( S_Q, T ) = -1]$.

\paragraph{High Level Proof Strategy} 
Before delving into the technical part, we would like to give an overview of our overall proof strategy.

\begin{enumerate}
    \item First, we want to select a ``good" epoch $i \in [\ell/3, 5\ell/6]$. We show that there exists some epoch $i$ of the updates, and an associated event $E_i$ and query set $\cQ'_i \subset \cQ$ which yield a ``good" data structure problem with pre-initialized memory and a cache.
    \item Such a ``good" data structure then yields a ``too-good-to-be-true" protocol for the following communication model from \cite{larsen_crossing_2020, larsen_super-logarithmic_2025}.
    \begin{figure}[!h]
            \centering
            \includegraphics[width=0.8\linewidth]{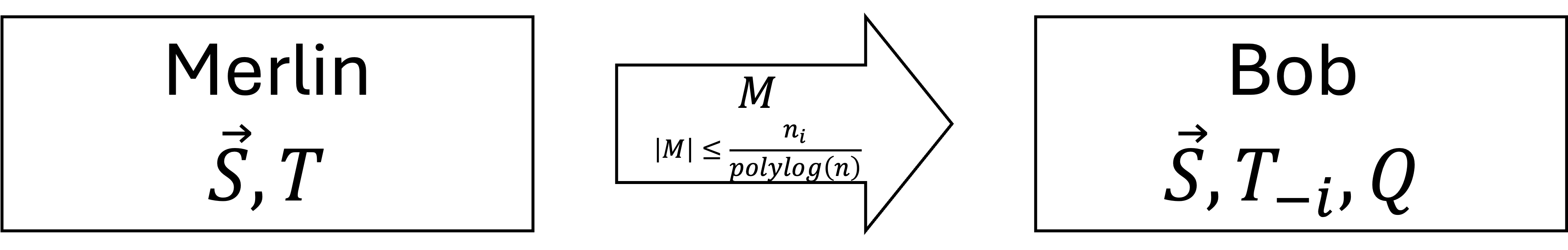}
            \caption{Communication Model}
    \end{figure}
    Bob sends a one-way message $M$ to Alice. Then Alice guesses the value of $f(S_Q, T) = f_i ( S_Q^i, T_i )$. The performance of the protocol is then measured by the advantage of the protocol.
    \item We show that a ``too-good-to-be-true" dynamic data structure implies a ``too-good-to-be-true" one-way simulation in \pref{sec:simulation}. Then in \pref{sec:lower_bound}, we show that hard $f$'s cannot have a ``too-good-to-be-true" one-way simulation using information-theoretic tools, thereby resulting in a contradiction.
\end{enumerate}

\subsection{One-way simulation Theorem} \label{sec:simulation}

In this section, we will prove the following simulation theorem (an analog of \pref{thm:ly_simulation} from \cite{larsen_crossing_2020, larsen_super-logarithmic_2025}) which converts a ``too-good-to-be-true" dynamic data structure for $f$ to a one-way communication protocol with a high advantage. Recall that our $f$ in question satisfies the following property. 

\begin{repdefinition}{def:hard}
    We say $f$ is ``hard" if for each $i \in [ \ell / 3 , \ell]$, with probability $\geq g_i \geq \Omega(1)$ over $S_Q^{-i}$ and $T_{-i}$, $f_i$ and its corresponding $\Psi_i$ satisfy both
    \begin{itemize}
        \item (Balanced) $\Pr_{S^i_Q} \left[ \abs{ \E_{T_i} \left[ f_i ( S^i_Q, T_i )  \right] }    \geq \frac{1 - \beta}{1+ \beta} \right] \leq  n^{-2}$ with $\log ( 1 / \beta ) \leq o ( \log^{1/2} (n) )$.
        \item (Low Discrepancy) $\odisc_C \left( \Psi_i \right) \leq n^{-2} $  with $C \leq n_i / \poly \log (n)$.
    \end{itemize}
\end{repdefinition}
For such $f$, we prove the following simulation theorem.
\begin{theorem} \label{thm:simulation}
    Suppose for any $\vec{S}$, there exists a dynamic data structure for \pref{prob:gmp} with $f$ with update time $t_u = \poly \log (n)$, average query time $\E_{Q,T} \left[\tau_{tot} \right] = t_{tot} \leq o \left( \frac{\log^{3/2} (n)}{( \log \log n )^2} \right).$     
    Then there exists an epoch $i \in [\ell / 3 , 5 \ell / 6  ]$, an event $E_i^1$ and a query set $\cQ_i \subset \cQ$ that depend only on $\vec{S}^{-i}, T_{-i}$ with $\Pr_{ \vec{S}^{-i}, T_{-i} } [ E_i^1 ] \geq \Omega ( g_i )$ such that
    \begin{align*}
        & |\cQ_i| \geq g_i m / 2 \\
        & \forall Q \in \cQ_i,~~ \Pr_{S^i_Q} \left[ \abs{ \E_{T_i} \left[ f ( S_Q, T )  \right] }  \geq \frac{1 - \beta}{1+ \beta} \right] \leq n^{-2},~ \odisc_C ( \Psi_i ) \leq  n^{-2}.
    \end{align*}
    Furthermore, there exists a sub-event $E_i^2 \subset E_i^1$ with $\Pr[ E_i^2 | E_i^1 ] \geq 9/10$, and a query set $\cQ'_i \subset \cQ_i$ with $|\cQ'_i| \geq \left( 1 - \frac{10}{n^2} \right) |\cQ_i| $ depending only on $\vec{S}, T_{-i}$, conditioned on which there exists a one-way communication protocol that obtains  
    \begin{equation*}
        \adv ( G_{\overline{f}_i}, \cD_{\cQ'_i}, \cD_{T_i}, n_i / \poly \log (n) ) \geq n^{-o(1)}.
    \end{equation*}
\end{theorem}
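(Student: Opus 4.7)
The plan is to combine the chronogram technique with the hardness definition to isolate a single epoch and a large query subset on which the one-way simulation theorem of \cite{larsen_crossing_2020, larsen_super-logarithmic_2025} (\pref{thm:ly_simulation}) delivers the desired $n^{-o(1)}$ advantage. I would partition the $n/k$ updates into $\ell$ epochs of geometrically decreasing sizes governed by $\gamma = \poly\log(n)$, so that $\ell = \Theta(\log n / \log\log n)$. Letting $\tau_q^{(j)}$ denote the number of probed cells whose last update was in epoch $j$, we have $\sum_j \E[\tau_q^{(j)}] = t_{tot}$; averaging over $j \in [\ell/3, 5\ell/6]$ yields some epoch $i$ in this range with $\E[\tau_q^{(i)}] \leq O(t_{tot}/\ell)$, and I fix this $i$ throughout.

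For this fixed $i$, I would construct $E_i^1$ and $\cQ_i$ using hardness. By \pref{def:hard}, for every query $Q$ the balanced and low-discrepancy conditions on $f_i$ and $\Psi_i$ hold with probability $\geq g_i$ over $\vec{S}^{-i}, T_{-i}$, so in expectation at least $g_i m$ queries are simultaneously good. A reverse-Markov argument then gives that with probability $\Omega(g_i)$ over $\vec{S}^{-i}, T_{-i}$ the set $\cQ_i$ of good queries has size $\geq g_i m/2$; this is $E_i^1$. Conditioning on $E_i^1$, I would apply Markov inequalities to carve out a sub-event $E_i^2$ of conditional probability $\geq 9/10$ on which simultaneously (i) $\E[\tau_{tot}\mid E_i^2]$ and $\E[\tau_q \mid E_i^2]$ remain $O(t_{tot})$ and $O(t_{tot}/\ell)$, losing only the $1/g_i = O(1)$ factor from the conditioning, and (ii) at most a $10/n^2$ fraction of $\cQ_i$ has a realized $S_Q^i$ for which $\overline{f}_i(S_Q^i, \cdot)$ is unbalanced, invoking the $n^{-2}$ failure bound in the balanced clause of \pref{def:hard} together with a Markov step across queries; discarding the offenders yields $\cQ'_i$ with $|\cQ'_i| \geq (1 - 10/n^2)|\cQ_i|$.

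Finally, conditioned on $E_i^2$ and on the now-fixed $\vec{S}, T_{-i}$, the restriction to epoch $i$ is a static data structure problem for $\overline{f}_i$ with pre-initialized memory coming from epochs $j > i$, cache of size $S_{cac} \leq O(t_u n_i/\gamma)$ from epochs $j < i$, $S = O(t_u n_i)$ updated cells, expected total query time $O(t_{tot})$, and expected probes into epoch-$i$ cells $O(t_{tot}/\ell)$. Since $\overline{f}_i$ has weights in $[-1,+1]$, row-sum zero under $\cD_{T_i}$, and magnitude $\geq \beta$ by \pref{def:hard}, I would apply \pref{thm:ly_simulation} to the product distribution $\cD_{\cQ'_i} \times \cD_{T_i}$ with $p = 1/\poly\log(n)$, making the message length $(8pS + S_{cac})w$ at most $n_i/\poly\log(n)$. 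The main obstacle, and the sole delicate piece of parameter accounting, is to verify that the exponent appearing in \pref{thm:ly_simulation} is $o(\log n)$ so the advantage is $n^{-o(1)}$, while simultaneously arranging $pS \geq n^{\Omega(1)}$ so that $\exp(-\Omega(pS))$ is negligible. Plugging in $t_{tot} \leq o(\log^{3/2} n/(\log\log n)^2)$, $t_q \leq O(t_{tot}/\ell) \leq o(\log^{1/2} n/\log\log n)$, $\log(1/\beta) \leq o(\log^{1/2} n)$, and $\log(1/p) = O(\log\log n)$, each of the three contributions $\log(1/p)\cdot t_q$, $\log(1/p)\cdot \sqrt{t_{tot}(t_q\log(1/p) + \log(1/\beta))}$, and $\log(1/p)\cdot \log(1/\beta)$ lands individually in $o(\log n)$, so a routine polylog bookkeeping (trading $\gamma$ against $t_u$ and $w$) closes the argument.
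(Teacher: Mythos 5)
Your proposal is correct and follows essentially the same route as the paper: Markov over epochs to fix $i \in [\ell/3, 5\ell/6]$ with $\E[\tau_q] \lesssim t_{tot}/\ell$, a reverse-Markov/counting step over the hardness events of \pref{def:hard} to define $E_i^1$ and $\cQ_i$ from $\vec{S}^{-i}, T_{-i}$, a further Markov step over $\vec{S}^i$ using the $n^{-2}$ balancedness failure bound to get $E_i^2$ and $\cQ'_i$, and then \pref{thm:ly_simulation} with $p = 1/\poly\log n$, cache bounded by the geometric decay with $\gamma = (t_uw)^{\Theta(1)}$, and the same parameter bookkeeping showing the exponent is $o(\log n)$ while $\exp(-\Omega(pS))$ is negligible since $n_i \geq n^{\Omega(1)}$. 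The only differences are cosmetic (e.g., constant factors $1/g_i$ versus $(2/g_i)^2$, and using Markov rather than the automatic $10/9$ blow-up when conditioning on $E_i^2$), which do not affect the conclusion.
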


We will use \pref{thm:ly_simulation} as our main ingredient, which we reiterate below.

\begin{reptheorem}{thm:ly_simulation}[\cite{larsen_super-logarithmic_2025}]
    Let $\cP : \cQ \times \cT \to [-1,+1]$ be a data structure problem with weights, and $\cD_{\cQ} \times \cD_{T}$ be a distribution over the queries and inputs such that $\E_{T} \left[ \cP ( Q, T ) \right] = 0$ for all $Q \in \cQ$ and $|\cP ( Q, T )| \geq \beta$ for some $\beta > 0$. If there is a data structure $D$ with pre-initialized memory and a cache for $\cP$ such that $D$ has at most $S$ updated cells, $S_{cac}$ cells in cache, expected query time $t_{tot}$, and expected query time into updated cells $t_q$, then for any $p \in (0,1)$, there exists a one-way communication protocol $G_{\cP_i}$ such that 
    \begin{align*}
        & \adv ( G_{\cP}, \cD_{\cQ}, \cD_{T}, ( 8 p S + S_{cac} ) w ) \\
        & \geq \exp \left( - O \left ( \log ( 1/p) ( t_q + \sqrt{ t_{tot} ( t_q \log (1/p) + \log (1 / \beta) )}  + \log (1 / \beta) \right)  \right) - \exp \left( - \Omega( p S ) \right)
    \end{align*}
\end{reptheorem}

\begin{remark}
    Note that the $\beta$ in the simulation is the same $\beta$ in the definition of hard $f$. Recall that we normalize $f_i (S_Q^i , T_i )$ so that the expectation is zero. We leave it to the reader to verify that the balanced condition $\abs{ \E_{T_i} \left[ f_i ( S^i_Q, T_i )  \right] } $, is exactly the condition required to lower bound the normalized absolute value.
\end{remark}

\begin{proofof}{\pref{thm:simulation}}
    Before directly applying \pref{thm:ly_simulation}, we need to filter through a sequence of events and query sets towards the final lower bound.

    First, we prove the following claim, which gives a fixed epoch $i \in [\ell/3, 5\ell/6]$, an event $E_i^1$ and query set $\cQ_i$ from $\vec{S}^{-i}, T_{-i}$ that are ``good."

    \begin{claim} \label{cl:good_event_minus_i}
    There exists an epoch $i \in [ \ell / 3 , 5 \ell / 6  ]$, an event $E_i^1$, and a query set $\cQ_i \subset \cQ$ depending only on $\vec{S}^{-i}$ and $T_{-i}$ such that: 
    \begin{align*}
        & \Pr_{\vec{S}^{-i}, T_{-i}} [ E_i^1 ] \geq \Omega ( g_i ),~~~ |\cQ_i| \geq \frac{g_i m}{2} \\
        & \forall Q \in \cQ_i,~ \odisc_C ( \Psi_i ) \leq  n^{-2},~~~  \Pr_{S^i_Q} \left[ \abs{ \E_{T_i} \left[ f ( S_Q, T )  \right] }  \geq \frac{1 - \beta}{1+ \beta} \right] \leq n^{-2} \\
        & \E_{ Q, T_i | E_i^1 }  \left[ \tau_Q \right] \leq \frac{ 8 \cdot t_{tot} }{g_i^2 \cdot \ell},~~~ \E_{ Q, T_i | E_i^1 } \left[ \tau_{tot} \right] \leq \frac{ 4 \cdot t_{tot} }{g_i^2} 
    \end{align*}    
    \end{claim}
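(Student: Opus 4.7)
The plan is to combine the Fredman--Saks chronogram averaging over epochs with the per-query hardness guarantee of Definition~\ref{def:hard} through two Markov-type filterings, stacked by a union bound. First I will use the chronogram to fix an epoch $i$ where the expected number of probes into epoch-$i$ cells is small; then I will filter $(\vec{S}^{-i}, T_{-i})$ to simultaneously (a) collect enough ``hard'' queries into $\cQ_i$ and (b) pin the conditional expected probe times to essentially their unconditional values.

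\textbf{Epoch selection.} Writing $\tau^{(j)}$ for the number of probes into cells last written during epoch $j$, one has $\sum_{j=1}^{\ell} \E[\tau^{(j)}] = \E[\tau_{tot}] = t_{tot}$. Averaging over the roughly $\ell/2$ epochs in $[\ell/3, 5\ell/6]$ picks some $i$ in this range with $\E[\tau^{(i)}] \leq 2 t_{tot}/\ell$, which I fix henceforth.

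\textbf{Query filtering.} For each $Q \in [m]$ let $G_{i,Q}$ be the event, over $(S_Q^{-i}, T_{-i})$, that both bullets of Definition~\ref{def:hard} hold for $f_i$; hardness gives $\Pr[G_{i,Q}] \geq g_i$, and since the $S_{Q'}^{(\cdot)}$'s are i.i.d.\ across $Q'$, this bound persists under the full measure on $(\vec{S}^{-i}, T_{-i})$. Setting $N := |\{Q : G_{i,Q}\}|$, linearity gives $\E[N] \geq g_i m$, and reverse Markov (using $N \leq m$) yields $\Pr[N \geq g_i m/2] \geq g_i/(2-g_i)$. Call this event $E_i^{(a)}$ and, on it, put $\cQ_i := \{Q : G_{i,Q}\}$; the balance and low-discrepancy bullets of the claim then hold for every $Q \in \cQ_i$ by construction.

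\textbf{Probe-time filtering and assembly.} Define $\phi, \psi$ to be the conditional expectations $\E_{Q,T_i}[\tau_Q \mid \vec{S}^{-i}, T_{-i}]$ and $\E_{Q,T_i}[\tau_{tot} \mid \vec{S}^{-i}, T_{-i}]$, so $\E[\phi] \leq 2 t_{tot}/\ell$ and $\E[\psi] = t_{tot}$. Markov yields $\Pr[\phi > 8 t_{tot}/(g_i^2 \ell)] < g_i^2/4$ and $\Pr[\psi > 4 t_{tot}/g_i^2] < g_i^2/4$; let $E_i^{(c)}$ be the intersection of the two complementary events, so $\Pr[E_i^{(c)}] \geq 1 - g_i^2/2$. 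Set $E_i^1 := E_i^{(a)} \cap E_i^{(c)}$, which by construction depends only on $(\vec{S}^{-i}, T_{-i})$, and union-bound to get $\Pr[E_i^1] \geq g_i/(2-g_i) - g_i^2/2 = g_i(2 - 2g_i + g_i^2)/(2(2-g_i))$. Since $2 - 2g_i + g_i^2 = (g_i-1)^2 + 1 \geq 1$ and $2(2-g_i) \leq 4$, this is $\geq g_i/4 = \Omega(g_i)$. The probe-time bounds on $\tau_Q$ and $\tau_{tot}$ in the claim then follow immediately because $E_i^1 \subset E_i^{(c)}$ forces $\phi, \psi$ pointwise below their thresholds, and tower averaging on $E_i^1$ preserves these pointwise bounds.

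\textbf{Main obstacle.} The only mildly subtle step is verifying $\Pr[E_i^1] = \Omega(g_i)$ uniformly as $g_i$ approaches $1$, where the coarser estimate $\Pr[E_i^{(a)}] \geq g_i/2$ would collapse against the $g_i^2/2$ loss; the sharper reverse Markov above (using $N \le m$) is what keeps the final union bound strictly positive throughout the entire range $g_i \in (0, 1]$.
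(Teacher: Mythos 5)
Your proposal is correct in its essentials and shares the paper's skeleton (chronogram averaging to fix $i$ with $\E[\tau_q]\le 2t_{tot}/\ell$, then the same reverse-Markov count of queries satisfying \pref{def:hard}, giving $\Pr\ge g_i/(2-g_i)$ and $|\cQ_i|\ge g_i m/2$), but it handles the probe-time bounds by a genuinely different mechanism. The paper defines $E_i^1$ as the hardness filter alone and then obtains the probe bounds ``for free'' from non-negativity of $\tau$: conditioning on an event of probability $\ge g_i/2$ and restricting $Q$ to a set of density $\ge g_i/2$ inflates each expectation by at most $(2/g_i)^2$, which is exactly where the $8t_{tot}/(g_i^2\ell)$ and $4t_{tot}/g_i^2$ come from. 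You instead intersect with an additional Markov filter $E_i^{(c)}$ on the conditional expectations $\phi,\psi$, paying a $g_i^2/2$ loss in probability; your union-bound arithmetic is right and still yields $\Pr[E_i^1]\ge g_i/4=\Omega(g_i)$, and your version buys pointwise (per $(\vec S^{-i},T_{-i})$) control of the conditional probe times rather than control only in expectation over $E_i^1$. The one discrepancy to flag: your $\phi,\psi$ average over $Q$ uniform in $[m]$, whereas the paper's bounds (and their use in \pref{cl:good_event} and \pref{thm:simulation}) are for $Q$ restricted to $\cQ_i$; transferring your pointwise bounds to $Q\in\cQ_i$ costs another factor $m/|\cQ_i|\le 2/g_i$, giving $16t_{tot}/(g_i^3\ell)$ and $8t_{tot}/g_i^3$ rather than the stated constants. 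Since $g_i=\Omega(1)$ this is harmless for the downstream asymptotics, but as written your statement matches the claim only under the unrestricted-$Q$ reading; either tighten your Markov thresholds by a factor $2/g_i$ or note the extra restriction step explicitly.
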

    \begin{proof}
        First, a simple Markov argument shows that there must exist an epoch $i \in [ \ell / 3 , 5 \ell / 6  ]$ such that
        \begin{align}
            \E_{ Q, \vec{S}, T }  \left[ \tau_q \right] \leq  2 \cdot t_{tot} / \ell. \label{eq:t_q_condition}
        \end{align}
        Furthermore, due to our assumption on the underlying function $f$, we have that for each $Q \in \cQ$,
        \begin{align}
            \Pr_{S_Q^{-i}, T_{-i}} \left[ \left( \Pr_{S^i_Q} \left[ \abs{ \E_{T_i} \left[ f ( S_Q, T )  \right] }  \geq \frac{1 - \beta}{1+ \beta} \right] \geq n^{-2} \right) \vee \left( \odisc_C ( \Psi_i ) \geq  n^{-2} \right) \right] \leq 1 - g_i. \label{eq:disc_condition}
        \end{align}
        By Markov's inequality, with probability at least $1 - \frac{1 - g_i}{1 - (g_i / 2 )} = \frac{g_i }{2 - g_i} \geq \frac{g_i}{2}$ over $\vec{S}^{-i}$ and $T_{-i}$, there exists $\cQ_i \subset \cQ$ with $|\cQ_i| \geq \frac{g_i m}{2}$ such that for all $Q \in \cQ_i$
        \begin{align*}
            & \Pr_{S^i_Q} \left[ \abs{ \E_{T_i} \left[ f ( S_Q, T )  \right] }  \geq \frac{1 - \beta}{1+ \beta} \right] \leq n^{-2}    \\
            & \odisc_C ( \Psi_i ) \leq  n^{-2}
        \end{align*}
        Denote the event corresponding to this choice of $\vec{S}^{-i}, T_{-i}$ as $E_i^1$. Then as $\Pr[ E_i^1 ] \geq \frac{g_i}{2}$ and $|\cQ_i| \geq \frac{g_i m}{2}$, this implies that
        \begin{align*}
            & \E_{ Q \in \cQ_i , \vec{S^i}, T_i | E^1_i } \left[ \tau_q \right] \leq \left( \frac{2}{g_i} \right)^2 \cdot 2 \cdot t_{tot} / \ell  \\
            & \E_{ Q \in \cQ_i , \vec{S^i}, T_i | E^1_i } \left[ \tau_{tot} \right] \leq \left( \frac{2}{g_i} \right)^2 t_{tot}.
        \end{align*}
        which completes the proof of the claim.
    \end{proof}

    Next, we proceed to the second filter, where we condition on $\vec{S}^i$ as well. 
    Conditioned on $E^1_i$, we would like to have a sub-event $E^2_i \subset E^1_i$, and query subset $\cQ'_i$ further conditioned on $\vec{S}^i$.
   
    \begin{claim} \label{cl:good_event}
    There exists an event $E^2_i$, query set $\cQ'_i \subset \cQ_i$ with $|\cQ'_i| \geq \left( 1 - \frac{10}{n^2} \right) |\cQ_i | $ (depending only on $\vec{S}$ and $T_{-i}$) such that  
    \begin{align*}
        & \Pr_{\vec{S}, T_{-i}} [ E_i^2 | E_i^1 ] \geq 9/10 \\
        & \forall Q \in \cQ'_i,~ \odisc_C ( \Psi_i ) \leq  n^{-2}, ~~ \abs{ \E_{T_i} \left[ f ( S_Q, T )  \right] }  \leq \frac{1 - \beta}{1+ \beta}  \\
        & \E_{ Q, T_i | E_i^2 }  \left[ \tau_q \right] \leq \frac{ 10 \cdot t_{tot} }{g_i^2 \cdot \ell},   ~~~  \E_{ Q, T_i | E_i^2 } \left[ \tau_{tot} \right] \leq \frac{ 5 \cdot t_{tot} }{g_i^2} 
    \end{align*}
    \end{claim}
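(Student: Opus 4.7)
The strategy is to peel off $\vec{S}^i$ from the conditioning on $E_i^1$ via a Markov-type argument. Recall that $E_i^1$ already guarantees for every $Q \in \cQ_i$ that $\Pr_{S^i_Q}\left[ \abs{\E_{T_i}[f(S_Q,T)]} \geq \frac{1-\beta}{1+\beta} \right] \leq n^{-2}$, and since $\vec{S}^i$ is independent of $\vec{S}^{-i}, T_{-i}$, this inner probability is unaffected by conditioning on $E_i^1$. Linearity of expectation over $\vec{S}^i$ then gives that the expected number of ``bad'' queries in $\cQ_i$ (those violating the balanced condition at the realized $S_Q^i$) is at most $n^{-2}|\cQ_i|$.

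The plan is first to apply Markov's inequality to this count: with probability at least $9/10$ over $\vec{S}^i \mid E_i^1$, the number of bad queries is at most $10 n^{-2}|\cQ_i|$. I would define $E_i^2$ as precisely this event, so that $\Pr[E_i^2 \mid E_i^1] \geq 9/10$, and let $\cQ'_i \subset \cQ_i$ consist of the non-bad queries; by construction $|\cQ'_i| \geq \left(1 - \frac{10}{n^2}\right)|\cQ_i|$, and the balanced condition $\abs{\E_{T_i}[f(S_Q,T)]} \leq \frac{1-\beta}{1+\beta}$ now holds pointwise for every $Q \in \cQ'_i$. The low-discrepancy condition $\odisc_C(\Psi_i) \leq n^{-2}$ is inherited directly from $E_i^1$: its definition only involves $\vec{S}^{-i}, T_{-i}$ and an expectation over $S^i_Q$, so it is already guaranteed for every $Q \in \cQ_i \supseteq \cQ'_i$. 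Note also that $E_i^2$ and $\cQ'_i$ depend only on $\vec{S}, T_{-i}$, as required.

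For the expected-query-time bounds, I would apply Markov a second time, this time on the conditioning itself. Since $\Pr[E_i^2 \mid E_i^1] \geq 9/10$, for any nonnegative random variable $X$ we have $\E[X \mid E_i^2] \leq \frac{10}{9}\E[X \mid E_i^1]$. Applied to $\tau_q$ and $\tau_{tot}$, together with the bounds $\E_{Q,T_i \mid E_i^1}[\tau_q] \leq 8 t_{tot}/(g_i^2 \ell)$ and $\E_{Q,T_i \mid E_i^1}[\tau_{tot}] \leq 4 t_{tot}/g_i^2$ from \pref{cl:good_event_minus_i}, this yields $\E_{Q,T_i \mid E_i^2}[\tau_q] \leq \frac{80}{9} t_{tot}/(g_i^2 \ell) < 10 t_{tot}/(g_i^2 \ell)$ and $\E_{Q,T_i \mid E_i^2}[\tau_{tot}] \leq \frac{40}{9} t_{tot}/g_i^2 < 5 t_{tot}/g_i^2$.

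There is no conceptual obstacle here; the argument is a pure bookkeeping exercise given the setup of \pref{cl:good_event_minus_i}. The main thing to watch is that the constants line up: the $9/10$ probability lower bound on $E_i^2 \mid E_i^1$ is chosen precisely so that the multiplicative factor $10/9$ from conditioning is small enough to keep the query-time expectations within the stated constants ($5$ and $10$), and the $10/n^2$ loss in $|\cQ'_i|$ is affordable because the outer probability $n^{-2}$ in the definition of hardness already carries polynomial slack.
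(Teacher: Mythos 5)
Your proposal is correct and follows essentially the same route as the paper: a Markov argument over the count of unbalanced queries (using that $\vec{S}^i$ is independent of the conditioning defining $E_i^1$) to define $E_i^2$ and $\cQ'_i$, inheriting the discrepancy bound directly from $E_i^1$, and a $\tfrac{10}{9}$ conditioning factor applied to the query-time expectations from \pref{cl:good_event_minus_i}. The only cosmetic difference is that the paper additionally includes a factor $\frac{1}{1-10/n^2}$ to account for restricting the expectation to $Q \in \cQ'_i$ rather than all of $\cQ_i$, which your constants ($\tfrac{80}{9} < 10$, $\tfrac{40}{9} < 5$) absorb without issue.
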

    \begin{proof}
       
        Now, we switch our attention to choosing $\vec{S^i}$ and the respective $\cQ'_i \subset \cQ_i$ assuming $E^1_i$.
        Then there exists an event $E^2_i$ depending on the additional $\vec{S^i}$ such that 
        \begin{align*}
            \Pr_{Q \sim \cQ_i} \left [ \abs{ \E_{T_i} \left[ f ( S_Q, T )  \right] }  \leq \frac{1 - \beta}{1+ \beta} \right] \geq \left( 1 - \frac{10}{n^2} \right),
        \end{align*}
        that is, a choice of $\vec{S}^i$ such that at most $\frac{10}{n^2}$ of the resulting $f(S_Q, T)$'s are {\bf not} well-balanced. Then by Markov's inequality,
        $$1 - \Pr_{\vec{S}^i, \vec{S}^{-i}, T_{-i} } [ E^2_i | E^1_i ] \leq \frac{ n^{-2}}{ \frac{10}{n^2} } = 1/10.$$
        Assuming $E^2_i$, $\vec{S}^i$ yields a set $\cQ'_i \subset \cQ_i$ such that for every $Q \in \cQ'_i$,
        \begin{align*}
            & \odisc_C ( \Psi_i ) \leq  n^{-2},~~ \abs{ \E_{T_i} \left[ f ( S_Q, T )  \right] }  \leq \frac{1 - \beta}{1+ \beta} 
        \end{align*}
        Note that the first condition only depends on $\vec{S}^{-i}, T_{-i}$, while the second condition depends on $\vec{S}, T_{-i}$. Then over $Q\in \cQ'_i$, conditioned on $E^2_i$, by a simple Markov argument,
        \begin{align*}
            & \E_{ Q \in \cQ'_i , T_i | E^2_i } \left[ \tau_q \right] \leq \frac{10}{9} \cdot \frac{1}{1 - \frac{10}{n^2}} \left( \frac{2}{g_i} \right)^2 \cdot 2 \cdot t_{tot} / \ell \leq \frac{ 10 \cdot t_{tot} }{g_i^2 \ell} \\
            & \E_{ Q \in \cQ'_i , T_i | E^2_i } \left[ \tau_{tot} \right] \leq \frac{10}{9} \cdot \frac{1}{1 - \frac{10}{n^2}} \cdot \frac{ 4 \cdot t_{tot} }{g_i^2} \leq \frac{ 5 \cdot t_{tot} }{g_i^2}
        \end{align*}
    \end{proof}

    Now we have the required definition of event $E_i^2$ and $\cQ'_i$. We proceed to apply \pref{thm:ly_simulation} conditioned on $E_i^2$ and over the query set $\cQ'_i$. Given \pref{cl:good_event}, we zoom into the event $E^2_i$. Given $E^2_i$, which is determined by $\vec{S}, T_{-i}$, there exists a data structure with updated cells $S \leq t_u n_i$ and cache $S_{cac} \leq \sum_{ i'' < i } t_u n_{i''} = t_u \frac {n_i - 1}{\gamma -1} \leq \frac{2 t_u n_i}{\gamma} \leq \frac{n_i}{(t_u w)^{\Theta(1)}}$, with expected query time $\frac{ 5 \cdot t_{tot} }{g_i^2}$ and expected query time into updated cells $\frac{ 10 \cdot t_{tot} }{g_i^2 \ell}$ over some query set $\cQ'_i \subset \cQ_i$, where the bound on $S_{cac}$ holds from setting $\gamma = (t_u w)^{\Theta(1)}$. \pref{thm:ly_simulation} then implies a one-way communication protocol for $f_i ( S^i_Q, T_i )$ for any $p \in ( 0, 1)$ with advantage 
    \begin{align}
        & \adv ( G_{\overline{f}_i}, \cD_{\cQ'_i}, \cD_{T_i}, ( 8 p S + S_{cac} ) w ) \nonumber \\
        & \geq \exp \left( - O \left ( \log ( 1/p) \left( t_q + \sqrt{ t_{tot} ( t_q \log (1/p) + \log (1 / \beta) )} \right)  + \log (1 / \beta) \right)  \right) - \exp \left( - \Omega( p S ) \right) \label{eq:advantage_bound}
    \end{align}
    We set $p := \frac{1}{(t_u w)^{\Theta(1)}}$. Then observe that $\ell = \Omega( \log_{\gamma} (n) ) = \Omega \left( \frac{\log n}{\log \log n} \right).$ Plugging in the bounds, we get
    \begin{equation} \label{eq:advantage_middle_bound}
        \beta^{-O(1)} \cdot 2^{ - O \left ( \log ( t_u w ) \left( \frac{t_{tot}}{ g_i^2 \log_{\gamma} (n) } + \sqrt{ \frac{t_{tot}^2 \log ( t_u w ) }{g_i^4 \log_{\gamma} (n) }    + \frac{t_{tot} \log (1 / \beta) }{g_i^2} } \right)  \right)} - \exp \left( - \frac{n_i}{(t_u w)^{\Theta(1)} }  \right).
    \end{equation}
    Assuming $t_u = \poly \log (n)$, $w = O ( \log n)$, $n_i \geq \Omega( n^{1/3} )$ due to the epoch being $i \in [\ell/3, 5\ell / 6]$,
    if we assume $\log ( 1 / \beta ) \leq o ( \log^{1/2} (n) )$, 
    and $t_{tot} \leq o \left( \frac{ g_i^2 \cdot \log^{3/2} (n)}{ (\log \log (n))^{2}} \right)$, \pref{eq:advantage_middle_bound} simplifies to 
    \begin{equation*}
        \pref{eq:advantage_middle_bound} \geq \beta^{-O(1)} \cdot n^{- o(1)} \geq n^{-o(1)}.
    \end{equation*}
    Therefore, there exists a message of length $ \frac{n_i}{\poly \log (n)}$ with $n^{-o(1)} = n_i^{-o(1)}$ advantage for $G_{\overline{f}_i}$
\end{proofof}

\subsection{Combinatorial Lower Bound} \label{sec:lower_bound}

Next, we would like to show that no such one-way protocol exists. Before we delve into the proof of impossibility, we briefly summarize the properties we attained in \pref{sec:simulation}. 

\paragraph{Recap of Previous Section} In the previous section, we showed that a ``too-good-to-be-true" dynamic data structure yields a ``too-good-to-be-true" one-way communication protocol, which had the following property: 

\begin{itemize}
    \item $\cQ_i \subset \cQ$ which depends only on $\vec{S}^{-i}$ and $T_{-i}$, with size $\geq g_i m / 2$ such that all $Q \in \cQ_i$ have:
    \begin{align*}
        & \Pr_{S^i_Q} \left[ \abs{ \E_{T_i} \left[ f ( S_Q, T )  \right] }  \geq \frac{1 - \beta}{1+ \beta} \right] \leq n^{-2}    \\
        & \odisc_C ( \Psi_i ) \leq  n^{-2};
    \end{align*}
    \item An event $E_i^2$ with probability $\Pr_{\vec{S},T_{-i}} [E_i^2 | E_i^1 ] \geq 9 / 10$ and a set of good queries $\cQ'_i \subset \cQ_i$ with $|\cQ'_i | \geq \left( 1 - \frac{10}{n^2}\right) |\cQ_i |$ depending only on $\vec{S}, T_{-i}$ where conditioned on $E_i^2$
    \begin{align*}
        & \adv ( G_{\overline{f}_i}, \cD_{\cQ'_i}, \cD_{T_i}, n_i / \poly \log (n) ) \geq n^{-o(1)}.
    \end{align*}
\end{itemize}
We will first pick ``good" $\vec{S}^{-i}, T_{-i}$ satisfying $E_i^1$ as part of public randomness (i.e., we will assume the event $E_i^1$). Note that $\vec{S}^{-i}, T_{-i}$ reveals zero information about $\vec{S}^i$ and $T_i$ due to our hard distribution being independent. Thus, conditioning on the event will not affect the distribution over $\vec{S}^i$ and $T_i$. 


\paragraph{Main Lower Bound}
We would like to show that these properties cannot be satisfied simultaneously. To that end, we prove the following combinatorial black-box theorem, which can be used to derive the contradiction.

    \begin{theorem}
        \label{thm:main_combinatorial}
        If $|M| \leq C = \frac{n_i}{ \poly \log n}$,  
    the maximum advantage Bob can attain, conditioned on an event $E_i$ determined by $\vec{S},T_{-i}$ is
    \begin{equation*}
        \adv ( G_{\overline{f}_i} , \cD_{\cQ'_i}, \cD_{T_i}, C ) \leq n^{-\Omega(1)}
    \end{equation*}
    \end{theorem}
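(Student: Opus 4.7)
\begin{proofsketch}
The plan is to bound the modified-game advantage $\adv(G_{\overline{f}_i}, \cD_{\cQ'_i}, \cD_{T_i}, C) \leq n^{-\Omega(1)}$ when the message length is at most $C = n_i / \poly\log n$. First I would fix the epoch $i$ and condition on the event $E_i^1$, so that $\vec{s}^{-i}$ and $T_{-i}$ are frozen in a ``good'' configuration. Because the hard distribution is a product distribution, after this conditioning the residual variables $\vec{s}^i$ and $T_i$ retain their original independent structure, and every query $Q \in \cQ'_i$ inherits both balancedness and the low-discrepancy bound $\odisc_C(\Psi_i) \leq n^{-2}$. Writing the advantage as $\adv = \E_{Q \sim \cD_{\cQ'_i}}[\adv_Q]$ with $\adv_Q := \E_{\vec{s}, M}\!\left|\E_{T_i \mid \vec{s}, M}[\overline{f}_i(s_Q, T_i)]\right|$ for any message function $M = \mathcal{M}(\vec{s}, T_i)$, the task reduces to bounding the per-query advantage on average over $Q$.

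The second step is a per-query reduction to Protocol \ref{prot:reduced}. For each fixed $\vec{s}^{-Q}$, I would introduce a ``simulated reduced'' message $M^\star = \mathcal{M}((\vec{s}^{-Q}, \hat{s}_Q), T_i)$, where $\hat{s}_Q$ is resampled from the marginal distribution of $s_Q$ independently of everything else. Because $M^\star$ depends only on $T_i$ (together with public randomness $\hat{s}_Q$), the hard-function hypothesis directly yields that the corresponding reduced-game advantage $\adv_Q^\star$ satisfies $\adv_Q^\star \leq \odisc_C(\Psi_i) \leq n^{-2}$ for every $Q \in \cQ'_i$. What remains is to bound $|\adv_Q - \adv_Q^\star|$, i.e., to quantify how much Bob can benefit from using the true $s_Q$ in his message rather than a fresh independent copy.

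The heart of the proof is an information-theoretic bound on this gap using the tools of \cite{ko_adaptive_2020, ko_lower_2025}. By Pinsker's inequality (\pref{fact:pinsker}) combined with \pref{fact:KL_Mutual}, one shows that $|\adv_Q - \adv_Q^\star|$ is controlled on average by $\|\overline{f}_i\|_\infty \cdot \sqrt{I(s_Q; M)}$, where $\|\overline{f}_i\|_\infty \leq 2/\beta = n^{o(1)}$ by balancedness. The crucial quantitative ingredient is the product-distribution subadditivity
\[
  \sum_{Q \in [m]} I(s_Q; M) \;\leq\; I(\vec{s}^i; M) \;\leq\; H(M) \;\leq\; C,
\]
which follows from the identity $I(\vec{X}; M) = H(\vec{X}) - H(\vec{X} \mid M) \geq \sum_i H(X_i) - \sum_i H(X_i \mid M) = \sum_i I(X_i; M)$ for independent $X_i$'s, via subadditivity of conditional entropy. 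This yields $\E_Q I(s_Q; M) \leq C/m$, and by Cauchy--Schwarz $\E_Q \sqrt{I(s_Q; M)} \leq \sqrt{C/m}$. Plugging in $C \leq n_i / \poly\log n \leq n^{5/6}/\poly\log n$ and $m \geq n^{0.99}$, we get $\sqrt{C/m} \leq n^{-\Omega(1)}$, so that $\adv \leq n^{-2} + n^{o(1)} \cdot n^{-\Omega(1)} \leq n^{-\Omega(1)}$, completing the proof.

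The main technical obstacle is justifying the information-theoretic gap bound. The subtlety is that $s_Q$ appears simultaneously inside the message $M = \mathcal{M}(\vec{s}, T_i)$ and inside the function argument $\overline{f}_i(s_Q, T_i)$, so a naïve total-variation bound on message distributions does not immediately translate into a bound on the bias difference: changing $s_Q$ alters both the partition of $T_i$-space induced by $M$ and the function being averaged. Decoupling these two appearances of $s_Q$ requires a careful hybrid argument, and one may need to pass through conditional mutual-information quantities such as $I(T_i; s_Q \mid M, \vec{s}^{-Q})$ and exploit the product structure more delicately than the marginal subadditivity above. Nevertheless, the summed bound of $O(C)$ over the $m = \Omega(n^{0.99})$ queries gives enough slack to absorb the $n^{o(1)}$ factor from $1/\beta$ and obtain the desired $n^{-\Omega(1)}$ error after averaging over $Q \in \cQ'_i$.
\end{proofsketch}
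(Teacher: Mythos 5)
Your high-level plan --- decouple $s_Q$ from the message via the hardness/$\odisc$ assumption, and pay for the coupling with an information-theoretic term that is small because the $C$ bits of $M$ are spread over $m=\Omega(n^{0.99})$ queries --- is indeed the philosophy of the paper's proof. But the quantitative step you rest it on is not valid as stated, and the obstacle you flag at the end (and defer) is precisely the content of the paper's argument, so there is a genuine gap. Concretely: in the modified game Alice knows all of $\vec{S}$, so the advantage involves $\E\abs{\E_{T_i\mid M,\vec{s}}[\overline{f}_i(s_Q,T_i)]}$, i.e.\ $T_i$ is conditioned on $M$ \emph{together with all spectator coordinates} $\vec{s}^{-Q}$ (and, after the simulation step, also on the event $E_i^2$ and membership in $\cQ'_i$, both of which depend on $\vec{S}^i$). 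The gap between this and your resampled world is therefore governed by conditional quantities such as $I(s_Q;T_i\mid M,\vec{s}^{-Q})$ and $I(s_Q;M\mid \vec{s}^{-Q})$, not by the marginal $I(s_Q;M)$, and for fully-conditioned quantities the direct sum you invoke fails: if $M$ is the single bit $T_i\oplus s_1\oplus\cdots\oplus s_m$, then $I(s_Q;M)=0$ and $\sum_Q I(s_Q;M)\le C$ holds vacuously, yet $I(s_Q;T_i\mid M,\vec{s}^{-Q})=1$ for every $Q$ and the fully-conditioned advantage is $1$. You also cannot sidestep this by simply dropping the spectators from the conditioning, since $\E\abs{\E[\cdot\mid\cdot]}$ only decreases under coarser conditioning, so a less-conditioned surrogate does not upper bound the true advantage.

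The paper's proof supplies exactly the missing mechanism. It first converts the advantage into a correlation with an explicit $\pm1$ guess $B_Q$, which licenses passing to a \emph{coarser} conditioning that retains $B_{\rho_J}$ as a conditioning variable; it then embeds the query as a uniformly random position $J$ in a random sequence $\cP$ of $l$ queries and reveals only the prefix data $Z=(M,\cP_{<J},G_{\rho_{<J}},B_{\rho_{<J}},S^i_{\rho_{<J}},J)$ rather than all of $\vec{S}^i$. This two-scale construction yields two separately controllable terms: a chain-rule bound over the $l$ revealed positions giving $I(S^i_{\rho_J};T_i\mid Z,\rho_J,B_{\rho_J},G_{\rho_J}=1,E_i=1)\le 4C/l+25/n$ (\pref{cl:low_correlation}), and a drift bound $\E[D(S^i_{\rho_J}\vert_{Z,\rho_J,E_i}\,\|\,S^i_{\rho_J})]\le O((C+l)/m)$ (\pref{cl:little_info_s}, \pref{cl:s_l_1}); choosing $l=n^{0.9}$ makes both $n^{-\Omega(1)}$, after which the residual term is bounded by $\odisc_C(\Psi_i)\le n^{-2}$ (amplified by the $n^{0.1}$ factor from forcing $\Pr[B_{\rho_J}=b\mid\cdot]\ge n^{-0.1}$). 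Your sketch has the right shape for the drift term (your $\sum_Q I(s_Q;M)\le C$ computation is essentially \pref{cl:little_info_s}) and for the $\odisc$ endgame, but it is missing the prefix-revealing random process and the $C/l$ correlation bound, as well as the handling of the $\vec{S}^i$-dependent objects $E_i^2$, $\cQ'_i$ (the $G$ indicators) and the guess $B_{\rho_J}$; without these the claimed bound $\abs{\adv_Q-\adv_Q^\star}\lesssim \beta^{-1}\sqrt{I(s_Q;M)}$ does not follow.
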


    Before delving into the main proof, we would like to give a brief overview of the proof, and what combinatorial properties are necessary to derive a contradiction. Our main proof strategy roughly follows the technique pioneered in \cite{ko_adaptive_2020} and subsequently \cite{ko_lower_2025}. 
    We show that a ``too-good-to-be-true" one-way message $M$ achieving a large advantage leads to a ``too-good-to-be-true" random process $Z$ which achieves (i) small information on $T_i$; (ii) small information on $S_Q^i$; and (iii) little correlation between $S_Q^i$ and $T_i$, while such $Z$ cannot exist, as the underlying $f$ is hard.

    We define our ``too-good-to-be-true" random process $Z$ as follows (which would be guaranteed by setting $E_i = 1$). Let $\cP$ be a non-overlapping random sequence of numbers in $\cQ_i$ of length $l$, where $l$ is a parameter we will decide in the final part of the proof. Note that $\cQ_i$ is completely determined by $\vec{S}^{-i}, T_{-i}$. Therefore, $\cP$ reveals zero information about $\vec{S}^i$. We denote the number at position $j$ as $\rho_j$. Then we also take a random number $J \in [l]$ uniformly at random. Let $B_Q$ denote Bob's response for query $Q \in [m]$, and a Boolean random variable $G_Q$ whether $Q \in \cQ'_i$. Our $Z$ is then defined as 
    \begin{equation*}
        Z := M, \cP_{< J } , G_{\rho_{ < J }}, B_{\rho_{ < J }}, S_{\rho_{ < J }}^i, J 
    \end{equation*}
    Then we attach $Z$ along with $\rho_J, G_{\rho_{ J }}, B_{\rho_{ J }}$ and conditioning on $E_i = 1$ to derive the contradiction. 

    \subsubsection{Technical Lemmas}
   
    Towards the proof of \pref{thm:main_combinatorial}, we prove the necessary technical claims first.     The main technical deviation from previous works is on how to deal with the information on $S_{\rho_J}^i$. Just as in \cite{ko_adaptive_2020}, $Z$ reveals little information about $S_{\rho_J}^i$ in terms of mutual information or KL divergence.

     \begin{claim}[Low information on $S_{\rho_J}^i$] \label{cl:little_info_s}
         \begin{equation*}
             \E_{Z, \rho_J | E_i = 1 } \left[ D ( S_{\rho_J}^i |_{Z, \rho_J , E_i = 1 }  || S_{\rho_J}^i ) \right] \leq \frac{5 (C + l)}{m}  
         \end{equation*}
     \end{claim}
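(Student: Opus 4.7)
The plan is to reduce the expected KL divergence to an unconditional mutual information and then apply a direct-sum / Han-style subsampling bound, exploiting the i.i.d.\ structure of the blocks $\{S_q^i\}_{q \in \cQ_i}$ under the product prior.

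\textbf{Setup.} Since $E_i^1$ depends only on $\vec{S}^{-i}, T_{-i}$, conditioning on it leaves the blocks $\{S_q^i\}_{q \in \cQ_i}$ i.i.d.\ with marginal $\mu$, keeps $T_i$ independent of $\vec{S}^i$, and keeps $\cP, J$ independent of $(\vec{S}^i, T_i)$. The further conditioning on $E_i$ (i.e.\ $E_i^2$) restricts $\vec{S}^i$, but $\Pr[E_i \mid E_i^1] \ge 9/10$, so I can pay $1/\Pr[E_i] = O(1)$ at the cost of a constant factor in the final bound.

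\textbf{Reduction to mutual information.} Using the identity
\[
\E_{Y}\bigl[D(P(X \mid Y) \,\|\, Q(X))\bigr] = D(P(X) \,\|\, Q(X)) + I(X;Y),
\]
applied with $P = \Pi(\cdot \mid E_i)$ and $Q(X) = \mu$, I decompose
\[
\E_{Z,\rho_J \mid E_i}\!\bigl[D(\Pi(S_{\rho_J}^i \mid Z, \rho_J, E_i) \,\|\, \mu)\bigr]
= I^{(\Pi\mid E_i)}(S_{\rho_J}^i; Z \mid \rho_J) + D(\Pi(S_{\rho_J}^i \mid E_i) \,\|\, \mu) + O(1/m).
\]
The marginal-bias correction is $O(1/m)$: by convexity $D(\Pi(S_{\rho_J}^i \mid E_i) \,\|\, \mu) \le \tfrac{1}{|\cQ_i|} \sum_{q \in \cQ_i} D(\Pi(S_q^i \mid E_i) \,\|\, \mu)$, and each summand is bounded by $I(S_q^i;\mathbbm{1}_{E_i})/\Pr[E_i]$, whose sum over $q$ is $\le H(\mathbbm{1}_{E_i})/\Pr[E_i] = O(1)$, divided by $|\cQ_i| \ge g_i m/2 = \Omega(m)$. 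I then pay the $O(1)$ factor to pass from $I^{(\Pi\mid E_i)}$ back to an unconditional mutual information.

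\textbf{Subsampling via Han.} Applying \pref{fact:chainrule2} iteratively, I strip from $Z$ the components that are independent of $S_{\rho_J}^i$ given $\rho_J$ (namely $J$, $\cP_{<J}$, and $S^i_{\rho_{<J}}$, all independent of $S_{\rho_J}^i$ by the product prior), reducing the main term to $I\bigl(S_{\rho_J}^i;\, M, G_{\rho_{<J}}, B_{\rho_{<J}} \bigm| \rho_J, J, \cP_{<J}, S^i_{\rho_{<J}}\bigr)$. Since $\rho_J$ is uniform on the pool $\cQ_i \setminus \rho_{<J}$ of size $\Omega(m)$, and the $S_q^i$ over this pool are conditionally independent, Han's inequality (equivalently, $\sum_q I(S_q^i; Y) \le I(\vec S^i; Y)$ for independent coordinates) yields
\[
\E_{\rho_J}\!\bigl[I\bigl(S_{\rho_J}^i;\, Y \bigm| \cdots\bigr)\bigr] \;\le\; \frac{H(Y \mid \cdots)}{|\cQ_i|-(J-1)} \;\le\; \frac{C + 2l}{\Omega(m)} \;=\; O\!\left(\frac{C+l}{m}\right),
\]
because $|M| \le C$ and only the $J-1 \le l$ pairs $(G_{\rho_j}, B_{\rho_j})$ for $j < J$—each a single $\pm 1$-bit—are revealed in $Z$. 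Collecting the $O(1)$ losses finishes the proof with constant $5$.

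\textbf{Main obstacle.} The delicate point is the Han-style subsampling bound in the last step. A naive bound using the entropy of the full sequence $(G_q, B_q)_{q \in \cQ_i}$ would cost $O(m)$ bits, producing a trivial statement. The saving is that only the sub-sampled prefix $\rho_{<J}$ of responses appears in $Z$, so this portion contributes exactly $O(l)$ bits of entropy—and Han's inequality must be invoked on the conditional product structure of $\{S_q^i\}_{q \in \cQ_i \setminus \rho_{<J}}$ so that both the $C$-bit message $M$ and the $O(l)$-bit revealed responses are spread uniformly across $|\cQ_i| = \Omega(m)$ i.i.d.\ coordinates, yielding the combined $(C+l)/m$ scaling rather than $C/m + O(1)$.
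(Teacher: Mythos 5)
Your proposal is correct and follows essentially the same route as the paper: you rewrite the expected KL divergence as a mutual information and then bound it by a direct-sum/averaging argument over the $\Omega(m)$ i.i.d.\ blocks indexed by $\cQ_i$, charging $C$ bits for $M$ and $O(l)$ bits for the revealed prefix responses $(G_{\rho_{<J}},B_{\rho_{<J}})$, with the $E_i^2$-conditioning costing only a constant factor (the paper folds the one-bit indicator $E_i^2$ into the mutual information and divides by $\Pr[E_i^2\mid E_i^1]\ge 9/10$, while you pay the factor explicitly and bound the marginal-shift term $D(S^i_{\rho_J}|_{E_i}\,\|\,\mu)$ separately as $O(1/m)$). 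The only quibble is bookkeeping: the explicit constant $5$ is asserted rather than derived, but the paper's own accounting (dividing by $m-j$ rather than $|\cQ_i|-j$) is comparably loose, so nothing essential is missing.
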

     \begin{proof}

         Suppose we condition on the event $E_i^1$. 
         Then we use the analogous direct sum technique from \cite{ko_adaptive_2020}. First observe that
         \begin{align*}
             I ( S_{\rho_J}^i ; Z, \rho_J,  E_i^2 | E_i^1 = 1 ) & \leq I ( S_{\rho_J}^i ; M, \cP, B_{\rho_{ < J }}, G_{\rho_{ < J }}, S_{\rho_{ < J }}^i, E_i^2,  J | E_i^1 = 1  ) \\
             & = I ( S_{\rho_J}^i ; M,  E_i^2 , B_{\rho_{ < J }} | \cP, S_{\rho_{ < J }}^i , J, E_i^1 = 1 )
         \end{align*}
         For any fixed $J = j$, note that
         \begin{align*}
             & I ( S_{\rho_J}^i ; M, E_i^2, B_{\rho_{ < j }}, G_{\rho_{ < j }}  | \cP, S_{\rho_{ < J }}^i, J = j, E_i^1 = 1) \leq \frac{1}{m - j} I ( S_{ - \rho_{<j}}^i ; M, E_i^2, B_{\rho_{ < j }} G_{\rho_{ < j }}  | S_{\rho_{ < j }}^i , E_i^1 = 1) \\
             & \leq \frac{C + 2 j + 1}{m - j} \leq \frac{C + 2 l + 1}{m - l} \leq \frac{3 (C + l)}{m}.
         \end{align*}
         Then note that since $\Pr[ E_i^2 | E_i^1  ] \geq  9 / 10$, and 
         \begin{equation*}
             \E_{Z, \rho_J, E_i^2 | E_i^1 = 1 } \left[ D ( S_{\rho_J}^i |_{Z, E_i^2}  || S_{\rho_J}^i ) \right] = I ( S_{\rho_J}^i ; Z, \rho_J,  E_i^2 | E_i^1 = 1 ),
         \end{equation*}
         this completes the proof of the claim, by writing $E_i = E_i^1 \wedge E_i^2$.
     \end{proof}

     We will need to translate the bound on $S^i_{\rho_J}$ from \pref{cl:little_info_s} using the following claim.

     \begin{claim} \label{cl:s_l_1}
         \begin{equation*}
             \E_{Z, \rho_J | G_{\rho_J} = 1, E_i = 1 } \left[ \norm{ S_{\rho_J}^i |_{Z, \rho_J , G_{\rho_J} = 1,  E_i = 1 }  - S_{\rho_J}^i }_1 \right] \leq 5 \sqrt{ \frac{(C + l)}{m} } + \frac{200}{n^2}
         \end{equation*}
     \end{claim}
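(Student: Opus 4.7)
The plan is to convert the KL-divergence bound from Claim~\ref{cl:little_info_s} into an $L_1$-distance bound via Pinsker's inequality (Fact~\ref{fact:pinsker}) and Jensen's inequality, and then transfer the conditioning from $E_i = 1$ to $G_{\rho_J} = 1 \wedge E_i = 1$ while paying only an additive $O(1/n^2)$ error for the change of measure.

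First I would apply Pinsker pointwise in $(Z, \rho_J)$ and then use Jensen (concavity of $\sqrt{\cdot}$) together with Claim~\ref{cl:little_info_s} to get
\[
\E_{Z, \rho_J | E_i = 1}\!\left[ \norm{ S_{\rho_J}^i |_{Z, \rho_J, E_i = 1} - S_{\rho_J}^i }_1 \right] \leq \sqrt{ \tfrac{2}{\log e} \cdot \tfrac{5 (C + l)}{m} } \leq 3 \sqrt{ \tfrac{C + l}{m} }.
\]
This is the core estimate under the weaker conditioning on $E_i = 1$ alone.

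The main obstacle is that the target conditioning $G_{\rho_J} = 1 \wedge E_i = 1$ simultaneously reweights the outer measure over $(Z, \rho_J)$ and shifts the inner distribution on $S_{\rho_J}^i$, since $G_{\rho_J}$ depends on $\vec{S}$ and hence on $S_{\rho_J}^i$. To disentangle these, let $\pi_1 := S_{\rho_J}^i |_{Z, \rho_J, E_i = 1}$ and $\pi_2 := S_{\rho_J}^i |_{Z, \rho_J, G_{\rho_J} = 1, E_i = 1}$. By Bayes' rule, $\pi_2(s) = \pi_1(s) \cdot p(s) / p$, where $p(s) := \Pr[G_{\rho_J} = 1 | S_{\rho_J}^i = s, Z, \rho_J, E_i = 1]$ and $p := \E_{s \sim \pi_1}[p(s)] = \Pr[G_{\rho_J} = 1 | Z, \rho_J, E_i = 1]$. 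Since $p(s), p \in [0,1]$ and $\E_{s \sim \pi_1}[p(s) - p] = 0$, a short calculation yields $\norm{\pi_2 - \pi_1}_1 \leq 2(1 - p)/p$, and the triangle inequality gives $\norm{\pi_2 - S_{\rho_J}^i}_1 \leq \norm{\pi_2 - \pi_1}_1 + \norm{\pi_1 - S_{\rho_J}^i}_1$.

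To finish, I would change the outer measure via the identity
\[
\E_{Z, \rho_J | G_{\rho_J} = 1, E_i = 1}[f] \;=\; \frac{\E_{Z, \rho_J | E_i = 1}[\, p \cdot f \,]}{\Pr[G_{\rho_J} = 1 | E_i = 1]},
\]
using $\Pr[G_{\rho_J} = 1 | E_i = 1] \geq |\cQ'_i|/|\cQ_i| \geq 1 - 10/n^2$ since $\rho_J$ is uniform in $\cQ_i$ given $E_i = 1$. Applied to $f = \norm{\pi_1 - S_{\rho_J}^i}_1$ with $p \leq 1$, the expectation inflates by at most a factor $(1 - 10/n^2)^{-1}$. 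Applied to $f = 2(1-p)/p$, the $p$'s cancel in the numerator, giving
\[
\E_{Z, \rho_J | G_{\rho_J} = 1, E_i = 1}\!\left[ \tfrac{2(1-p)}{p} \right] \;\leq\; \tfrac{2 \cdot \E_{Z, \rho_J | E_i = 1}[1 - p]}{1 - 10/n^2} \;\leq\; \tfrac{40}{n^2}.
\]
Combining the two contributions and absorbing the $(1 - 10/n^2)^{-1}$ slack into a loose leading constant yields the advertised bound $5\sqrt{(C+l)/m} + 200/n^2$. The Bayes-rule computation of $\norm{\pi_2 - \pi_1}_1$ is the step requiring genuine care; the numerics are forgiving because both constants in the statement ($5$ and $200$) are quite generous.
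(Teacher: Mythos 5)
Your proof is correct and follows essentially the same route as the paper: Pinsker plus Jensen applied to Claim~\ref{cl:little_info_s} for the main term, a triangle inequality splitting off the shift caused by conditioning on $G_{\rho_J}=1$, and an $O(n^{-2})$ accounting for that conditioning using $\Pr[G_{\rho_J}=0 \mid E_i=1]\leq 10/n^2$. The only difference is bookkeeping: you use the exact Bayes change-of-measure identity (with the $p$'s cancelling), whereas the paper bounds the conditioning shift pointwise by $2\Pr[G_{\rho_J}=0\mid Z,\rho_J,E_i=1]$ and pays additive total-variation corrections when switching the outer expectation; both give the claimed constants comfortably.
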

     \begin{proof}
         We will upper bound $\norm{ S_{\rho_J}^i |_{Z, \rho_J , G_{\rho_J} = 1,  E_i = 1 }  - S_{\rho_J}^i }_1$ using the triangle inequality, namely
         \begin{align}
             \norm{ S_{\rho_J}^i |_{Z, \rho_J , G_{\rho_J} = 1,  E_i = 1 }  - S_{\rho_J}^i }_1 & \leq \norm{ S_{\rho_J}^i |_{Z, \rho_J , G_{\rho_J} = 1, E_i = 1 }  - S_{\rho_J}^i |_{Z, \rho_J, E_i = 1 } }_1 \label{eq:s_first_bound} \\
             & + \norm{ S_{\rho_J}^i |_{Z, \rho_J, E_i = 1 } - S_{\rho_J}^i }_1 \label{eq:s_second_bound}
         \end{align}
         First, we bound \pref{eq:s_first_bound}. If we consider any fixed $Z, \rho_J$,
         \begin{equation*}
             \norm{ S_{\rho_J}^i |_{Z, \rho_J , G_{\rho_J} = 1, E_i = 1 }  - S_{\rho_J}^i |_{Z, \rho_J, E_i = 1 } }_1 \leq 2 \cdot \Pr[ G_{\rho_J} = 0 | Z, \rho_J , E_i = 1].
         \end{equation*}
         Then taking expectation over $Z, \rho_J$
         \begin{align*}
             & \E_{Z, \rho_J | G_{\rho_J} = 1, E_i = 1 } \left[ \pref{eq:s_first_bound} \right] \leq 2 \cdot \E_{Z, \rho_J | G_{\rho_J} = 1, E_i = 1 } \left[ \Pr[ G_{\rho_J} = 0 | Z, \rho_J , E_i = 1] \right] \\
             & \leq 2 \cdot \left( \E_{Z, \rho_J | E_i = 1 } \left[ \Pr[ G_{\rho_J} = 0 | Z, \rho_J , E_i = 1] \right] + 2 \Pr[ G_{\rho_J} = 0 | E_i = 1 ] \right) \\
             & = 6 \cdot \Pr[ G_{\rho_J} = 0 | E_i = 1 ] \leq \frac{60}{n^2}.
         \end{align*}
         If we take expectation over \pref{eq:s_second_bound}, 
         \begin{align*}
             & \E_{Z, \rho_J | G_{\rho_J} = 1, E_i = 1 } \left[ \pref{eq:s_second_bound} \right] \leq \E_{Z, \rho_J |  E_i = 1 } \left[ \pref{eq:s_second_bound} \right] + 4 \Pr[ G_{\rho_J} = 0 | E_i = 1 ]  \\
             & \leq \E_{Z, \rho_J | E_i = 1 } \left[ \norm{ S_{\rho_J}^i |_{Z, \rho_J, E_i = 1 }  - S_{\rho_J}^i }_1 \right] + \frac{40}{n^2}.
         \end{align*}
         The bound on $ \E_{Z, \rho_J | E_i = 1 } \left[ \norm{ S_{\rho_J}^i |_{Z, \rho_J, E_i = 1 }  - S_{\rho_J}^i }_1 \right]$ can be attained immediately from \pref{cl:little_info_s} due to Pinsker's inequality (\pref{fact:pinsker}) and Jensen's inequality.
         \begin{equation*}
             \E_{Z, \rho_J | E_i = 1 } \left[ \norm{ S_{\rho_J}^i |_{Z, \rho_J, E_i = 1 }  - S_{\rho_J}^i }_1 \right] \leq 2 \sqrt{ \E_{Z, \rho_J | E_i = 1 } \left[  D ( S_{\rho_J}^i |_{Z, \rho_J, E_i = 1 }|| S_{\rho_J}^i )  \right] } \leq 2 \sqrt{\frac{5 (C + l)}{m}}
         \end{equation*}
         Combining all the bounds, we obtain 
         \begin{equation*}
             \E_{Z, \rho_J | G_{\rho_J} = 1, E_i = 1 } \left[ \norm{ S_{\rho_J}^i |_{Z, \rho_J , G_{\rho_J} = 1,  E_i = 1 }  - S_{\rho_J}^i }_1 \right] \leq 5 \cdot \sqrt{ \frac{ (C + l)}{m} } + \frac{100}{n^2}
         \end{equation*}
         completing the proof of the claim.
     \end{proof}

     We need the following claim as well to translate the advantage of the protocol to one-way discrepancy of the underlying function, 
   
     \begin{claim} \label{cl:one-way_to_rectangle}
     Let $A$ and $B$ be some random variable which satisfies
     \begin{align*}
         & \norm{ S_{\rho_J}^i |_{A=a} - S_{\rho_J}^i }_1 \leq \delta_1 \\
         & \E_{ S_{\rho_J}^i } \left[  \abs{ \E_{T_i |_{A = a, B = b} } \left[  \overline{f}_i ( S_{\rho_J}^i , T_i ) \right] } \right] \leq \delta_2 
     \end{align*} 
     then 
     \begin{equation*}
         \E_{ S_{\rho_J}^i |_{A =a, B = b}  } \left[  \abs{ \E_{T_i |_{A =a, B = b} } \left[ \overline{f}_i ( S_{\rho_J}^i , T_i ) \right] } \right] \leq \frac{ \delta_2 + \delta_1 }{\Pr[ B = b | A =a ]}
     \end{equation*}
     \end{claim}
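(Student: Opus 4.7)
The plan is to reduce the target expectation (over $S_{\rho_J}^i|_{A=a,B=b}$) to expectations over the two simpler distributions already controlled by the hypotheses: $S_{\rho_J}^i|_{A=a}$ (controlled via the $\ell_1$ bound $\delta_1$) and the marginal $S_{\rho_J}^i$ (controlled via the $\delta_2$ bound). Throughout the argument, abbreviate the non-negative quantity to be bounded as
\[
h(s) \;:=\; \Bigl|\,\E_{T_i|_{A=a,B=b}}[\overline{f}_i(s,T_i)]\,\Bigr|.
\]
A preliminary observation is that $\|h\|_\infty \le 1$: under the WLOG convention $\Pr_{T_i}[f(S_Q,T)=+1] \ge 1/2$ the definition of $\overline{f}_i$ gives $|\overline{f}_i|\le 1$, so any conditional expectation of it also lies in $[-1,+1]$.

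First I would apply Bayes' rule to ``remove'' the conditioning on $B=b$:
\[
\Pr[S_{\rho_J}^i = s \mid A=a,B=b] \;=\; \frac{\Pr[B=b \mid S_{\rho_J}^i = s, A=a]\cdot \Pr[S_{\rho_J}^i = s \mid A=a]}{\Pr[B=b\mid A=a]}.
\]
Multiplying by $h(s)$ and summing, the factor $\Pr[B=b \mid S_{\rho_J}^i = s, A=a] \le 1$ may be discarded since $h\ge 0$, yielding
\[
\E_{S_{\rho_J}^i \mid A=a,B=b}[h] \;\le\; \frac{\E_{S_{\rho_J}^i \mid A=a}[h]}{\Pr[B=b\mid A=a]}.
\]

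Second, I would bound the numerator by the triangle inequality,
\[
\E_{S_{\rho_J}^i \mid A=a}[h] \;\le\; \E_{S_{\rho_J}^i}[h] + \Bigl|\E_{S_{\rho_J}^i \mid A=a}[h] - \E_{S_{\rho_J}^i}[h]\Bigr|.
\]
The first term is exactly $\le \delta_2$ by the second hypothesis. For the second term, the duality between $\ell_\infty$ and $\ell_1$ gives
\[
\Bigl|\E_{S_{\rho_J}^i \mid A=a}[h] - \E_{S_{\rho_J}^i}[h]\Bigr| \;\le\; \|h\|_\infty \cdot \bigl\|\,S_{\rho_J}^i|_{A=a} - S_{\rho_J}^i\,\bigr\|_1 \;\le\; 1\cdot \delta_1.
\]
Combining the two displays yields $\E_{S_{\rho_J}^i \mid A=a}[h] \le \delta_2 + \delta_1$, and plugging into the Bayes step gives the claimed bound.

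The proof is essentially a two-line rearrangement once one spots the right decomposition, so no step is genuinely hard. The only place one has to pay attention is verifying that the ``test function'' $h$ is bounded by $1$ in sup norm; this relies on the WLOG choice of sign made when defining $\overline{f}_i$, without which the final bound would inherit a spurious $1/\beta$ factor.
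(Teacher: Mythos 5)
Your proposal is correct and is essentially the paper's own argument: the paper likewise uses $\abs{\E_{T_i|_{A=a,B=b}}[\overline{f}_i(\cdot,T_i)]}\le 1$ together with the $\ell_1$ bound to pass from $S_{\rho_J}^i|_{A=a}$ to the marginal (giving $\delta_2+\delta_1$), and then a ``simple Markov argument'' — identical to your Bayes step using $\Pr[B=b\mid S,A=a]\le 1$ and nonnegativity — to introduce the conditioning on $B=b$ at the cost of dividing by $\Pr[B=b\mid A=a]$. The only difference is the order in which the two steps are presented, and that you spell out the sup-norm bound slightly more explicitly.
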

     \begin{proof}
         As $\abs{ \E_{T_i |_{A =a, B = b} } \left[  \overline{f}_i ( S_{\rho_J}^i , T_i ) \right] } \leq 1$ for any setting of $S_{\rho_J}^i$, 
         \begin{align*}
             \E_{ S_{\rho_J}^i |_{A =a} } \left[  \abs{ \E_{T_i |_{A =a, B = b} } \left[  \overline{f}_i ( S_{\rho_J}^i , T_i ) \right] } \right] &\leq 
             \E_{ S_{\rho_J}^i  } \left[  \abs{ \E_{T_i |_{A =a, B = b} } \left[  \overline{f}_i ( S_{\rho_J}^i , T_i ) \right] } \right]
             + \norm{S_{\rho_J}^i |_{ A =a} - S_{\rho_J}^i}_1 \\ 
            & \leq \delta_2 +  \delta_1
         \end{align*}
         Finally, since $\Pr[ B = b | A =a ] > 0$, and $\abs{ \E_{T_i |_{A =a, B = b} } \left[ \overline{f}_i ( S_{\rho_J}^i , T_i ) \right] } \geq 0$, due to a simple Markov argument, we get 
         \begin{align*}
             \E_{ S_{\rho_J}^i |_{A =a, B = b}  } \left[  \abs{ \E_{T_i |_{A =a, B = b} } \left[ \overline{f}_i ( S_{\rho_J}^i , T_i ) \right] } \right] \leq \frac{\delta_2 + \delta_1 }{\Pr[ B = b | A =a ]} 
         \end{align*}
         which completes the proof of the claim.
     \end{proof}

We prove the final claim of this section, where we argue that $Z, B_{\rho_J}, G_{\rho_J}, C_{\rho_J}, E_i$ introduce little correlation between $S_{\rho_J}^i$ and $T_i$.
 
     \begin{claim}[Low Correlation] \label{cl:low_correlation}
         \begin{equation*}
             I ( S_{\rho_J}^i ; T_i | Z, \rho_J, B_{\rho_J}, G_{\rho_J} = 1, E_i = 1) \leq \frac{4C}{l} + \frac{25}{n}
         \end{equation*} 
     \end{claim}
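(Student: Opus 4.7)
\begin{proofsketch}
I plan to prove \pref{cl:low_correlation} by an information-theoretic direct-sum argument in the spirit of \cite{ko_adaptive_2020, ko_lower_2025}. First, I condition on $E_i^1$ as public randomness: since $E_i^1$ depends only on $(\vec{S}^{-i}, T_{-i})$, this fixes those variables without perturbing the product distribution of $(\vec{S}^i, T_i)$, and the key independence $T_i \perp S^i_{\cP} \mid \cP$ is preserved. Using this independence and $|M| \leq C$, the direct-sum step becomes
\[
I(T_i; S^i_{\cP} \mid M, \cP) \;=\; I(M; S^i_{\cP} \mid T_i, \cP) - I(M; S^i_{\cP} \mid \cP) \;\leq\; H(M) \;\leq\; C,
\]
and the chain rule $\sum_{j=1}^l I(T_i; S^i_{\rho_j} \mid M, \cP, S^i_{\rho_{<j}}) = I(T_i; S^i_{\cP} \mid M, \cP)$ together with averaging over the uniform $J \in [l]$ yields
\[
I(T_i; S^i_{\rho_J} \mid M, \cP, S^i_{\rho_{<J}}, J) \;\leq\; \frac{C}{l}.
\]
The coordinates of $\cP_{>J}$ are sampled independently of $(T_i, \vec{S}^i)$, so adding or removing them from the conditioning does not affect the mutual information.

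Next I augment the conditioning by $B_{\rho_{<J}}, G_{\rho_{<J}}$ and $B_{\rho_J}$. After fixing $\vec{S}^{-i}, T_{-i}$ via $E_i^1$, each $B_{\rho_j}$ is Alice's deterministic output on query $\rho_j$ and hence a function of $(M, \rho_j, S^i_{\rho_j})$, while each $G_{\rho_j}$ is the indicator of $\rho_j \in \cQ'_i$, a function of $(\rho_j, S^i_{\rho_j})$. For any random variable $F$ that is a deterministic function of the current conditioning together with $S^i_{\rho_J}$, the identity
\[
I(T_i; S^i_{\rho_J} \mid \cdot, F) \;=\; I(T_i; S^i_{\rho_J} \mid \cdot) - I(T_i; F \mid \cdot)
\]
shows that conditioning further on $F$ can only decrease the mutual information with $T_i$. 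Iterating this for the relevant $B$'s and $G$'s gives $I(T_i; S^i_{\rho_J} \mid Z, \rho_J, B_{\rho_J}) \leq C/l$.

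The final and hardest step is to incorporate the event $\cE := \{G_{\rho_J} = 1\} \cap \{E_i^2 = 1\}$. By \pref{cl:good_event}, $\cE$ is measurable with respect to $(\vec{S}^i, \rho_J)$ and satisfies $\Pr[\cE \mid E_i^1] \geq \tfrac{9}{10}(1 - 10/n^2)$, with the much stronger bound $\Pr[G_{\rho_J} = 0 \mid E_i] \leq 10/n^2$. A naive application of $I(A;B \mid W, \cE) \leq (I(A;B \mid W) + H(\chi_\cE))/\Pr[\cE]$ would lose a constant additive term from $H(\chi_\cE)$, which is too weak for the claim. The fix is to split $\cE$ and handle its two components separately: the $\Omega(1)$ slack from $E_i^2$ is absorbed into the multiplicative factor of $4$ (since $E_i^2$ depends only on $\vec{S}^i$ and is a priori independent of $T_i$, renormalizing costs only a constant), while the much smaller $O(1/n^2)$ slack from $G_{\rho_J} = 0$, combined with the $O(n)$ entropy of $S^i_{\rho_J}$, produces only the $O(1/n)$ additive correction. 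Putting these together yields the claimed $4C/l + 25/n$. The main obstacle is precisely this event-removal step, as any generic conversion from unconditional to conditional mutual information loses constants that are absent from the target bound.
\end{proofsketch}
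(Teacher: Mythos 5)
Your skeleton (direct sum over the random positions $\cP$, chain rule, $|M|\le C$, independence of $T_i$ from $\vec{S}^i$) matches the paper's, but two steps in your plan have genuine gaps. First, your reduction to conditioning hinges on the assertion that $B_{\rho_j}$ is a deterministic function of $(M,\rho_j,S^i_{\rho_j})$. That is false: in \pref{prot:moc} Alice is given the \emph{entire} $\vec{S}$, so once $E_i^1$ fixes $(\vec{S}^{-i},T_{-i})$, her answer $B_{\rho_j}$ is a function of $(M,\rho_j,\vec{S}^i)$, which involves coordinates $S^i_{Q'}$ for $Q'\neq\rho_j$ that are \emph{not} in your conditioning. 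Hence the identity $I(T_i;S^i_{\rho_J}\mid W,F)=I(T_i;S^i_{\rho_J}\mid W)-I(T_i;F\mid W)$ does not apply to $F=B_{\rho_J}$ (nor does adding $B_{\rho_{<J}}$ come for free), and conditioning on $B_{\rho_J}$ could a priori \emph{increase} the correlation. The paper avoids this by never conditioning on the $B$'s: it bounds $I(S^i_{\rho_J};T_i\mid\cdot,B_{\rho_J},\cdot)\le I(S^i_{\rho_J}B_{\rho_J};T_i\mid\cdot)$, carries the $B$'s through the chain rule jointly to $\tfrac1l I(S^i_{\cP}B_{\cP};T_i\mid M,\cP,E_i=1)$, and then uses $I(B_{\cP};T_i\mid M,\vec{S}^i,\cP,E_i=1)=0$ because $(M,\vec{S}^i)$ determine all answers. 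You would need this (or an equivalent) bundling argument.

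Second, your treatment of $E_i^2$ is not a proof. You correctly note that the generic event-conditioning bound loses an additive $H(\chi_{\cE})\le 1\gg C/l$, but your fix --- ``renormalizing costs only a constant since $E_i^2$ depends only on $\vec{S}^i$ and is a priori independent of $T_i$'' --- does not address the additive term: the a priori independence of $E_i^2$ from $T_i$ does not survive conditioning on $M$ (which sits inside $Z$ and depends on both $\vec{S}$ and $T_i$), and $E_i^2$ is a function of all of $\vec{S}^i$, not of $(Z,\rho_J,S^i_{\rho_J})$, so no ``deterministic function of the non-$T_i$ side'' argument applies at that stage. The paper sidesteps this entirely by keeping $E_i=E_i^1\wedge E_i^2$ in the conditioning from the start of the direct-sum computation, using that $E_i$ is determined by $(\vec{S},T_{-i})$ so that $I(\vec{S}^i;T_i\mid E_i=1,\cP)=0$ and the $C$-bit message bound holds conditionally with no loss; the only event it must remove is $G_{\rho_J}=1$, which it handles via the auxiliary event $C_{\rho_J}$ (the set of $(Z,\rho_J)$ where $\Pr[G_{\rho_J}=1\mid Z,\rho_J,E_i=1]\ge 1/2$, which fails with probability $O(n^{-2})$ by Markov). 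The factor $4$ in \pref{cl:low_correlation} comes from those two factor-$2$ renormalizations tied to $G_{\rho_J}$ and $C_{\rho_J}$, and the $O(1/n)$ term from the low-probability branch times the trivial $n_i$ bound --- not from absorbing $E_i^2$, as your sketch asserts. As written, the plan would need to be restructured along these lines to close the bound.
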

     \begin{proof}

         Towards the proof, we introduce an auxiliary event, $C_{\rho_J}$ for the analysis, which is true if a setting of $Z, \rho_J$ is picked such that
         \begin{equation*}
             1 - \Pr[ G_{\rho_J} | Z, \rho_J, E_i = 1] < \frac{1}{2}.
         \end{equation*}
         Then recall that due to our definition of $E_i^2$ (from \pref{cl:good_event}),
         \begin{equation*}
             \E_{Z, \rho_J | E_i = 1} \left[ 1 - \Pr[ G_{\rho_J} | Z, \rho_J, E_i = 1] \right] \leq \frac{10}{n^2}.
         \end{equation*}
         Due to a simple Markov argument
         \begin{equation}
             1 - \Pr_{Z,\rho_J | E_i = 1} [ C_{\rho_J} | E_i = 1 ] < \frac{20}{n^2}
         \end{equation}

         With $C_{\rho_J}$ defined, the bound on $I ( S_{\rho_J}^i ; T_i | Z, \rho_J, B_{\rho_J}, G_{\rho_J} = 1, E_i = 1)$ can be decomposed as
         \begin{align}
             & I ( S_{\rho_J}^i ; T_i | Z, \rho_J, B_{\rho_J}, G_{\rho_J} = 1, E_i = 1) \leq  I ( S_{\rho_J}^i B_{\rho_J} ; T_i | Z, \rho_J,  G_{\rho_J} = 1, E_i = 1) \nonumber \\
             & = \Pr[ C_{\rho_J} = 1 | G_{\rho_J} = 1, E_i = 1] \cdot I ( S_{\rho_J}^i B_{\rho_J} ; T_i | Z, \rho_J,  G_{\rho_J} = 1, C_{\rho_J} = 1, E_i = 1 ) \label{eq:c_1} \\
               & + \Pr[ C_{\rho_J} = 0 | G_{\rho_J} = 1, E_i = 1] \cdot I ( S_{\rho_J}^i B_{\rho_J} ; T_i | Z, \rho_J,  G_{\rho_J} = 1, C_{\rho_J} = 0, E_i = 1 ) \label{eq:c_0}
         \end{align}
         Then we bound \pref{eq:c_1} and \pref{eq:c_0} separately. For \pref{eq:c_0}, observe that for any fixed $Z, \rho_J$
         \begin{align*}
             & I ( S_{\rho_J}^i B_{\rho_J} ; T_i | Z, \rho_J,  G_{\rho_J} = 1, C_{\rho_J} = 0, E_i = 1 )  \leq n_i 
         \end{align*}
         while from a simple application of Bayes's rule,
         \begin{equation*}
             \Pr[ C_{\rho_J} = 0 | G_{\rho_J} = 1, E_i = 1] = \frac{\Pr[ C_{\rho_J} = 0 \wedge G_{\rho_J} = 1 | E_i = 1] }{\Pr[ G_{\rho_J} = 1 | E_i = 1]  } \leq \frac{ \frac{20}{n^2} \cdot \frac{1}{2} }{1 - \frac{10}{n^2}} \leq \frac{20}{n^2}
         \end{equation*}
         which therefore leads to a bound of 
         \begin{equation}
             \pref{eq:c_0} \leq \frac{20 n}{n^2} = \frac{20}{n}. 
         \end{equation}

         On the other hand, \pref{eq:c_1} can be bounded by
          \begin{align*}
             & I ( S_{\rho_J}^i B_{\rho_J} ; T_i | Z, \rho_J, G_{\rho_J} = 1, C_{\rho_J} = 1, E_i = 1 ) \\
             & \leq \frac{ I ( S_{\rho_J}^i B_{\rho_J} ; T_i | Z, \rho_J, G_{\rho_J}, C_{\rho_J} = 1, E_i = 1 ) }{\Pr[G_{\rho_J} = 1 | Z, \rho_J, C_{\rho_J} = 1, E_i = 1 ]}  \leq 2 \cdot I ( S_{\rho_J}^i B_{\rho_J} ; T_i | Z, \rho_J, G_{\rho_J}, C_{\rho_J} = 1, E_i = 1 ) 
         \end{align*}
         where the bound on $\Pr[G_{\rho_J} = 1 | Z, \rho_J, C_{\rho_J} = 1, E_i = 1 ]$ follows from our definition of $C_{\rho_J}$. Then $I ( S_{\rho_J}^i B_{\rho_J} ; T_i | Z, \rho_J, G_{\rho_J}, C_{\rho_J} = 1, E_i = 1 )$ can be bounded as
         \begin{align*}
             & I ( S_{\rho_J}^i B_{\rho_J} ; T_i | Z, \rho_J, G_{\rho_J}, C_{\rho_J} = 1, E_i = 1 ) \\
             & \leq I ( S_{\rho_J}^i B_{\rho_J} ; T_i | Z, \rho_J, C_{\rho_J} = 1, E_i = 1 ) +  \underbrace{H( G_{\rho_J} | Z, \rho_J, C_{\rho_J} = 1, E_i = 1 )}_{\leq n^{-1}} \\
             & \leq \frac{ I ( S_{\rho_J}^i B_{\rho_J} ; T_i | Z, \rho_J,  E_i = 1 ) }{\Pr[ C_{\rho_J} = 1 | E_i = 1  ]} + \frac{1}{n} \leq 2 \cdot I ( S_{\rho_J}^i B_{\rho_J} ; T_i | Z, \rho_J, E_i = 1 ) + \frac{1}{n}
         \end{align*}
         where the second inequality follows from
         \begin{equation*}
            \Pr[C_{\rho_J} = 1 | E_i = 1] \cdot  I ( S_{\rho_J}^i B_{\rho_J} ; T_i | Z, \rho_J, C_{\rho_J} = 1, E_i = 1 ) \leq I ( S_{\rho_J}^i B_{\rho_J} ; T_i | Z, \rho_J, E_i = 1 ).
         \end{equation*}
         Then we bound the $I ( S_{\rho_J}^i B_{\rho_J} ; T_i | Z, \rho_J, E_i = 1 )$ term. 
         \begin{align*}
             & I ( S_{\rho_J}^i B_{\rho_J} ; T_i | Z, \rho_J, E_i = 1 ) = I ( S_{\rho_J}^i B_{\rho_J} ; T_i | M, \rho_{<J}, \rho_J , B_{\rho_{ < J }}, S_{\rho_{ < J }}^i, J, E_i = 1 ) \\
             & \leq I ( S_{\rho_J}^i B_{\rho_J} ; T_i | M, \cP , B_{\rho_{ < J }}, S_{\rho_{ < J }}^i, J,  E_i = 1 ) \\
             & = \frac{1}{l} \sum_{j=1}^{l} I ( S_{\rho_j}^i B_{\rho_j} ; T_i | M, E_i = 1 , \cP, B_{\rho_{ < j }}, S_{\rho_{ < j }}^i ) \\
             & = \frac{1}{l} I ( S^i_{\cP} B_{\cP} ; T_i | M, E_i = 1 , \cP ) \leq \frac{I (\vec{S}^i, B_{\cP}; T_i | M, E_i= 1 , \cP ) }{l}
         \end{align*}
         where the first inequality holds as $\cP$ is chosen independently at random, thus conditioning on $\rho_{>J}$ increases mutual information from \pref{fact:chainrule1}. Then the $I (\vec{S}^i, B_{\cP} ; T_i | M, E_i= 1, \cP )$ term can be bounded by
         \begin{align*}
             I (\vec{S}^i, B_{\cP}  ; T_i | M, E_i= 1, \cP )  & = \underbrace{I ( \vec{S}^i ; T_i | M, E_i= 1, \cP )}_{ \leq C } + \underbrace{I ( B_{\cP} ; T_i | M, \vec{S}^i, E_i= 1, \cP )}_{=0} \leq C
         \end{align*}
         where $I ( B_{\cP}  ; T_i | M, E_i= 1, \vec{S}^i, \cP ) = 0$ as $M, \vec{S}^i$ fully determine $B_i, G_i$'s for all $i \in [m]$, and 
         \begin{equation*}
             I ( \vec{S}^i ; T_i | M, E_i= 1, \cP ) \leq I ( \vec{S}^i ; M, T_i | E_i= 1, \cP ) =  \underbrace{I ( \vec{S}^i; T_i | E_i= 1, \cP )}_{=0} + I ( \vec{S}^i ; M | T_i, E_i= 1, \cP ) \leq C.
         \end{equation*}
         as $E_i$ is determined by $\vec{S}, T_{-i}$, and the distribution over $T_i$ remains unchanged.
         Therefore, we get that
         \begin{equation*}
             I ( S_{\rho_J}^i B_{\rho_J} ; T_i | Z, \rho_J, G_{\rho_J}, C_{\rho_J} = 1, E_i = 1 ) \leq \frac{2 C}{l} + \frac{1}{n}
         \end{equation*}
         which then implies the bound on \pref{eq:c_1} as
         \begin{equation*}
             \pref{eq:c_1} \leq 2 \left( \frac{2 C}{l} + \frac{1}{n} \right) = \frac{4C}{l} + \frac{2}{n}.
         \end{equation*}
         This completes the proof of the claim as 
         \begin{align*}
             I ( S_{\rho_J}^i B_{\rho_J} ; T_i | Z, \rho_J, G_{\rho_J} = 1,  E_i = 1 ) \leq \frac{4C}{l} + \frac{2}{n} + \frac{20}{n} \leq \frac{4C}{l} + \frac{25}{n}
         \end{align*}
     \end{proof}

     \subsubsection{Proof of \pref{thm:main_combinatorial}}

     With all the technical ingredients in place, we are now ready to prove \pref{thm:main_combinatorial}, which then contradicts \pref{thm:simulation}.
    
     \begin{reptheorem}{thm:main_combinatorial} 
     If $|M| \leq C = \frac{n_i}{ \poly \log n}$,  
     the maximum advantage Bob can attain, conditioned on an event $E_i$ determined by $\vec{S},T_{-i}$ is
      \begin{equation*}
         \adv ( G_{\overline{f}_i} , \cD_{\cQ'_i}, \cD_{T_i}, C ) \leq n^{-\Omega(1)}
     \end{equation*}
     \end{reptheorem}    
     \begin{proof}
         Suppose for every setting of $\vec{S}$, conditioned on the event $E_i$, there exists a one-way message $M$ and guessed value $B_Q$ (depending on $M$ and $Q$) such that 
         \begin{equation*}
         \E_{M, Q \in \cQ'_i} \left[  \abs{ \E_{T_i } \left[ B_Q \cdot \overline{f}_i (S_Q, T_i) \right] }  \right] \geq n^{-o(1)}
         \end{equation*}
         which then implies 
         \begin{equation*}
             \E_{\vec{S}, M, Q \in \cQ'_i | E_i } \left[  \abs{ \E_{T_i } \left[ B_Q \cdot \overline{f}_i (S_Q, T_i) \right] }  \right] \geq n^{-o(1)}.
         \end{equation*}
         As our setting of random process $Z$ is fully determined by the above $\vec{S}, M, Q$ (modulo independent randomness),  
         \begin{equation} \label{eq:contradiction}
             \E_{Z, B_{\rho_J}, S_{\rho_J}^i | E_i, G_{\rho_J}, \rho_J } \left[  \abs{ \E_{T_i } \left[ B_{\rho_J} \cdot \overline{f}_i (S_{\rho_J}^i, T_i) \right] }  \right] \geq n^{-o(1)}.
         \end{equation} 

         For the rest of the proof, we will show that \pref{eq:contradiction} cannot be the case. Without loss of generality, we will assume the following for any fixed $\rho_J$ and $Z = z$
         \begin{equation} \label{eq:extra_condition} 
             \Pr[ B_Q | Z = z, \rho_J = \rho_j, G_{\rho_J}, E_i ] \geq n^{-0.1}
         \end{equation}
         by incurring a cost of only $n^{-0.1}$ in the advantage due to the following observation: With probability $n^{-0.1}$, flip the guessed value. The loss in the advantage then is at most $n^{-0.1}$, which is polynomially small (compared to $n^{-o(1)}$).

         Recall that we can write the advantage over random guessing as 
         \begin{equation} \label{eq:initial_advantage}
             \E_{Z, \rho_J,B_{\rho_J}, S_{\rho_J}^i | G_{\rho_J}, E_i } \left[  \abs{ \E_{T_i | Z, B_{\rho_J}, \rho_J, S_{\rho_J}^i, G_{\rho_J}, E_i } \left[ B_{\rho_J} \cdot \overline{f}_i (S_{\rho_J}^i, T_i) \right] }  \right].
         \end{equation}
         Then we can upper bound \pref{eq:initial_advantage} as
         \begin{align} 
             & \pref{eq:initial_advantage} \leq \E_{Z, \rho_J, B_{\rho_J}, S_{\rho_J}^i | G_{\rho_J}, E_i  } \left[  \abs{ \E_{T_i | Z, B_{\rho_J}, \rho_J, S_{\rho_J}^i, G_{\rho_J}, E_i } \left[ \overline{f}_i (S_{\rho_J}^i, T_i) \right] }  \right] \nonumber \\
             & = \E_{Z, \rho_J, B_{\rho_J} | G_{\rho_J}, E_i   } \left[ \E_{ S_{\rho_J}^i | Z, B_{\rho_J}, \rho_J,  G_{\rho_J}, E_i } \left[ \abs{ \E_{T_i | S_{\rho_J}^i, Z, B_{\rho_J}, \rho_J,  G_{\rho_J}, E_i } \left[ \overline{f}_i (S_{\rho_J}^i, T_i) \right] } \right] \right] \nonumber \\
             & \leq \E_{Z, \rho_J, B_{\rho_J} | G_{\rho_J}, E_i   } \left[ \E_{ S_{\rho_J}^i | Z, B_{\rho_J}, \rho_J,  G_{\rho_J}, E_i } \left[  \abs{ \E_{T_i | Z, B_{\rho_J}, \rho_J,  G_{\rho_J}, E_i    } \left[ \overline{f}_i (S_{\rho_J}^i, T_i) \right] } \right] \right] \label{eq:discrepancy_intermediate} \\
             & + \E_{Z, \rho_J, B_{\rho_J} | G_{\rho_J}, E_i   } \left[ \E_{ S_{\rho_J}^i | Z, B_{\rho_J}, \rho_J,  G_{\rho_J}, E_i } \left[ \norm{ T_i |_{Z, B_{\rho_J}, \rho_J, S_{\rho_J}^i, G_{\rho_J}, E_i   } -  T_i |_{Z, B_{\rho_J}, \rho_J,  G_{\rho_J}, E_i  } }_1 \right] \right] \label{eq:l_1_difference_intermediate} 
         \end{align}
        Now if we focus on the \pref{eq:l_1_difference_intermediate} term, we get
         \begin{align}
             \pref{eq:l_1_difference_intermediate} & = \E_{Z, \rho_J, B_{\rho_J} | G_{\rho_J}, E_i   } \left[ \norm{S_{\rho_J}^i |_{Z, B_{\rho_J}, \rho_J,  G_{\rho_J}, E_i } \times T_i |_{Z, B_{\rho_J}, \rho_J,  G_{\rho_J}, E_i } - (S_{\rho_J}^i, T_i) |_{Z, B_{\rho_J}, \rho_J,  G_{\rho_J}, E_i } }_1 \right] \nonumber \\
             & \leq \sqrt{ 2 \cdot I ( S_{\rho_J}^i ; T_i | Z, \rho_J, B_{\rho_J}, G_{\rho_J} = 1, E_i = 1) } \leq  20 \sqrt{ \frac{C}{l} + \frac{1}{n} } \label{eq:l_1-difference}
         \end{align}
         due to \pref{cl:low_correlation} and Pinsker's inequality (\pref{fact:pinsker}).

         Next, we proceed to bound \pref{eq:discrepancy_intermediate} using \pref{cl:little_info_s}, and \pref{cl:one-way_to_rectangle}.
         We first consider a fixed $Z = z$ $\rho_J$ and $B_{\rho_J} = b$ conditioned on $E_i, G_{\rho_J}$. That is,
         \begin{align}
             \E_{ S_{\rho_J}^i |  Z = z, B_{\rho_J} = b, \rho_J,  G_{\rho_J}, E_i } \left[  \abs{ \E_{T_i | Z = z, B_{\rho_J} = b, \rho_J,  G_{\rho_J}, E_i    } \left[ \overline{f}_i (S_{\rho_J}^i, T_i) \right] } \right] \label{eq:fixed_z}
         \end{align}
         Now we are ready to use \pref{cl:one-way_to_rectangle} to bound \pref{eq:fixed_z}. \pref{cl:one-way_to_rectangle} implies that for any fixed setting of $Z, \rho_J$ and $B_{\rho_J} = b$, 
         \begin{align*}
               & \E_{ S_{\rho_J}^i | Z = z , \rho_J, B_{\rho_J} = b ,  G_{\rho_J}, E_i } \left[ \abs{ \E_{T_i | Z = z , \rho_J, B_{\rho_J} = b ,  G_{\rho_J}, E_i  } \left[ \overline{f}_i (S_{\rho_J}^i, T_i) \right] }  \right] \\
               & \leq \frac{  \norm{ S_{\rho_J}^i |_{Z=z, \rho_J, E_i, G_{\rho_J}} - S_{\rho_J}^i }_1 + \E_{ S_{\rho_J}^i} \left[ \abs{ \E_{T_i | Z = z , \rho_J, B_{\rho_J} = b ,  G_{\rho_J}, E_i } \left[ \overline{f}_i (S_{\rho_J}^i, T_i) \right] }  \right] }{\Pr[ B_{\rho_J} = b | Z = z, \rho_J, G_{\rho_J}, E_i ]} 
         \end{align*}
         
         Now we would like to take expectation over $Z$ and $B_{\rho_J}$ and $\rho_J$ conditioned on the event $G_{\rho_J}$ and $ E_i $.
         Decomposing the upper bound for \pref{eq:discrepancy_intermediate}, we obtain
         \begin{align}
             \pref{eq:discrepancy_intermediate} & \leq \E_{Z, \rho_J, B_{\rho_J} | G_{\rho_J}, E_i   } \left[ \frac{ \norm{ S_{\rho_J}^i |_{Z=z, \rho_J, E_i, G_{\rho_J}} - S_{\rho_J}^i }_1 }{\Pr[ B_{\rho_J} = b | Z = z, \rho_J, G_{\rho_J} = 1, E_i ]}  \right] \label{eq:S_div} \\
             & + \E_{Z, \rho_J, B_{\rho_J} | G_{\rho_J}, E_i   } \left[ \frac{ \E_{ S_{\rho_J}^i} \left[ \abs{ \E_{T_i | Z = z , \rho_J, B_{\rho_J} = b ,  G_{\rho_J}, E_i } \left[ \overline{f}_i (S_{\rho_J}^i, T_i) \right] }  \right] }{\Pr[ B_{\rho_J} = b | Z = z, \rho_J, G_{\rho_J} = 1, E_i ]}  \right] \label{eq:hardness_f} 
         \end{align}

         The bound on \pref{eq:S_div} follows immediately from \pref{cl:s_l_1}.
         \begin{align*}
             \pref{eq:S_div} & \leq 5 \cdot \sqrt{ \frac{ (C + l)}{m} } + \frac{100}{n^2} 
         \end{align*}
         While for \pref{eq:hardness_f}, rewriting the expectation and applying \pref{eq:extra_condition}, we get
         \begin{align*}
             \pref{eq:hardness_f} & \leq n^{0.1} \cdot \E_{Z, \rho_J, B_{\rho_J} | G_{\rho_J}, E_i   } \left[ \E_{ S_{\rho_J}^i} \left[ \abs{ \E_{T_i | Z = z , \rho_J, B_{\rho_J} ,  G_{\rho_J}, E_i } \left[ \overline{f}_i (S_{\rho_J}^i, T_i) \right] }  \right]   \right].
         \end{align*}
         Now we would like to use the assumption about $f_i$ to upper bound:
         \begin{align}
         & \E_{Z, \rho_J, B_{\rho_J} | G_{\rho_J}, E_i   } \left[ \E_{ S_{\rho_J}^i} \left[ \abs{ \E_{T_i | Z = z , \rho_J, B_{\rho_J} ,  G_{\rho_J}, E_i } \left[ \overline{f}_i (S_{\rho_J}^i, T_i) \right] }  \right]   \right] \nonumber \\
         & = \E_{ S_{\rho_J}^i} \left[ \E_{Z, \rho_J, B_{\rho_J} | G_{\rho_J}, E_i   } \left[  \abs{ \E_{T_i | Z = z , \rho_J, B_{\rho_J} ,  G_{\rho_J}, E_i } \left[ \overline{f}_i (S_{\rho_J}^i, T_i) \right] }  \right]   \right] 
         \end{align}
         Recall that due to our assumption on $f_i$, in particular, the low discrepancy condition, we have that for any $\vec{S}, T_{-i}$, $Q \in \cQ'$, message $M$ which is at most $C$ bits,
         \begin{equation} \label{eq:low_discrepancy}
             \E_{\varsigma_{\rho_J}^i \sim \cU} \left[  \E_{M | \vec{S}, T_{-i}} \left[ \abs{ \E_{T_i | M}  \left[ \Psi_i ( \varsigma_{\rho_J}^i, T_i ) \right] }  \right] \right] \leq n^{-2}.
         \end{equation}
         Recall that
         \begin{equation*}
         Z := M, \cP_{< J } , G_{\rho_{ < J }}, B_{\rho_{ < J }}, S_{\rho_{ < J }}^i, J. 
         \end{equation*}
         \pref{eq:low_discrepancy} then implies (by taking the expectation over $\vec{S^i}$), as $M$ and $\vec{S^i}$ fully determine all the random variables, that for any fixed $\varsigma_{\rho_J}^i$ (chosen independently, thus no relationship to $\vec{S^i}$)
         \begin{align*}
             & \E_{Z, \rho_J, B_{\rho_J} | G_{\rho_J}, E_i   } \left[  \abs{ \E_{T_i | Z = z , \rho_J, B_{\rho_J} = b,  G_{\rho_J}, E_i } \left[ \overline{f}_i ( \varsigma_{\rho_J}^i, T_i) \right] }  \right]  \\
             & =  \E_{M, \cP_{< J } , G_{\rho_{ < J }}, B_{\rho_{ < J }}, S_{\rho_{ < J }}^i, J, \rho_J, B_{\rho_J} | G_{\rho_J}, E_i   } \left[  \abs{ \E_{T_i | Z = z , \rho_J, B_{\rho_J} = b,  G_{\rho_J}, E_i } \left[ \overline{f}_i ( \varsigma_{\rho_J}^i , T_i) \right] }  \right] \\
             & \leq \E_{M, \vec{S^i}, J, \rho_J | G_{\rho_J}, E_i   } \left[ \abs{ \E_{T_i | M, \vec{S^i}, J, \rho_J,  G_{\rho_J}, E_i } \left[ \overline{f}_i ( \varsigma_{\rho_J}^i , T_i) \right] }  \right] \\
             & = \E_{\vec{S^i}, J, \rho_J | G_{\rho_J}, E_i } \E_{M |\vec{S^i}, J, \rho_J,  G_{\rho_J}, E_i } \left[ \abs{ \E_{T_i | M, \vec{S^i}, J, \rho_J,  G_{\rho_J}, E_i } \left[ \overline{f}_i ( \varsigma_{\rho_J}^i , T_i) \right] }  \right].
         \end{align*} 
         We remark that $\vec{S^i}, J, \rho_J,  G_{\rho_J}, E_i$ are all independent of $T_i$, therefore $M |\vec{S^i}, J, \rho_J,  G_{\rho_J}, E_i$ is a $C$-bit message about $T_i$. Then taking expectation over $\varsigma_{\rho_J}^i \sim \cU$, we get
         \begin{align*}
             & \E_{\varsigma_{\rho_J}^i \sim \cU} \left[ \E_{\vec{S^i}, J, \rho_J | G_{\rho_J}, E_i } \E_{M |\vec{S^i}, J, \rho_J,  G_{\rho_J}, E_i } \left[ \abs{ \E_{T_i | M, \vec{S^i}, J, \rho_J,  G_{\rho_J}, E_i } \left[ \overline{f}_i ( \varsigma_{\rho_J}^i , T_i) \right] }  \right] \right] \\
             & =  \E_{\vec{S^i}, J, \rho_J | G_{\rho_J}, E_i } \left[ \underbrace{ \E_{\varsigma_{\rho_J}^i \sim \cU} \left[ \E_{M |\vec{S^i}, J, \rho_J,  G_{\rho_J}, E_i } \left[ \abs{ \E_{T_i | M, \vec{S^i}, J, \rho_J,  G_{\rho_J}, E_i } \left[ \overline{f}_i ( \varsigma_{\rho_J}^i , T_i) \right] }  \right] \right]}_{\leq n^{-2} \mbox{ via \pref{eq:low_discrepancy}}} \right] \leq n^{-2} 
         \end{align*}

         This in turn implies a bound on \pref{eq:hardness_f},
         \begin{align*}
             \pref{eq:hardness_f} & \leq n^{0.1} \cdot n^{-2} = n^{-1.9}.
         \end{align*}
         Finally we get the bound on \pref{eq:discrepancy_intermediate} assuming $\frac{ (C + l)}{m}, \frac{C}{l} = n^{- \Omega(1)}$, which is true as $m = \Omega(n^{0.99} )$ and by choosing $l = n^{0.9}$, as $n_i \leq n^{0.8}$ (since the number of blocks $n_i \leq n/k$ and $i \leq 5 \ell / 6$). This results in 
         \begin{equation*}
             \pref{eq:discrepancy_intermediate} \leq n^{-1.9} + 5 \cdot \sqrt{ \frac{ (C + l)}{m} } + \frac{100}{n^2} \leq n^{- \Omega(1)}
         \end{equation*}
         which then implies our main bound on \pref{eq:initial_advantage} as 
         \begin{equation*}
             \pref{eq:initial_advantage} \leq n^{- \Omega(1)} + 20 \sqrt{ \frac{C}{l} + \frac{1}{n} } \leq  n^{- \Omega(1)}.
         \end{equation*}
         a contradiction.
         
     \end{proof}

     \section{Applications}  \label{sec:multiphase_application}

In this section, we list some applications of our black-box theorem. Note that we only need to show that some particular function $f$ is ``hard." We introduce some additional preliminaries towards that goal.

\subsection{Preliminaries for Min-Entropy}

We will need the following properties of min-entropy for the proof of one-way communication lower bounds.

\begin{definition} We define the renyi entropy $H_2 (A)$ and min-entropy $H_\infty (A)$ as
    \begin{align*}
        H_2 (A) := - \log \left( \sum_{a} \Pr[ A = a]^2 \right) \\
        H_\infty (A) := - \log \left( \max_{a} \Pr[ A = a]  \right)
    \end{align*}
\end{definition}

\begin{fact}[Renyi Entropy] \label{fact:renyi}
Let $A$ be a random variable. Then
    \begin{equation*}
        H ( A ) \geq H_2 (A) \geq H_\infty (A) 
    \end{equation*}
 In particular, for any fixed $b$ we have
\begin{equation*}
H_2 ( A| B = b ) \geq H_\infty ( A | B = b)    
\end{equation*}
\end{fact}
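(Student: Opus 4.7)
The plan is to prove each of the two inequalities in the unconditional statement separately, and then observe that the conditional version is an immediate application of the same arguments to the conditional distribution of $A$ given $B = b$.

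For the first inequality $H(A) \geq H_2(A)$, I would rewrite both quantities as operations on the same expectation. Writing $p_a := \Pr[A = a]$, we have $H(A) = \mathbb{E}_{a \sim A}[-\log p_a]$, while $H_2(A) = -\log \sum_a p_a^2 = -\log \mathbb{E}_{a \sim A}[p_a]$. Since $-\log$ is convex (equivalently, $\log$ is concave), Jensen's inequality applied to $-\log$ gives
\begin{equation*}
    \mathbb{E}_{a \sim A}[-\log p_a] \geq -\log \mathbb{E}_{a \sim A}[p_a],
\end{equation*}
which is exactly $H(A) \geq H_2(A)$.

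For the second inequality $H_2(A) \geq H_\infty(A)$, it suffices (since $-\log$ is monotonically decreasing) to show $\sum_a p_a^2 \leq \max_a p_a$. I would factor one $p_a$ in each term of the sum and bound it by the maximum:
\begin{equation*}
    \sum_a p_a^2 \leq \Big( \max_{a'} p_{a'} \Big) \sum_a p_a = \max_{a'} p_{a'},
\end{equation*}
since $\sum_a p_a = 1$. Taking $-\log$ of both sides yields $H_2(A) \geq H_\infty(A)$.

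The conditional version then follows by applying both arguments verbatim to the conditional distribution of $A$ given $B = b$, treating $\Pr[A = a \mid B = b]$ in place of $p_a$; no further work is required. This fact is standard in information theory and there is no real obstacle — both bounds are one-line consequences of Jensen's inequality and the trivial bound $p_a \leq \max_{a'} p_{a'}$ respectively.
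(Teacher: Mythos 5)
Your proof is correct. The paper states \pref{fact:renyi} as a standard fact without giving a proof, and your argument is the canonical one: $H(A) \geq H_2(A)$ follows from Jensen's inequality applied to the convex function $-\log$ (noting $H_2(A) = -\log \E_{a \sim A}[p_a]$ while $H(A) = \E_{a \sim A}[-\log p_a]$), and $H_2(A) \geq H_\infty(A)$ follows from $\sum_a p_a^2 \leq (\max_{a'} p_{a'}) \sum_a p_a = \max_{a'} p_{a'}$ together with the monotonicity of $-\log$; the conditional statement is indeed just the same argument applied to the distribution of $A$ conditioned on $B = b$. There is no gap.
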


We use the following lemma on ``average'' min-entropy.

\begin{definition}[Average Min-Entropy]
    \begin{equation*}
        \widetilde{H}_\infty ( A | B ) = - \log \left( \E_{b \sim B} \left[ \max_{a} \Pr [ A = a | B = b ] \right] \right) = - \log \left( \E_{b \sim B} \left[ 2^{- H_\infty (A | B = b ) } \right] \right)    
    \end{equation*}
\end{definition}

\begin{lemma}[Lemma 2.2 of \cite{dodis_fuzzy_2008}] \label{lem:dors}
    Let $A,B$ be random variables. Then if $B$ has at most $2^{\lambda}$ possible values, then 
    \begin{equation*}
        \widetilde{H}_\infty ( A | B ) \geq \widetilde{H}_\infty (A, B ) - \lambda \geq H_\infty ( A ) - \lambda.
    \end{equation*}
\end{lemma}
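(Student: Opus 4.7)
The plan is to unfold the definition of $\widetilde{H}_\infty$ and reduce each inequality to a one-line computation. Since $(A,B)$ is a joint random variable with no conditioning, we have $\widetilde{H}_\infty(A,B) = H_\infty(A,B) = -\log \max_{a,b} \Pr[A=a, B=b]$, so the two inequalities in the lemma are really statements about $H_\infty(A,B)$ versus $\widetilde{H}_\infty(A|B)$ and $H_\infty(A)$ respectively.

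For the first inequality, I would rewrite
\begin{equation*}
2^{-\widetilde{H}_\infty(A|B)} = \sum_{b} \Pr[B=b] \cdot \max_{a} \Pr[A=a \mid B=b] = \sum_{b} \max_{a} \Pr[A=a, B=b],
\end{equation*}
and then bound the sum by replacing each term-by-term maximum with the global maximum: each summand is at most $\max_{a,b} \Pr[A=a, B=b] = 2^{-H_\infty(A,B)}$, and the number of nonzero summands is at most $2^\lambda$ by the hypothesis on $B$'s support size. Thus $2^{-\widetilde{H}_\infty(A|B)} \leq 2^\lambda \cdot 2^{-H_\infty(A,B)}$, and taking $-\log$ gives $\widetilde{H}_\infty(A|B) \geq H_\infty(A,B) - \lambda = \widetilde{H}_\infty(A,B) - \lambda$, as desired.

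For the second inequality, it suffices to show $\widetilde{H}_\infty(A,B) \geq H_\infty(A)$, since subtracting $\lambda$ from both sides only weakens the statement. To see this, fix any $b_0$ achieving the joint maximum; then $\max_{a,b}\Pr[A=a, B=b] = \max_a \Pr[A=a, B=b_0] \leq \max_a \Pr[A=a]$, because marginalizing out $B$ can only increase an event's probability. Taking $-\log$ gives $H_\infty(A,B) \geq H_\infty(A)$, and chaining with the first inequality completes the proof.

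The argument is essentially mechanical from the definitions; the only conceptual step is the small observation that even though $\widetilde{H}_\infty(A|B)$ is defined as an expectation over $B$ rather than a worst-case max, the bounded support of $B$ allows one to pay a $\lambda$-bit penalty to convert the expectation into a worst-case maximum. This is exactly the content of the first inequality and is where the assumption $|\supp(B)| \leq 2^\lambda$ is used; no further obstacle arises.
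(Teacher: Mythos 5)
Your proof is correct, and since the paper simply imports this lemma from \cite{dodis_fuzzy_2008} without reproving it, there is no in-paper argument to diverge from; your computation is exactly the standard one from that reference. Unfolding $2^{-\widetilde{H}_\infty(A|B)} = \sum_b \max_a \Pr[A=a,B=b]$, bounding each of the at most $2^\lambda$ nonzero summands by $2^{-H_\infty(A,B)}$, and then noting $H_\infty(A,B) \geq H_\infty(A)$ by monotonicity of probabilities under marginalization is precisely how the cited Lemma 2.2 is proved.
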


\begin{claim} \label{cl:conditioning_infty}
    \begin{equation*}
        \widetilde{H}_\infty ( A | B,C ) \leq \widetilde{H}_\infty ( A | B  )
    \end{equation*}
\end{claim}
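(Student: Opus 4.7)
The plan is a direct calculation from the definition of average min-entropy. By definition, the claim is equivalent to showing
\begin{equation*}
\E_{b \sim B}\!\left[\max_a \Pr[A = a \mid B = b]\right] \;\leq\; \E_{(b,c) \sim (B,C)}\!\left[\max_a \Pr[A = a \mid B = b, C = c]\right],
\end{equation*}
since $-\log$ is monotonically decreasing. So the task reduces to proving this inequality on the two ``collision probabilities.''

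First, I would expand the inner conditional distribution by the law of total probability: for every $b$ in the support of $B$,
\begin{equation*}
\Pr[A = a \mid B = b] \;=\; \sum_c \Pr[A = a \mid B = b, C = c]\cdot \Pr[C = c \mid B = b].
\end{equation*}
Next, taking the maximum over $a$ and using the elementary inequality $\max_a \sum_c x_{a,c} \leq \sum_c \max_a x_{a,c}$ (valid since all terms are non-negative), I get
\begin{equation*}
\max_a \Pr[A = a \mid B = b] \;\leq\; \sum_c \Pr[C = c \mid B = b]\cdot \max_a \Pr[A = a \mid B = b, C = c].
\end{equation*}

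Finally, taking expectation over $b \sim B$ on both sides, the right-hand side becomes
\begin{equation*}
\E_{b \sim B}\!\left[\sum_c \Pr[C = c \mid B = b]\cdot \max_a \Pr[A = a \mid B = b, C = c]\right] \;=\; \E_{(b,c) \sim (B,C)}\!\left[\max_a \Pr[A = a \mid B = b, C = c]\right],
\end{equation*}
which is the desired inequality. Applying $-\log$ yields $\widetilde{H}_\infty(A \mid B, C) \leq \widetilde{H}_\infty(A \mid B)$.

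There is no real obstacle here: the proof is essentially a one-line application of ``max of a convex combination $\leq$ convex combination of maxes,'' and is the natural analogue for average min-entropy of the standard fact that additional conditioning can only decrease Shannon entropy. The only sanity-check worth noting is that the inequality goes in the direction one expects (more information $\Rightarrow$ less entropy), which is confirmed by the calculation above.
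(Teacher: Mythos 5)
Your proof is correct and follows essentially the same route as the paper: the paper fixes $a^* = \argmax_a \Pr[A=a\mid B=b]$ and averages over $c\sim C|_{B=b}$, which is exactly your ``max of a convex combination $\leq$ convex combination of maxes'' step written out explicitly. No gaps; the two arguments are interchangeable.
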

\begin{proof}
    We first proceed with showing the following inequality,
    \begin{equation} \label{eq:conditioning_infty}
        \E_{c \sim C |_{B = b} } \left[ \max_{a} \Pr[ A = a | B = b, C =c ] \right] \geq \max_{a} \Pr[ A = a | B = b ].
    \end{equation}
    Let $a^* := \argmax_{a} \Pr[ A = a | B = b ] $. Then
    \begin{align*}
        \Pr[ A = a^* | B = b ] = \E_{c \sim C |_{B = b} } \left[ \Pr[ A = a^* | B = b, C =c ] \right] \leq \E_{c \sim C |_{B = b} } \left[ \max_{a} \Pr[ A = a | B = b, C =c ] \right] 
    \end{align*}

    With \pref{eq:conditioning_infty} established, we are ready to prove the claim. Recall that 
    \begin{align*}
        \widetilde{H}_\infty ( A | B,C )  := - \log \left( \E_{b,c \sim B,C } \left[ \max_{a} \Pr[ A = a | B = b, C =c ] \right] \right)
    \end{align*}
    Taking expectation over $B$ on both side of \pref{eq:conditioning_infty}, we have
    \begin{align*}
        \E_{b,c \sim B,C } \left[ \max_{a} \Pr[ A = a | B = b, C =c ] \right] \geq \E_{b \sim B } \left[ \max_{a} \Pr[ A = a | B = b ] \right]
    \end{align*}
    Therefore, we get
    \begin{align*}
        \widetilde{H}_\infty ( A | B,C )  & = - \log \left( \E_{b,c \sim B,C } \left[ \max_{a} \Pr[ A = a | B = b, C =c ] \right] \right) \\
        & \leq - \log \left( \E_{b \sim B } \left[ \max_{a} \Pr[ A = a | B = b ] \right] \right) = \widetilde{H}_\infty ( A | B ) 
    \end{align*}

\end{proof}

We can also bound the support size of the distribution using the min-entropy.

\begin{claim} \label{cl:min_entropy_supp}
    Let the distribution on a random variable $X \in \{0,1\}^n$ conditioned on some random variable $M = m$ be supported over the set $T_m \subset \{0,1\}^n$. Then $|T_m| \geq 2^{ H_\infty ( X | M = m) } $  
\end{claim}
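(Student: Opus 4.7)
The plan is to prove this directly from the definition of min-entropy, using nothing more than the fact that probabilities on the support sum to one and are each bounded by the maximum probability.

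First, I would unfold the definition: set $p^* := \max_{x} \Pr[X = x \mid M = m]$, so that by definition $H_\infty(X \mid M = m) = -\log p^*$, i.e.\ $p^* = 2^{-H_\infty(X \mid M = m)}$. Next, I would use the hypothesis that the conditional distribution is supported on $T_m$, which gives
\begin{equation*}
1 = \sum_{x \in T_m} \Pr[X = x \mid M = m] \leq \sum_{x \in T_m} p^* = |T_m| \cdot p^*.
\end{equation*}
Rearranging yields $|T_m| \geq 1/p^* = 2^{H_\infty(X \mid M = m)}$, which is exactly the claimed bound.

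There is no real obstacle here; the entire argument is a one-line pigeonhole observation and follows immediately from unwinding the definition. The only thing to be slightly careful about is interpreting the hypothesis correctly: ``supported on $T_m$'' should be read as $\Pr[X = x \mid M = m] = 0$ for $x \notin T_m$, so that the sum above indeed ranges only over $T_m$ while still equaling $1$. Given that, the bound is tight (achieved by the uniform distribution on $T_m$, whose min-entropy is exactly $\log |T_m|$), which serves as a useful sanity check.
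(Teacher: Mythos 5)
Your proof is correct and is essentially identical to the paper's own argument: both bound $1 = \sum_{x \in T_m} \Pr[X = x \mid M = m] \leq |T_m| \cdot 2^{-H_\infty(X \mid M = m)}$ and rearrange. Nothing further is needed.
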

\begin{proof}
    Recall that
    \begin{equation*}
        H_\infty ( X | M = m) := - \log \max_{x} \left(  \Pr[ X = x | M = m] \right). 
    \end{equation*}
    Therefore, we have
    \begin{equation*}
        2^{- H_\infty ( X | M = m) } \cdot |T_m| \geq \sum_{x \in T_m} \Pr[ X = x | M = m ] = 1,
    \end{equation*}
    which completes the proof of the claim. 
\end{proof}

The main technical issue with using min-entropy instead of KL-divergence is that {\bf chain-rule does not hold for min-entropy}. Nevertheless, we have the following strong chain-rule if we are willing to ``spoil'' a few bits.

\begin{theorem}[\cite{skorski_strong_2019}] \label{thm:strong_chain_rule_min}
    Let $\cX$ be a fixed alphabet, and $X = (X_1, \ldots, X_t)$ be a sequence of (possibly correlated) random variables each over $\cX$, equipped with a distribution $\mu$. Then for any $\eps \in (0,1)$ and $\delta > 0$, there exists a collection $\cB$ of disjoint sets on $\cX^t$ such that 
    \begin{itemize}
        \item $\cB$ can be indexed by a small number of bits, namely
        $$ \log |\cB| = t \cdot O \left( \log \log |\cX| + \log \log \eps^{-1} + \log ( t / \delta ) \right)  $$
        \item $\cB$ almost covers the domain
        $$ \sum_{B \in \cB} \mu (B) \geq 1 - \eps$$
        \item Conditioned on $\cB$, block distributions $X_{i} | X_{<i}$ are nearly flat. That is
        $$ \forall x, x' \in B,~~ 2^{-O(\delta)} \leq \frac{\mu(x_i | x_{<i} )}{\mu(x'_i | x'_{<i} ) } \leq 2^{O(\delta)} $$
       
        \item For every $B \in \cB$, for every index $i \in [t]$, and for every set $I \subset [i-1]$, we have
        \begin{enumerate}
            \item The chain-rule for min-entropy 
            $$ H_\infty ( X_i | X_I, B ) + H_\infty ( X_I | B ) = H_\infty ( X_i, X_I | B ) \pm O( \delta) $$
            \item The average and worst-case min-entropy almost match
            $$ \widetilde{H}_\infty ( X_i | X_I, B ) = H_\infty ( X_i | X_I, B ) \pm O( \delta ) $$
        \end{enumerate}
    \end{itemize}
\end{theorem}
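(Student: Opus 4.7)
The plan is to build $\cB$ via a type-based partition, where each block records discretized conditional log-probabilities along the natural chain $1,2,\ldots,t$, to handle the tail of atypically small conditionals via truncation, and then verify the flatness and chain-rule consequences. This is a refinement of the classical method of types, adapted to min-entropy.

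First I would discard the tail of small conditionals. For each $i$, the event $\{x : \mu(x_i \mid x_{<i}) < \eps/(t|\cX|)\}$ has total mass at most $\eps/t$, since for any fixed prefix $x_{<i}$ at most $|\cX| \cdot \eps/(t|\cX|) = \eps/t$ conditional mass lies below the threshold. A union bound over $i \in [t]$ shows that the combined bad set has mass at most $\eps$, and these points are left outside $\cB$, securing the covering condition $\sum_{B\in\cB}\mu(B)\ge 1-\eps$. For each remaining $x$, set $r_i(x) := \lfloor -\log_2 \mu(x_i \mid x_{<i})/\delta \rfloor$, which lies in $[0,\lceil \log(t|\cX|/\eps)/\delta \rceil]$, and declare the block of $x$ to be the set of points sharing the type $(r_1,\ldots,r_t)$. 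The number of types is at most $(\log(t|\cX|/\eps)/\delta + 1)^t$, yielding $\log|\cB| \leq t\cdot O(\log\log|\cX| + \log\log\eps^{-1} + \log(t/\delta))$ as claimed. Coordinatewise near-flatness (property 3) is then immediate: for any $x,x'$ in a common block, $\mu(x_i\mid x_{<i})$ and $\mu(x'_i\mid x'_{<i})$ lie in a shared width-$\delta$ log-scale bin.

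For properties 4(a) and 4(b) I would exploit the fact that $\mu(x) = \prod_i \mu(x_i\mid x_{<i}) \approx 2^{-\delta \sum_i r_i}$ up to a factor $2^{O(t\delta)}$ for every $x$ in a fixed block $B$. Hence $\mu(\cdot\mid B)$ is nearly uniform on the support of $B$, and min-entropies over subsets $I\subset[i-1]$ reduce, up to $O(t\delta)$ slack, to logarithms of projection cardinalities such as $|\{x\in B : x_I = a\}|$ and $|\{x\in B : x_I=a,\ x_i=b\}|$. The chain rule 4(a) then becomes a combinatorial identity among these cardinalities; the inductive construction of the types along $1,2,\ldots,t$ should yield coherent fibers so that the identity holds up to additive $O(\delta)$ after rescaling $\delta\to\delta/t$ (which only affects constants inside the $\log|\cB|$ bound). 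Property 4(b) then follows because, given near-constant conditional probabilities and near-equal fibers, $H_\infty(X_i\mid X_I=x_I, B)$ varies with $x_I$ by only $O(\delta)$, so its expectation (yielding $\widetilde{H}_\infty$) matches the worst-case value up to the same slack.

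The main obstacle is exactly 4(a): flat distributions on arbitrary sets do not automatically satisfy a tight min-entropy chain rule (one can exhibit flat sets whose row-maxima and overall-maxima are mismatched), so one must genuinely leverage the inductive type structure to force fiber sizes to be essentially equal across all prefixes $I\subset[i-1]$. If the structural argument from the type definition alone falls short, the safe backup is to refine the partition by additionally bucketing the log-rounded fiber cardinalities $\log|\{x\in B : x_I = a\}|$ for each $I\subset[i-1]$; this costs only a $\poly(t)$-factor in the block count, which is easily absorbed into the claimed bound on $\log|\cB|$. Verifying that this refinement indeed makes 4(a) hold with $O(\delta)$ slack is where I expect the real technical work to lie.
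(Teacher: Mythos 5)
The paper does not prove this statement at all---it is imported verbatim from \cite{skorski_strong_2019}---so your proposal has to stand on its own, and as written it does not. Your construction (round the chain conditionals $-\log\mu(x_i\mid x_{<i})$ into width-$\delta$ bins after truncating conditionals below $\eps/(t|\cX|)$) correctly delivers the first three bullets: the covering bound, the count $\log|\cB| = t\cdot O(\log\log|\cX|+\log\log\eps^{-1}+\log(t/\delta))$, and the per-coordinate flatness. But it provably does \emph{not} deliver bullet 4, which is the entire content of the theorem. Concretely, take $t=3$, $X_1$ uniform on $[K]$, $X_2\mid x_1$ uniform on $\{a_1,a_2\}$, and let $X_3\mid(x_1,a_1)$ be uniform on $[K]$ while $X_3\mid(x_1,a_2)$ puts mass $1/K$ on a single point $b^*$ and mass $1-1/K$ elsewhere. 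Every surviving point has chain conditionals exactly $1/K$, $1/2$, $1/K$, so all of these points land in one block $B$ of size $K^2+K$, on which $\mu(\cdot\mid B)$ is exactly flat. Yet with $i=3$ and $I=\{2\}$ (and likewise $I=\{1,2\}$) the worst-case quantity $H_\infty(X_3\mid X_2,B)$ is $0$ while $\widetilde H_\infty(X_3\mid X_2,B)\approx\log K$ and $H_\infty(X_3,X_2\mid B)-H_\infty(X_2\mid B)\approx\log K$, so both 4(a) and 4(b) fail by an unbounded additive $\log K$. This shows that flatness of the block plus near-uniformity of $\mu(\cdot\mid B)$ cannot, even in principle, reduce 4(a)/4(b) to ``fiber cardinalities are essentially equal''---the chain-type partition simply does not control the fibers of arbitrary (or even prefix) index sets $I$, which is exactly the step you flag as ``where the real technical work lies.''

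The backup you propose does not close the gap within the stated budget. Bucketing the rounded fiber sizes $\log|\{x\in B: x_I=x_I^{\ast}\}|$ for every $i$ and every $I\subset[i-1]$ requires recording one bucket index per subset, and there are $\Theta(2^{t})$ such subsets; this multiplies the number of blocks by $(\text{number of buckets})^{2^{\Theta(t)}}$, i.e.\ adds $2^{\Theta(t)}$ bits to $\log|\cB|$, not a $\poly(t)$ factor, and is incompatible with the claimed bound $t\cdot O(\log\log|\cX|+\log\log\eps^{-1}+\log(t/\delta))$. The refinement is also circular as described: the fiber cardinalities are computed inside the block $B$, which is itself being redefined by the refinement, so one needs either an iterative/fixed-point scheme whose termination and bit cost you would have to control, or (as in \cite{skorski_strong_2019}) a different choice of spoiled quantities defined with respect to the original measure rather than within-block counts. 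As it stands, the proposal proves a strictly weaker statement (bullets 1--3) and leaves the heart of the theorem unproved.
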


The corollary of \pref{thm:strong_chain_rule_min} that we will need is the following

\begin{corollary}
    For any fixed $B \in \cB$
    $$ \sum_{i = 1}^t \widetilde{H}_\infty ( X_i | X_{<i}, B ) = H_\infty ( X | B ) \pm t \cdot O ( \delta ) .$$
    In other words, if we take a uniform distribution over $[t]$,
    $$ \E_{i \in [t]} \left[ \widetilde{H}_\infty ( X_i | X_{<i}, B ) \right] = \frac{H_\infty ( X | B )}{t} \pm O ( \delta ) $$
\end{corollary}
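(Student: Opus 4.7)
\begin{proofsketch}
The plan is a standard telescoping argument using the two clauses of \pref{thm:strong_chain_rule_min} iteratively, with the extra bookkeeping that each application only loses an additive $O(\delta)$, so after $t$ applications we lose at most $t \cdot O(\delta)$.

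Fix a block $B \in \cB$. First, I would pass from average min-entropy to worst-case min-entropy using clause (2) of the theorem with $I = \{1, 2, \ldots, i-1\}$, giving
\begin{equation*}
    \widetilde{H}_\infty(X_i \mid X_{<i}, B) = H_\infty(X_i \mid X_{<i}, B) \pm O(\delta)
\end{equation*}
for each $i \in [t]$. Summing these $t$ equalities replaces the left-hand side of our target identity with $\sum_i H_\infty(X_i \mid X_{<i}, B)$ up to an additive error of $t \cdot O(\delta)$, which is exactly the slack we are allowed to absorb.

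Next, I would apply clause (1) (the chain rule for min-entropy) with $I = \{1, \ldots, i-1\}$ for each $i$, which gives
\begin{equation*}
    H_\infty(X_i \mid X_{<i}, B) = H_\infty(X_{\leq i} \mid B) - H_\infty(X_{<i} \mid B) \pm O(\delta).
\end{equation*}
Summing over $i = 1, \ldots, t$ yields a telescoping sum on the right-hand side equal to $H_\infty(X_{\leq t} \mid B) - H_\infty(X_{\leq 0} \mid B) = H_\infty(X \mid B)$, with at most $t$ error terms of size $O(\delta)$. Combined with the first step, this proves
\begin{equation*}
    \sum_{i=1}^t \widetilde{H}_\infty(X_i \mid X_{<i}, B) = H_\infty(X \mid B) \pm t \cdot O(\delta).
\end{equation*}

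The ``in other words'' form follows immediately by dividing both sides by $t$ and interpreting the left-hand side as an expectation over a uniformly random $i \in [t]$. I do not anticipate any real obstacle here: the result is essentially a direct corollary of the chain-rule clause, and the only point of care is to ensure that all $2t$ invocations of clauses (1) and (2) really do use the same block $B$ (so the errors are absolute constants times $\delta$ per step, not random variables), which is exactly the content of ``for every $B \in \cB$'' in the statement of \pref{thm:strong_chain_rule_min}.
\end{proofsketch}
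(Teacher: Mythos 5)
Your proposal is correct and is essentially the paper's argument: the paper proves the identity by induction on the number of summands, applying the chain-rule clause together with the average-versus-worst-case clause of \pref{thm:strong_chain_rule_min} at each step, which is exactly your telescoping sum written in unrolled/inductive form. The error accounting ($O(\delta)$ per invocation, at most $O(t)$ invocations) matches as well, so there is no gap.
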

\begin{proof}
    We prove by induction on the number of summands. Consider the base case. If there is a single summand, 
    $$ \widetilde{H}_\infty ( X_1 | B ) = H_\infty ( X_1 | B ) $$
    thus this is trivially true. Now assume (as inductive hypothesis) that
    $$ \sum_{i = 1}^{j-1} \widetilde{H}_\infty ( X_i | X_{<i}, B ) = H_\infty ( X_1, \ldots, X_{j-1} | B ) \pm (j-1) \cdot O ( \delta ).$$
    Then for $H_\infty ( X_1, \ldots, X_j | B )$, \pref{thm:strong_chain_rule_min} implies that 
    \begin{align*}
        H_\infty ( X_1, \ldots, X_j | B ) & = \widetilde{H}_\infty ( X_j | X_{<j}, B ) + H_\infty (X_{<j}| B) \pm O( \delta ) \\
        & = \widetilde{H}_\infty ( X_j | X_{<j}, B ) +  \sum_{i = 1}^{j-1} \widetilde{H}_\infty ( X_i | X_{<i}, B ) \pm j \cdot O( \delta ) 
    \end{align*}
    where the second equality holds by the inductive hypothesis. This then completes the proof.
\end{proof}

For our application, we will take $\eps = n^{-\Omega(1)}$. By setting $\eps$ as such, even if we condition on being inside the support of $\cB$, the advantage of the one-way protocol is not affected. 

We will use the following corollary/variant of KKL Theorem. We follow the standard notations from \cite{odonnell_analysis_2014}, to which we refer the reader for Fourier Analysis on Boolean functions.
\begin{lemma}[KKL-Theorem] \label{lem:kkl} 
    Let $T \subset \{ 0 , 1 \}^n$, and $\delta \in [0,1]$. Let $f$ be a function supported on $T$ such that for all $x \in T$, $\abs{f(x)} \leq 1+ \eps $. Suppose we write $\widetilde{f} (S) := \frac{\widehat{f} (S) 2^n}{|T|}$. Then
    \begin{equation*}
        \sum_{S \subset [n] } \delta^{|S|} \widetilde{f} (S)^2 \leq \left( 1+ \eps\right)^{2} \left( \frac{2^{n}}{|T|} \right)^{2\delta}
    \end{equation*}
\end{lemma}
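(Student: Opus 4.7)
The plan is to reduce this to the standard $(2,q)$-hypercontractive (Bonami--Beckner) inequality applied to $f$ on $\{0,1\}^n$ under the uniform measure, and then translate back into the renormalized coefficients $\widetilde{f}$. The key identity is that the left-hand side (with $\widehat{f}$ instead of $\widetilde{f}$) equals a noise-operator $L^2$-norm: writing $T_\rho$ for the standard noise operator,
\begin{equation*}
\sum_{S \subset [n]} \delta^{|S|} \widehat{f}(S)^2 \;=\; \langle f, T_\delta f\rangle \;=\; \|T_{\sqrt{\delta}} f\|_2^2,
\end{equation*}
which is exactly the quantity hypercontractivity controls.

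First I would invoke hypercontractivity in the form $\|T_{\sqrt{\delta}} f\|_2 \leq \|f\|_{1+\delta}$ (this is the standard $(2,q)$ form with $q=2$, $p=1+\delta$, since the admissible noise parameter is $\sqrt{(p-1)/(q-1)} = \sqrt{\delta}$). Combining with the identity above gives
\begin{equation*}
\sum_{S \subset [n]} \delta^{|S|} \widehat{f}(S)^2 \;\leq\; \|f\|_{1+\delta}^2.
\end{equation*}

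Next I would bound the right-hand side using the two hypotheses on $f$. Since $f$ is supported on $T$ and $|f(x)| \leq 1+\eps$ for $x \in T$, under the uniform measure on $\{0,1\}^n$,
\begin{equation*}
\|f\|_{1+\delta}^{1+\delta} \;=\; \frac{1}{2^n} \sum_{x \in T} |f(x)|^{1+\delta} \;\leq\; (1+\eps)^{1+\delta} \cdot \frac{|T|}{2^n},
\end{equation*}
so $\|f\|_{1+\delta}^2 \leq (1+\eps)^2 (|T|/2^n)^{2/(1+\delta)}$. Finally, multiplying the Fourier-side bound by $(2^n/|T|)^2$ to pass from $\widehat{f}$ to $\widetilde{f} = (2^n/|T|)\widehat{f}$ yields
\begin{equation*}
\sum_{S \subset [n]} \delta^{|S|} \widetilde{f}(S)^2 \;\leq\; (1+\eps)^2 \left(\frac{2^n}{|T|}\right)^{2 - 2/(1+\delta)} \;=\; (1+\eps)^2 \left(\frac{2^n}{|T|}\right)^{2\delta/(1+\delta)}.
\end{equation*}
Since $2^n/|T| \geq 1$ and $2\delta/(1+\delta) \leq 2\delta$, this is at most $(1+\eps)^2 (2^n/|T|)^{2\delta}$, as claimed. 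There is no real obstacle here: the entire argument is a direct reformulation of the standard hypercontractive KKL bound for functions nearly concentrated on a small set, and the only bookkeeping is tracking the change of measure encoded by the factor $2^n/|T|$ that defines $\widetilde{f}$.
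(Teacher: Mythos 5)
Your proof is correct and follows essentially the same route as the paper: apply $(p,2)$-hypercontractivity with $p=1+\delta$ and noise rate $\sqrt{\delta}$, bound $\|f\|_{1+\delta}$ via the small support $T$ and the pointwise bound $1+\eps$, then rescale by $(2^n/|T|)^2$ and relax $2\delta/(1+\delta)$ to $2\delta$. In fact your write-up is slightly more careful than the paper's (which implicitly uses the uniform-measure norm and drops the exponent relaxation step), but the argument is the same.
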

\begin{proof}
    Follows from the usual proof of KKL-Theorem using Hypercontractivity. Let $p = 1 + \delta$, and $\rho = \sqrt{p - 1}$. Then
    \begin{align*}
        & \norm{f}_p^2 = \left( \sum_{x \in T } f(x)^p \right)^{2/p} \leq \left( (1+\eps)^p \cdot \frac{|T|}{2^n} \right)^{2/p} = ( 1+ \eps)^2 \cdot \left( \frac{|T|}{2^n} \right)^{2/p}\\
        & \norm{ T_{\rho} (f) }_2^2 = \sum_{S \subset [n] } \delta^{|S|} \widehat{f} (S)^2
    \end{align*}
    Dividing both sides by $( |T| / 2^n )^2 $, and using the well-known Hypercontractivity which states 
    $\norm{ T_{\rho} (f) }_2^2 \leq \norm{f}_p^2$, 
    we obtain
    \begin{equation*}
        \sum_{S \subset [n] } \delta^{|S|}  \widetilde{f} (S)^2  \leq \left( \frac{|T|}{2^n} \right)^{ - 2 \delta } 
    \end{equation*}
    which completes the proof.
\end{proof}

\subsection{Super-Logarithmic Lower Bound for the Multiphase Problem}

We simply need to show that inner product (mod 2) is ``hard" as in \pref{def:hard}.

\begin{lemma} \label{lem:ip_hard}
    $\IP$ satisfies \pref{def:hard}.
\end{lemma}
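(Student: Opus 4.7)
The plan is to verify both conditions of \pref{def:hard} for $f = \IP$ mod $2$, which in the generalized Multiphase Problem corresponds to $k=1$, $g(a,b) = a \wedge b$, and $\psi = \bigoplus$. Throughout, $S_Q^i$ and $T_i$ are each independently uniform over $\{0,1\}^{n_i}$, and I use that $n_i \geq n^{1/3}$ since $i \in [\ell/3,\ell]$. My approach will avoid invoking the KKL / strong chain-rule machinery (presumably reserved for richer functions in later sections) and rely instead on Cauchy--Schwarz plus the average-min-entropy bound from \pref{lem:dors}.

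For the balanced condition, I would fix arbitrary $S_Q^{-i}, T_{-i}$. Then the restricted function takes the clean form $f_i(S_Q^i, T_i) = c \cdot (-1)^{\langle S_Q^i, T_i\rangle}$ for a sign $c \in \{\pm 1\}$ depending on the fixed coordinates. Under uniform $T_i$, $\E_{T_i}[f_i(S_Q^i, T_i)] = 0$ whenever $S_Q^i \neq \mathbf{0}$, while $\bigl|\E_{T_i}[f_i]\bigr| = 1$ only when $S_Q^i = \mathbf{0}$. The latter event has probability $2^{-n_i} \leq 2^{-n^{1/3}} \ll n^{-2}$ under the uniform product distribution, so any constant $\beta$ (for example $\beta = 1/2$) satisfies both $\log(1/\beta) \ll \log^{1/2}(n)$ and the probability condition of balance.

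For low discrepancy, on the overwhelming event $S_Q^i \neq \mathbf{0}$ the row is exactly balanced, so $\Psi_i(S_Q^i, T_i) = (-1)^{\langle S_Q^i, T_i\rangle}$, giving
\begin{equation*}
\sum_{T_i} \Psi_i(S_Q^i, T_i)\, \mu(T_i) \;=\; \widehat{\mu}(S_Q^i), \qquad \widehat{\mu}(S) := \E_{T_i \sim \mu}\!\bigl[(-1)^{\langle S, T_i\rangle}\bigr].
\end{equation*}
Since $S_Q^i$ is uniform, Cauchy--Schwarz combined with Parseval gives
\begin{equation*}
\E_{S_Q^i}\bigl|\widehat{\mu}(S_Q^i)\bigr| \;\leq\; \sqrt{\E_{S_Q^i}\widehat{\mu}(S_Q^i)^2} \;=\; \sqrt{\sum_{T_i}\mu(T_i)^2} \;\leq\; 2^{-H_\infty(\mu)/2},
\end{equation*}
and the $S_Q^i = \mathbf{0}$ row contributes at most $2^{-n_i}$. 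For any one-way message $M(T_i)$ of length $C \leq n_i/\poly\log n$, \pref{lem:dors} yields $\widetilde{H}_\infty(T_i \mid M) \geq n_i - C$, so by Jensen applied to the concave square root,
\begin{equation*}
\E_M\!\bigl[2^{-H_\infty(T_i \mid M)/2}\bigr] \;\leq\; \sqrt{\E_M\!\bigl[2^{-H_\infty(T_i \mid M)}\bigr]} \;=\; 2^{-\widetilde{H}_\infty(T_i \mid M)/2} \;\leq\; 2^{-(n_i - C)/2}.
\end{equation*}
Hence $\odisc_C(\Psi_i) \leq 2^{-(n_i - C)/2} + 2^{-n_i} = 2^{-\Omega(n_i)} \leq 2^{-\Omega(n^{1/3})} \ll n^{-2}$, as required. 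The only technicality is the bookkeeping around the $S_Q^i = \mathbf{0}$ row, where $\Psi_i$ has a non-standard form; since that row has measure $2^{-n_i}$, its contribution is absorbed into the stated bound without needing any hypercontractive input. The main conceptual obstacle---why one would expect this to be the \emph{easiest} hard function---is that the Fourier structure of $\IP$ reduces one-way discrepancy exactly to the $L^1$ Fourier weight of the conditional update distribution, which average min-entropy controls via a one-line Cauchy--Schwarz.
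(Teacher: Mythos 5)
Your proof is correct and takes essentially the same route as the paper: the balancedness argument via the all-zeros $S_Q^i$ row is identical, and your Cauchy--Schwarz/Parseval bound $\E_{S_Q^i}\bigl|\widehat{\mu}(S_Q^i)\bigr| \leq 2^{-H_\infty(\mu)/2}$ is exactly the content of the Lindsey's Lemma step the paper invokes, followed by the same Jensen-plus-average-min-entropy argument via \pref{lem:dors} to get $\odisc_C(\Psi_i) \leq 2^{-\Omega(n_i)}$. The only minor difference is that you explicitly account for the $S_Q^i = \mathbf{0}$ row's $2^{-n_i}$ contribution, a bookkeeping point the paper elides.
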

\begin{proof}
    For any fixing of $S^{-i}_Q, T_{-i}$, indeed the function is balanced with high probability. As long as $S^i_Q$ is not all zero string. The function is perfectly balanced. The probability of such event is at most $2^{- n_i} \leq 2^{ - n^{1/4}}$, thereby satisfying the balanced condition.

    It remains to show that $f_i$ has low discrepancy. $f_i$ also has low discrepancy due to the following observation. The corresponding $\Psi_i$ is the Hadamard matrix. Then we can use the following well-known lemma to bound the discrepancy.
    \begin{fact}[Lindsey’s Lemma] \label{fact:hadamard}
    Let $H$ be a Hadamard matrix. Let $P$ and $Q$ be distributions. Then 
    \begin{equation*}
        P^{T} H Q \leq \norm{ P }_2 \norm{ Q }_2 \cdot 2^{n/2}
    \end{equation*}
    \end{fact}
    \pref{fact:hadamard} then implies that $T_i$ under the distribution $T_i |_{M = m}$ (the prior on $T_i$ conditioned on the message $M = m$) has
    $$ \odisc_{T_i |_{M = m}} ( \Psi_i ) = v_m^T \cdot  H_{n_i}  \cdot T_i |_{M = m} \leq 2^{n_i/2} \cdot \underbrace{\norm{ v_m }_2}_{ = 2^{-n_i/2}} \cdot \underbrace{\norm{ T_i |_{M = m} }_2}_{\leq 2^{-\frac{H_\infty ( T_i | {M = m} ) }{2}}} \leq 2^{-\frac{H_\infty ( T_i | {M = m} ) }{2}} $$
    where $v_m \in \{ \pm 2^{- n_i} \}^{2^{n_i}}$ is a vector that matches the sign per coordinate of $H_{n_i}  \cdot T_i |_{M = m}$, justifying the norm of such vector.
    
    Since $M$ is of length at most $c = \frac{n_i}{\poly \log (n)} $ and recall that we have defined
    \begin{equation*}
        \odisc_c ( \Psi_i ):= \max_{ M : |M| < c }  \E_{M} \left[ \odisc_{T_i |_M } ( \Psi_i ) \right]
    \end{equation*}   
    
    \begin{align*}
        \E_{M} \left[ \odisc_{T_i | M} ( \Psi_i ) \right] \leq \E_{M} \left[ 2^{-\frac{H_\infty ( T_i | {M = m} ) }{2}} \right] \leq \sqrt{ \E_{M} \left[ 2^{- H_\infty ( T_i | {M = m} ) } \right] } =  2^{- \frac{\widetilde{H}_\infty ( T_i | M )}{2} }
    \end{align*}
    where the second inequality holds from Jensen's inequality (on $\sqrt{x}$) and the equality holds from the definition of average min-entropy. \pref{lem:dors} then implies that $\widetilde{H}_\infty ( T_i | M ) \geq n_i - c$, which in turn implies 
    $$\odisc_c ( \Psi_i )  \leq 2^{- \Omega ( n_i )}, $$ thereby satisfying the low discrepancy if $c = \frac{n_i}{\poly \log (n)} $.     
\end{proof}

\begin{remark}
    Note that the inner product is way stronger than what is necessary for our proof. Nevertheless, this gives the state-of-the-art bound for the Multiphase Problem~\cite{patrascu_mihai_towards_2010}.
\end{remark}

\subsection{A Simple Lifting Theorem}

Lifting technique refers to translating a lower bound for a weaker model to a lower bound for a stronger general model. A classic example of the lifting theorem is Sherstov's Pattern Matrix method~\cite{sherstov_pattern_2011} which translates approximate
degree lower bounds (a structural lower bound for a weaker model) into approximate-rank and communication lower bounds. 

Unfortunately, we cannot make a full translation, as we require the underlying hard distribution to be {\bf a product distribution}, instead of a general distribution. 
In order to completely translate between the some algebraic properties of $\psi$ such as its approximate degree (See \cite{bun_approximate_2022} and references therein), to its communication complexity, we need to consider all possible distribution, instead of just product distribution.

Nevertheless, the following lemma from \cite{sherstov_pattern_2011} can be used to create hard functions. We would like to lift the function using the following composition, $f = \psi \circ g_k^n$ where $g$ (the so-called inner gadget) is
$$ g ( (j,b) , (x_{i+1}, \ldots, x_{i+k}) ) = b \oplus x_{i+j}.$$ 

Now the task in our one-way communication is the following. Alice is given $\varsigma \in \left( [k] \times \{ \pm 1 \} \right)^n$, Bob is given $(X_1, \ldots, X_n ) \in \left( \{ \pm 1 \}^{ k } \right)^n$. Bob sends a one-way message for Alice to compute $$f( \varsigma, X ) = \psi ( g ( \varsigma_1 , X_1 ) , \ldots,  g ( \varsigma_n , X_n ) ).$$ 
We consider the composition, $\psi = \varphi ( \psi_1, \ldots , \psi_\ell )$. The only property we need about $\varphi$ is that when $\psi_{-i}$'s are fixed, $\psi$ value is still undetermined. That is $\psi'_i := \psi |_{\psi_{-i}} $ satisfies the ``hard" condition. For example, it is relatively straightforward to see that if $\psi_i$ is balanced, then $\bigoplus_{i=1}^\ell \psi_i$ is also balanced. For $\bigvee_{i=1}^\ell \psi_i$, we need to have $\psi_i$ biased to say $\min \{ \Pr[ \psi_i = +1 ], \Pr[ \psi_i = -1 ] \} \geq \ell ^{-1}$, the trick introduced in \cite{larsen_super-logarithmic_2025}.

A straightforward connection can be achieved using the following lemma, which can be viewed as a generalization of \pref{fact:hadamard}.
\begin{lemma}[\cite{sherstov_pattern_2011}] \label{lem:pattern_matrix}
    $$ \norm{ \Psi_i } = \sqrt{ 2^{kn} (2k)^{n} } \max_{S \subset [n] } \left( \abs{ \hat{\psi} (S) } \cdot k^{- |S| /2 } \right)$$
\end{lemma}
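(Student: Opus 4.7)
The plan is to follow Sherstov's strategy: expand $\psi$ in its Fourier basis and exploit the tensor-product and orthogonality structure of the resulting blocks. Index the rows of $\Psi_i$ by $X = (X_1, \ldots, X_n) \in (\{\pm 1\}^k)^n$ and the columns by $\varsigma = ((j_1, b_1), \ldots, (j_n, b_n)) \in ([k] \times \{\pm 1\})^n$, so that $\Psi_i[X, \varsigma] = \psi(b_1 X_{1, j_1}, \ldots, b_n X_{n, j_n})$. Expanding $\psi(y) = \sum_{S \subseteq [n]} \hat\psi(S)\, \prod_{i \in S} y_i$ over $\{\pm 1\}^n$ gives the decomposition
$$\Psi_i \;=\; \sum_{S \subseteq [n]} \hat\psi(S)\, A_S, \qquad A_S[X, \varsigma] \;:=\; \prod_{i \in S} b_i\, X_{i, j_i}.$$

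Next I would observe that each $A_S$ factors as a tensor product across the $n$ blocks, $A_S = \bigotimes_{i=1}^n \phi_i^S$, where $\phi_i^S$ is a $2^k \times 2k$ matrix equal to the all-ones matrix when $i \notin S$ and equal to the matrix with $(X, (j,b))$-entry $b X_j$ when $i \in S$. A direct computation of $(\phi_i^S)^* \phi_i^S$ using the orthogonality relation $\sum_{X \in \{\pm 1\}^k} X_j X_{j'} = 2^k \delta_{jj'}$ shows that the nonzero singular values are $\sqrt{2^{k+1}}$ (with multiplicity $k$) when $i \in S$ and $\sqrt{2^{k+1} k}$ (with multiplicity $1$) when $i \notin S$. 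Using the tensor-product identity $\norm{A \otimes B} = \norm{A}\cdot\norm{B}$,
$$\norm{A_S} \;=\; \prod_{i=1}^n \norm{\phi_i^S} \;=\; \bigl(2^{k+1}\bigr)^{n/2} \, k^{(n-|S|)/2} \;=\; \sqrt{2^{kn}(2k)^n} \, \cdot k^{-|S|/2}.$$

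The critical step is to verify that the summands $\hat\psi(S)\,A_S$ have pairwise orthogonal column spaces (and row spaces), which is exactly what is needed to conclude that the singular values of $\Psi_i$ are the multiset union of the singular values of the individual terms. Again exploiting the tensor form,
$$(A_S^* A_T)[\varsigma, \varsigma'] \;=\; \prod_{i=1}^n\; \sum_{X_i \in \{\pm 1\}^k} \phi_i^S(X_i, \varsigma_i)\,\phi_i^T(X_i, \varsigma_i'),$$
and whenever $S \neq T$, for any index $i \in S \triangle T$ the corresponding inner sum is a single character $\pm X_{i,j}$ summed over $\{\pm 1\}^k$ and therefore vanishes; the symmetric computation gives $A_S A_T^* = 0$. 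Hence the top singular value of $\Psi_i$ is attained by some single $\hat\psi(S) A_S$, yielding the desired equality $\norm{\Psi_i} = \sqrt{2^{kn}(2k)^n}\,\max_S |\hat\psi(S)|\, k^{-|S|/2}$.

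The main obstacle is the bookkeeping for the orthogonality step: one must track the tensor decomposition carefully and confirm that $A_S^* A_T$ really does vanish for every $S \neq T$, not just for ``most'' pairs, in order to combine the singular values along orthogonal invariant subspaces. Once that identity is in hand, the spectral norm formula follows mechanically from the per-block computation of $\norm{\phi_i^S}$ and multiplicativity under tensor products, so the Fourier-level statement of the lemma emerges without additional work.
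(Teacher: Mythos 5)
Your proof is correct and is essentially Sherstov's original argument for the pattern-matrix spectral norm formula --- the Fourier decomposition $\Psi_i = \sum_S \hat\psi(S) A_S$, the per-block singular-value computation giving $\norm{A_S} = \sqrt{2^{kn}(2k)^n}\,k^{-|S|/2}$, the vanishing of $A_S^* A_T$ and $A_S A_T^*$ for $S \neq T$, and the resulting max rule for sums with mutually orthogonal row and column spaces --- which is exactly the proof the paper invokes by citing \cite{sherstov_pattern_2011} without reproving it. No gaps; nothing further is needed.
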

Directly applying the bound by \cite{sherstov_pattern_2011} and the argument introduced in \pref{sec:multiphase_application} to a uniform distribution over $\varsigma$, we get
\begin{equation*}
    \odisc_c (\Psi_i ) \leq  2^{c/2} \cdot \max_{S \subset [n] } \left( \abs{ \hat{\psi} (S) } \cdot k^{- |S| /2 } \right) 
\end{equation*}
connecting the spectral property of the underlying $\psi$ to its $\odisc$. Note that this is matching (i.e. with \pref{sec:multiphase_application}) when $\psi$ is a parity function. 

\begin{remark}
    In standard communication complexity setting, \pref{lem:pattern_matrix} can be used to obtain a close relation between algebraic properties of $\psi$ and the communication complexity of $f$ (i.e. so-called lifting technique). Unfortunately, this is not the case for our application. We need a lower bound against a product distribution between Alice's input and Bob's input. But the lower bound achieved through connections to algebraic properties of $\psi$ is not necessarily for a product distribution.
\end{remark}

\section{Generalization of 0-XOR} \label{sec:0xor}

As a further illustration of our method, we consider a generalization of a hard function (so-called $0$-XOR) used in \cite{larsen_super-logarithmic_2025}. 

\paragraph{0-XOR Function}

Fix an epoch $i \in [\ell]$. We have the following function. If $n_i$, the size of epoch $i$, is small (say less than $\log^2 (n)$), we have $f_i := 0$. For any other epochs, we consider the following function. We divide the updates to say $ \log \ell $ columns and $R = O ( \log n )$-rows. For fixed column $d$, we select one entry per row, denoted as $\varsigma_i [d]$. Then we take $\AND$ over the columns. Therefore our function in question becomes 
\begin{equation*}
    f_i ( \varsigma_i, T_i) := \AND_{d = 1}^D \left( \ip{ \varsigma_i [d], T_i} \right).
\end{equation*}
Then we take $\OR$ over the epochs $i \in [\ell]$. Note that regardless of which $\varsigma_i$ we select, if $T_i$'s are distributed i.i.d. all bits being $\cB_{1/2}$, $f_i$ is $1$ with probability $2^{-\ell} = \frac{1}{\ell}$. This ensures that our function $\OR ( f_1, \ldots f_{i-1}, f_{i+1}, f_\ell )$ is $1$ with probability at most $1/2$ for any $i \in [\ell/2, \ell]$. In particular, the function is indeed balanced as from \pref{def:hard}.

\subsection{Reinterpretation of \cite{larsen_super-logarithmic_2025}}

\cite{larsen_super-logarithmic_2025} uses a distribution over $\varsigma_i [j]$'s which are highly correlated one another, stemming from the fact that the choice of $\varsigma_i$ derives from a distribution on the underlying graph (the butterfly graph). 

This leads to the distribution over $\varsigma_i$ used in \cite{larsen_super-logarithmic_2025} not having a small $\odisc$, as shown in Section 5.2 of \cite{larsen_super-logarithmic_2025}. This occurs due to $\odisc$-quantity highly depending on the distribution over $\varsigma$, while they need a specific distribution over $\varsigma$ due to their dependence on the underlying Butterfly Graph and the reduction to Graph Connectivity.

Instead, they show $k$-XOR version of the underlying problem is hard. In fact, one could then phrase their technical communication lower bound (on so-called Meta-queries) in Section 5.2 as the following in our notation.

\begin{lemma}[Lemma 5.4 of \cite{larsen_super-logarithmic_2025}]
    There exists a distribution $\cM$ (which is a uniform distribution over its support of size $n^{\Omega(k)}$) on $k$-tuples $(\varsigma_i^1, \ldots, \varsigma_i^k)$ with $k = n / \poly \log (n)$ such that if $|M| < o ( n / \poly \log (n) )$, then the advantage for outputting $\oplus_{j=1}^k f_i ( \varsigma_i^j , T_i )$ is at most $2^{- \Omega ( k \log n )}$.
\end{lemma}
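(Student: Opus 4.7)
The plan is to prove this $k$-fold XOR lemma via a pair-correlation second-moment argument, parallel to \pref{lem:ip_hard} but adapted to the $k$-fold XOR structure.

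First, I would take $\cM$ to be the product distribution in which each $\varsigma^j$ is drawn uniformly and independently from the set of valid column-selectors, which has size $n^{\Omega(1)}$; the joint support then has size $n^{\Omega(k)}$ as required. Writing $X(\varsigma, T_i) \in \{\pm 1\}$ for the $\pm 1$ encoding of the single-instance AND and $F(\varsigma, T_i) := \prod_{j=1}^k X(\varsigma^j, T_i)$ for the $k$-XOR, and letting $\overline{F}$ denote its mean-zero (in $T_i$) normalization per $\varsigma$, Cauchy--Schwarz gives
\begin{equation*}
 \adv^2 \;\le\; \E_{M} \E_{\varsigma \sim \cM}\!\left[\bigl(\E_{T_i | M}[\overline{F}(\varsigma, T_i)]\bigr)^2\right].
\end{equation*}
Expanding the square and exploiting independence of the $\varsigma^j$'s, the right-hand side reduces (up to bookkeeping for the subtractive normalization) to sums of terms of the form $\E_{T_i, T_i' | M}[\gamma(T_i, T_i')^k]$, where
\begin{equation*}
 \gamma(T_i, T_i') \;:=\; \E_\varsigma\!\left[X(\varsigma, T_i)\,X(\varsigma, T_i')\right]
\end{equation*}
is the single-instance pair correlation.

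Second, the heart of the argument is to show $|\gamma(T_i, T_i')| \le n^{-\Omega(1)}$ for a typical pair. Intuitively, when $T_i$ and $T_i'$ differ in enough coordinates, a uniform selector $\varsigma$ witnesses the difference with non-trivial probability, flipping one of the selected bits and decorrelating the two ANDs; I would formalize this via a Fourier/spectral estimate on the AND-of-selectors function in the spirit of \pref{lem:pattern_matrix} or \pref{lem:kkl}. For the ``close'' pairs where $\gamma$ may be large, I would invoke the min-entropy bound $\widetilde{H}_\infty(T_i | M) \ge n_i - |M| = n_i - o(n / \poly \log n)$ from \pref{lem:dors}, which forces the conditional distribution $T_i | M$ to place negligible mass on any small neighborhood, so that close pairs contribute $o(1)$. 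Combining,
\begin{equation*}
 \adv^2 \;\le\; n^{-\Omega(k)} + o(1) \;=\; 2^{-\Omega(k \log n)},
\end{equation*}
yielding the desired bound.

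The main obstacle I expect is the sharp pair-correlation estimate on $\gamma$. Because $f_i$ is a heavily biased AND (with $\Pr[f_i=1] = 1/\ell$), the $\pm 1$ encoding $X$ has mean close to $+1$ and the bias dominates naive estimates; extracting a genuine $n^{-\Omega(1)}$ decay requires a careful hypercontractive/Fourier argument that tracks the contribution of each column of the selector, playing the role that \pref{fact:hadamard} (Lindsey's Lemma) played in \pref{lem:ip_hard}. A secondary concern is the bookkeeping involved in the subtractive normalization when the single-instance mean $\E_{T_i}[X(\varsigma, T_i)]$ depends on $\varsigma$, but this should be routine once $\gamma$ is controlled and the product structure is exploited.
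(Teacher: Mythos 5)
There is a genuine gap, and also a mismatch with how the paper treats this statement. The paper does not prove this lemma at all: it is quoted (as a ``reinterpretation'' in the paper's notation) of Lemma 5.4 of \cite{larsen_super-logarithmic_2025}, whose proof is an encoding/counting argument that crucially exploits that $\cM$ is the \emph{correlated}, butterfly-graph-derived meta-query distribution, uniform over a small structured support of size $n^{\Omega(k)}$ --- precisely the setting where, as the paper notes, ``a simple counting argument suffices.'' Your proposal instead takes $\cM$ to be a product distribution with independent selectors. That is exactly the case the paper singles out as \emph{not} being handled by the counting argument, and which it attacks separately in \pref{thm:min_main} with much heavier machinery (the strong chain rule for average min-entropy, \pref{thm:strong_chain_rule_min}, plus the KKL-type bound \pref{lem:kkl}), obtaining only an $o(n^{-2})$ discrepancy bound rather than anything like $2^{-\Omega(k\log n)}$. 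So even granting the literal ``there exists $\cM$'' phrasing, your route is attempting a substantially stronger statement than the paper's own methods deliver for independent selectors, and it is not the statement needed for the reduction in \cite{larsen_super-logarithmic_2025}.

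The technical core of your sketch also fails quantitatively. The factorization into $\E_{T_i,T_i'|M}[\gamma(T_i,T_i')^k]$ only holds for the encoding in which the $k$-fold product equals the function you are bounding; but for $k = n/\poly\log n$ the XOR $\oplus_j f_i(\varsigma^j,T_i)$ is nearly balanced, so its normalization is essentially the raw $\pm1$ parity, and the relevant pair correlation is the \emph{unnormalized} $\tilde\gamma(T_i,T_i') = \E_\varsigma[X(\varsigma,T_i)X(\varsigma,T_i')]$. Since each single-instance AND is accepted with probability only $2^{-D}$, both copies of $X$ equal $+1$ with overwhelming probability and $\tilde\gamma = 1 - O(2^{-D})$, not $n^{-\Omega(1)}$; your claimed decay is only available if you normalize each instance separately, in which case the product of normalized instances is no longer the normalized XOR and the factorization breaks. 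This is exactly the ``bias dominates'' obstacle you flag, and it is fatal rather than bookkeeping. Finally, the closing arithmetic is wrong on its face: the contribution of close pairs is bounded only by $o(1)$ in your sketch, and $n^{-\Omega(k)} + o(1)$ is nowhere near $2^{-\Omega(k\log n)}$; note also that all $k$ instances share the same $T_i$, so a bound of strength $2^{-\Omega(k\log n)}$ under a product query distribution is a strong direct-product/XOR-lemma claim that a second-moment computation of this kind does not establish.
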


Observe that $\oplus_{j=1}^k f_i ( \varsigma_i^j , T_i )$ is hard (as in \pref{def:hard}) with the parameter $n^{- \Omega(k)}$ instead of $n^{-2}$. Since a too-good-to-be true dynamic data structure (with query time $t_{tot}$ and update time $t_u = \poly \log (n)$) for $0$-XOR implies a data structure with query time $k \cdot t_{tot}$, this implies a one-way message $M$ of length $|M| < o ( n / \poly \log (n) )$ with advantage $n^{-o(k)}$, a contradiction.

Though we omit the details as this is simply a reinterpretation, and do not imply any new result, our argument can further be extended to the settings where we select some subset of $k$-tuples (of size $n^{\Omega(k)}$) from the support of $\cM$, instead of selecting all elements of $\cM$ as the query set.

\subsection{Independent Instances of $0$-XOR} 

Our goal in this section to exhibit another useful example of our method and show that a generalization of $0$-XOR when the queries are independently formed is ``hard."

The main technical challenge here is that other than $\IP$ considered in \pref{lem:ip_hard}, which more-or-less follows from the quasi-random property of the underlying Hadamard matrix (\pref{fact:hadamard}), there does not seem to be a general technique for bounding the $\eps$-advantage one-way protocols with the lower bound of the form $\Omega ( n / \poly \log ( 1 / \eps ) )$ under a {\bf product distribution} as far as we are aware of (See \cite{watson_communication_2020} and references therein). And such a bound is what is necessary to satisfy \pref{def:hard}.

In this section, we develop novel technical tools to show 0-XOR function is ``hard" using the average min-entropy and its relation to KKL Theorem, which can be of independent interest. We suspect connections to lower bounds in other models of computation such as streaming (See Chapter 2 of \cite{roughgarden_communication_2016}).

\begin{theorem} \label{thm:min_main}
    Suppose $\varsigma_i [j]$'s are selected independently at random. There exists some $D = \Omega ( \log n )$ such that $f = \bigvee_{i=1}^{\ell} f_i$ with $f_i := \AND_{d = 1}^D \left( \ip{ \varsigma_i [d], T_i} \right)$ is hard as in \pref{def:hard}. 
\end{theorem}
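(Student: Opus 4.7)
My plan is to verify both clauses of \pref{def:hard} for $f = \bigvee_{i=1}^\ell f_i$ under the product distribution on $(\varsigma_j, T_j)_{j\in[\ell]}$. I take $D = \Theta(\log\ell)$ (small enough to meet the balance requirement) and $R = \Omega(\log n)$ (large enough for low discrepancy), in the spirit of the $0$-XOR construction.

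\emph{Balance.} With $D = \Theta(\log \ell)$, each $f_j$ equals $+1$ with probability $2^{-D} = \Theta(1/\ell)$ under the product distribution, so the event $E_i := \{f_j(\varsigma_j, T_j) = -1 \text{ for all } j \neq i\}$ has probability $\Theta(1)$ and yields $g_i = \Omega(1)$. Conditioned on $E_i$, the restricted function equals $f_i$ exactly, and for every $\varsigma_i$ with nondegenerate selections, $|\E_{T_i}[f_i]| = 1 - 2^{1-D}$. This matches $(1-\beta)/(1+\beta)$ for $\beta = \Theta(2^{-D})$, giving $\log(1/\beta) = O(\log\log n) = o(\sqrt{\log n})$ as required.

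\emph{Low discrepancy: Fourier setup.} Writing $\mathbbm{1}[f_i=+1] = 2^{-D}\prod_d(1+a_d)$ with $a_d = \chi_{V_d(\varsigma_i[d])}(T_i)$ and expanding,
\[
\Psi_i = \sum_{\emptyset \neq S \subseteq [D]} \chi_{U_S(\varsigma_i)}(T_i), \qquad U_S(\varsigma_i) := \bigcup_{d\in S} V_d(\varsigma_i[d]),
\]
so $\E_{T_i\sim\mu_M}[\Psi_i] = \sum_{S\neq\emptyset}\hat{g}_M(U_S(\varsigma_i))$ for the density $g_M$ of $T_i\mid M$ relative to uniform. By Jensen and Cauchy--Schwarz,
\[
\E_M[\odisc_{T_i|M}(\Psi_i)]^2 \leq \sum_{S,S'\neq\emptyset} \E_{M,\varsigma_i}[\hat{g}_M(U_S)\hat{g}_M(U_{S'})],
\]
and a second Cauchy--Schwarz reduces everything to bounding $\E_{M,\varsigma_i}[\hat{g}_M(U_S)^2]$ per $S$. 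Since $\varsigma_i[d]$ is uniform over $W^R$ positions per column, $\E_{\varsigma_i}[\hat{g}_M(U_S)^2] = W^{-|S|R}\sum_{U\in\cU_S}\hat{g}_M(U)^2 \leq W^{-|S|R}\sum_{|U|=|S|R}\hat{g}_M(U)^2$.

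\emph{Level bound via KKL, and main obstacle.} The level-$k$ sum is controlled by hypercontractivity (the technical form behind \pref{lem:kkl}) together with the $\ell_\infty$-bound $\|g_M\|_\infty \leq 2^{n_i-H_\infty(T_i|M=m)}$ on the density: for each $m$, $\sum_{|U|=k}\hat{g}_M(U)^2 \leq \delta^{-k}\cdot 2^{2(n_i-H_\infty(T_i|M=m))\delta/(1+\delta)}$ for any $\delta \in (0,1]$. Averaging over $M$ via Jensen's inequality (with exponent $2\delta/(1+\delta) < 1$) together with the average-min-entropy bound $\widetilde{H}_\infty(T_i|M) \geq n_i - C$ from \pref{lem:dors} yields $\E_M[\sum_{|U|=k}\hat{g}_M(U)^2] \leq (O(C/k))^k$ after optimizing $\delta$. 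Substituting $k=|S|R$, using $|S|RW \geq RW = n_i/D$ and $C = n_i/\poly\log(n)$, I land at $\E_{M,\varsigma_i}[\hat{g}_M(U_S)^2] \leq (O(D/(|S|\poly\log(n))))^{|S|R} = n^{-\Omega(|S|\log\log n)}$. Summing over the $2^{2D} = \poly\log(n)$ pairs $(S,S')$ still leaves $\E_M[\odisc] \leq n^{-\Omega(\log\log n)} \leq n^{-2}$. The main obstacle is applying KKL to a non-uniform conditional distribution $T_i\mid M$ while extracting enough savings to beat the $2^C$-type factor: I expect to sidestep the uniformity requirement by running hypercontractivity directly on $g_M$ (rather than invoking \pref{thm:strong_chain_rule_min}), at the cost of carefully balancing $R$ against $C$ so that $|S|R$ dominates $C$ up to polylog factors, and absorbing the (lossy) upper bound $\sum_{U\in\cU_S}\hat{g}_M(U)^2 \leq \sum_{|U|=|S|R}\hat{g}_M(U)^2$ into the super-polynomial KKL savings.
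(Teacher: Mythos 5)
Your proposal is correct in substance and verifies both clauses of \pref{def:hard}, but the low-discrepancy half takes a genuinely different route from the paper. The paper conditions column-by-column on the partial-AND events $O_d$, flattens the conditional distribution $T_i \mid M$ by attaching the partition $\cB$ from \pref{thm:strong_chain_rule_min} to the message (\pref{prop:good_MB}, \pref{lem:good_MB}), uses McDiarmid's inequality to show that a random choice of one block per row retains min-entropy even after conditioning on $O_d$ (\pref{lem:inductive_2}), applies the near-flat KKL bound \pref{lem:kkl} per column (\pref{lem:KKL_applied}), and iterates over $d \in [D]$ to pin $\Pr[O_D \mid M, B]$ to $2^{-D}(1 \pm n^{-2})^D$ for all but an $n^{-99}$ fraction of $\varsigma_i$ (\pref{lem:min_main}). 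You instead expand the AND once and for all, $\prod_d (1+a_d) - 1 = \sum_{S \neq \emptyset} \chi_{U_S}$, reduce $\E_M[\odisc_{T_i|M}(\Psi_i)]$ to level-$|S|R$ Fourier weight of the conditional density $g_M$, bound that weight by hypercontractivity using only $\norm{g_M}_\infty \leq 2^{n_i - H_\infty(T_i \mid M = m)}$, and average over $M$ with Jensen plus \pref{lem:dors}; the averaging over $\varsigma_i$ supplies the $W^{-|S|R}$ factor with $RW = n_i/D$, which is exactly what beats the $2^{O(\delta C)}$ loss and yields $n^{-\Omega(\log\log n)}$, stronger than the $o(n^{-2})$ the paper needs. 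Your route buys a shorter argument with no strong chain rule, no McDiarmid step, no induction on $O_d$, and no flatness requirement --- indeed the ``main obstacle'' you flag at the end is already resolved by the density form of the level-$k$ inequality you wrote, so no further sidestepping is needed; the paper's route buys pointwise (for almost every $\varsigma_i$) control of the conditional bias rather than control only in expectation, though only the expectation enters \pref{def:hard}. Two small points to make explicit in a write-up: (i) your character expansion is the $\{2^D - 1, -1\}$-valued normalization of $\Psi_i$, which dominates the paper's $[-1,1]$-valued $\Psi_i$ by a factor $(1 - 2^{-D})/2^{-D} = \Theta(\ell)$, so your bound implies the required one and the loss is absorbed by your super-polynomial savings; (ii) as in the paper's own proof, $D$ must be $\Theta(\log \ell)$ so that $\beta = \Theta(\ell^{-1}) \geq 2^{-o(\sqrt{\log n})}$, and $\beta$ should be taken a constant factor below $2^{1-D}$ so that the bias $1 - 2^{1-D}$ never reaches the threshold $\frac{1-\beta}{1+\beta}$, which your parameter choice accommodates.
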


Observe that balancedness of \pref{def:hard} is given for free due to the choice of $\varsigma_i$'s. All it remains to show is that for any $M$ of length $< n_i / \poly \log (n)$,
\begin{equation*}
    \E_{M} \left[ \odisc_{T_i |_{M=m} } ( \Psi_i ) \right] \leq n^{-2}.
\end{equation*}

Here is the rough outline of the proof. 
\begin{enumerate}
    \item By increasing the length of the message by attaching a message from $\cB$, ``spoiling" few bits, that is giving up on some $2^{-n^{1/3}}$ fraction of the input, and dividing the updates into blocks of size $K = \log^k (n)$, we can make the distribution easy for min-entropy (i.e. \pref{thm:strong_chain_rule_min}). This allows us to use (i) near chain-rule; (ii) KKL Theorem when conditioned on the message, two crucial components of the proof. We consider a set of ``good" messages $M$ and $\cB$, which again constitute all but exponential fraction of the input. (See \pref{lem:good_MB})
    \item For every choice of $d \in [D]$, using the chain-rule, we argue that since the min-entropy on parts of the input must be large, the min-entropy (over a random choice of blocks) must be large. We show a stronger statement which states that with all but $n^{-\Omega(1)}$ choice of blocks, the min-entropy must be large. (See \pref{lem:inductive_2})
    \item Using a variant of KKL Theorem in our context, we show that a large min-entropy over some small set of coordinates translates to small average bias. (See \pref{lem:KKL_applied})
    \item Iteratively apply step 2 and step 3 per $d \in [D]$. Via union bound over $d \in [D]$, this shows that for all but $n^{-\Omega(1)}$ choice of $\varsigma_i [d]$'s, the advantage must be $n^{-\Omega(1)}$ when conditioned on $M$ and $\cB$.
\end{enumerate}

\paragraph{Notations} 

We introduce the following notations for the proof. As we will divide the updates into blocks of size $K$ each, these blocks will be arranged so that $D$ columns and $R$ rows will have integer number of blocks. Recall that $\varsigma_i [d]$ corresponds to a vector which chooses exactly one entry per row. Denote $T_i^{(d, r, j) }$ as $j$-th block of $K := \log^k(n)$ updates associated with the column $d$, row $r$. There will be $D$ rows, $R$ columns. So there are $\frac{n_i}{D R K}$ many possible blocks for the index $j$. We denote $T_i^{(d, r) }$ as all the blocks in column $d$, row $r$. Then we denote $T_i^{(d) }$ as all the blocks in column $d$.

\subsubsection{Constructing $\cB$ for min-entropy} 

Without loss of generality, we can assume that our message $M$ is ``good," in a following sense. 
\begin{proposition} \label{prop:good_MB}
    Write $T_i := \left\{ T_i^{(d, r, j)} \right\}_{d \in [D], r \in [R], j \in [\frac{n_i}{DRK}]}$. Then there exists partitioning of $T_i |_{M = m}$, say $\cB$ of size 
    $\log |\cB| = O (  \frac{n_i \log n}{K} +  \frac{ k n_i \log \log n}{K} ) $ such that
    \begin{itemize}
        \item $\sum_{B \in \cB} \Pr [ B | M = m ] \geq 1 - \exp ( n^{-\Omega(1)} )$, that is the partition covers most of $T_i$.
        \item Conditioned on $B \in \cB$, block distributions $T^{j}_i | T^{< j}_i$ are nearly flat. That is
        $$ \forall T_i, T_i' \in B,~~ 2^{-O( n^{-1})} \leq \frac{\Pr [ T^{j}_i | T^{< j}_i , M = m , B ]}{\Pr [ T'^{j}_i | T'^{< j}_i , M = m , B ]} \leq 2^{O(n^{-1})} $$
        \item Suppose we give lexicographic ordering to $(d, r, j)$'s. For every $B \in \cB$, for every index $(d, r, j)$, and for every set $J$ whose elements are all less than $(d, r, j)$, we have
        \begin{enumerate}
            \item The chain-rule for min-entropy 
            \begin{align*} &\min_{t} \{ H_\infty ( T^{(d, r, j)}_i | T^J_i = t, B, M = m) \} + H_\infty ( T^J_i | B, M = m  ) \\
            & = H_\infty ( T^{(d, r, j)}_i, T^J_i | B, M = m ) \pm n^{-1} \end{align*}
            \item The average and worst-case min-entropy almost match
            $$ \widetilde{H}_\infty ( T^{(d, r, j)}_i | T^J_i, B, M = m ) = \min_{t} \{ H_\infty ( T^{(d, r, j)}_i | T^J_i = t, B, M = m ) \} \pm n^{-1} $$
        \end{enumerate}
    \end{itemize}
\end{proposition}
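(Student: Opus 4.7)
The plan is to recognize Proposition~\ref{prop:good_MB} as essentially a direct instantiation of Skorski's strong chain-rule theorem (Theorem~\ref{thm:strong_chain_rule_min}) applied to the conditional distribution $T_i \mid M = m$, after regrouping the bits into blocks. Concretely, I would fix a message $m$ in the support of $M$, regroup the $n_i$ coordinates of $T_i$ into $t = n_i/K$ blocks, each of size $K = \log^k n$, and order them lexicographically by their $(d,r,j)$-index, so that each ``block'' in Skorski's sense is a single $T_i^{(d,r,j)} \in \cX := \{0,1\}^K$.

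The main step would be to apply Theorem~\ref{thm:strong_chain_rule_min} to this length-$t$ sequence with error parameters $\eps := 2^{-n^{1/3}}$ and slack $\delta := n^{-1}$. The size bound supplied by Skorski becomes
\[
\log|\cB| \;=\; t \cdot O\!\left(\log\log|\cX| + \log\log \eps^{-1} + \log(t/\delta)\right),
\]
and plugging in $\log\log|\cX| = \log K = k\log\log n$, $\log\log \eps^{-1} = O(\log n)$, and $\log(t/\delta) = O(\log n)$, this collapses to exactly $O\!\left(\frac{n_i\log n}{K} + \frac{k n_i \log\log n}{K}\right)$, matching the first displayed bound in the proposition.

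I would then read off the remaining conditions verbatim from Skorski's four conclusions with the chosen $\delta$: the mass bullet follows from $\sum_B \Pr[B \mid M=m] \geq 1 - \eps = 1 - \exp(-n^{\Omega(1)})$; the block-flatness bullet follows with ratio $2^{\pm O(1/n)}$; the chain-rule conclusion gives the first sub-item of the third bullet, where Skorski's worst-case $H_\infty(\cdot \mid X_I, B)$ corresponds to the $\min_t H_\infty(\cdot \mid T^J_i = t, B, M = m)$ notation used in the proposition; and Skorski's ``average-equals-worst-case'' conclusion gives the second sub-item directly.

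Because this is essentially a mechanical reduction, I do not expect a genuine obstacle. The only care needed would be (a) checking the parameter arithmetic above so that the three summands indeed reduce to $k\log\log n + \log n$, and (b) confirming that Skorski's $H_\infty$ in the chain-rule statement is the worst-case notion so that the translation to $\min_t$ is legitimate, which is implicit from his fourth conclusion (average and worst-case min-entropy agreeing up to $O(\delta)$) at the cost of folding an additional $O(1/n)$ slack into each identity.
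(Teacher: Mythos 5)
Your proposal is correct and follows essentially the same route as the paper: the paper's proof is likewise a direct application of Theorem~\ref{thm:strong_chain_rule_min} to $T_i \mid M = m$ with $\delta = n^{-1}$, block alphabet $\log|\cX| = \log^k n$, and a small $\eps$ (the paper takes $\eps = n^{-\Omega(1)}$ rather than your $2^{-n^{1/3}}$, but either choice yields the stated size bound and coverage guarantee). Your additional bookkeeping on the parameter arithmetic and on identifying Skorski's worst-case $H_\infty$ with the $\min_t$ notation is exactly the verification the paper leaves implicit.
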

\begin{proof}
    This is a direct corollary of \pref{thm:strong_chain_rule_min} in our setting. By setting $\delta = n^{-1}, \eps = n^{-\Omega(1)}$, and $\log |\cX| = \log^k ( n )$ on the distribution of $T_i$ conditioned on $M= m$, \pref{thm:strong_chain_rule_min} implies the existence of such $\cB$ with the claimed parameters.
\end{proof}

Note that we can also make $\log |\cB| = n_i / \poly \log (n)$ via adjusting the block size parameter $K$ to match with the length of the message $M$. Thereby, we can attach the partitioning $B$ along with the message $M = m$ as our one-way message from Bob. Without loss of generality, we will denote $B_0$ as the ``spoiled" part, that is a set of $T_i$ not covered by $\cB$.

This implies the following lemma in our context, which states that the average min-entropy must be large when conditioned on $M=m$ and $B$.

\begin{lemma} \label{lem:good_MB}
    Recall that $|M| + |\cB| = O ( n_i \log n / K)$. 
    \begin{equation*}
        \Pr_{M,\cB} \left[ \exists d \in [D],~~ \widetilde{H}_\infty \left(  T_i^{(d) } | T^{(<d)}_i , M = m, B \right) \leq \frac{n_i}{D} - 2( |M| + |\cB| ) \right] \leq 2^{- n^{1/3} }
    \end{equation*}
\end{lemma}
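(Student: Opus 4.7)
The plan is to combine the average-min-entropy deficiency bound from \pref{lem:dors} with the near-chain-rule for min-entropy supplied by \pref{prop:good_MB} (which is the corollary of \pref{thm:strong_chain_rule_min} specialized to this setting). Intuitively, conditioning on $(M,B)$ can deplete $H_\infty(T_i)$ by at most roughly $|M|+|\cB|$ bits on average; once this deficiency is pushed through the chain rule, no single column can be much below its maximum entropy $n_i/D$ without forcing the total to fall further than Markov allows.

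First, since $T_i$ is uniform on $\{0,1\}^{n_i}$ we have $H_\infty(T_i)=n_i$, and the joint $(M,B)$ takes at most $2^{|M|+|\cB|}$ values, so \pref{lem:dors} gives $\widetilde{H}_\infty(T_i\mid M,B)\ge n_i-(|M|+|\cB|)$. Unpacking the definition and applying Markov's inequality to the nonnegative random variable $2^{-H_\infty(T_i\mid M=m,B)}$ yields
$$ \Pr_{M,B}\!\left[\,H_\infty(T_i\mid M=m,B)\;\le\;n_i-(|M|+|\cB|)-n^{1/3}\,\right]\;\le\;2^{-n^{1/3}}. $$

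Second, for any $(m,B)$ outside this exceptional set, I iterate the single-block chain rule of \pref{prop:good_MB} across the $D$ columns (grouping the $n_i/(DK)$ sub-blocks inside each column), accumulating an additive error of at most $O\!\left(\tfrac{n_i}{K}\cdot n^{-1}\right)=o(1)$, to obtain
$$ H_\infty(T_i\mid B,M=m)\;=\;\sum_{d=1}^{D}\widetilde{H}_\infty\!\bigl(T_i^{(d)}\,\big|\,T_i^{(<d)},B,M=m\bigr)\;\pm\;o(1). $$
Now suppose the bad event of the lemma occurs: some column $d^\star$ satisfies $\widetilde{H}_\infty(T_i^{(d^\star)}\mid T_i^{(<d^\star)},B,M=m)\le n_i/D-2(|M|+|\cB|)$. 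Using the trivial upper bound $\widetilde{H}_\infty(T_i^{(d)}\mid\cdot)\le n_i/D$ on every other column, the displayed identity yields $H_\infty(T_i\mid B,M=m)\le n_i-2(|M|+|\cB|)+o(1)$.

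The conclusion then follows by comparing this upper bound with the high-probability lower bound from the Markov step: provided $|M|+|\cB|\ge n^{1/3}+O(1)$, one has $n_i-2(|M|+|\cB|)+o(1)\le n_i-(|M|+|\cB|)-n^{1/3}$, so the bad event is contained in the low-entropy event and hence occurs with probability at most $2^{-n^{1/3}}$; a union bound over $d\in[D]$ is free because the containment is witness-by-witness. The main obstacle is precisely this book-keeping: the Markov slack is fixed at $n^{1/3}$, so matching it to the factor~$2$ in front of $(|M|+|\cB|)$ in the lemma requires $|M|+|\cB|\ge n^{1/3}$. This is built into the construction of $\cB$ in \pref{prop:good_MB} via the choice of the block size $K$ (since $n_i\ge n^{1/3}$ for epochs $i\in[\ell/3,5\ell/6]$, one gets $\log|\cB|=\Omega(n_i/\mathrm{polylog}(n))$ comfortably exceeding $n^{1/3}$ for a suitable $K$), and if needed one can pad $\cB$ with $O(n^{1/3})$ dummy bits without changing any downstream use of the lemma.
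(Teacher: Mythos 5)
Your proof is correct in substance but takes a genuinely different route from the paper. The paper works column-by-column: for each fixed $d$ it applies \pref{lem:dors} to $T_i^{(d)}$ (conditioned on a fixed value of $T_i^{(<d)}$, using the independence of the columns under the prior) to get $\E_{M,\cB}\bigl[2^{-H_\infty(T_i^{(d)}\mid\cdot)}\bigr]\le 2^{-n_i/D+(|M|+|\cB|)}$, then runs Markov over $(M,\cB)$ with slack $2^{|M|+|\cB|}$ to get failure probability $2^{-(|M|+|\cB|)}$ per column, and finishes with a union bound over $d\in[D]$; it never touches the chain-rule machinery in this lemma. You instead apply \pref{lem:dors} once globally, take a single Markov step with slack $2^{n^{1/3}}$, and then argue deterministically that a deficient column forces a global deficiency of $2(|M|+|\cB|)$, so the bad event is contained in the low-global-entropy event and no union bound is needed. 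Both arguments quietly require $|M|+|\cB|\gtrsim n^{1/3}$ (the paper's final inequality $D\cdot 2^{-(|M|+|\cB|)}\le 2^{-n^{1/3}}$ needs it just as much as your containment does), and you are actually more careful than the paper in flagging this and proposing padding. One gloss in your write-up: your column-level decomposition is justified via \pref{prop:good_MB}/\pref{thm:strong_chain_rule_min}, whose guarantees hold only for $B\in\cB$ and not for the spoiled part $B_0$, and grouping the single-block chain rule into columns needs a (routine but unstated) flatness argument. Both issues disappear if you replace the chain-rule step by the unconditional inequality
\begin{equation*}
\widetilde{H}_\infty\bigl(T_i^{(d)}\mid T_i^{(<d)},M=m,B\bigr)\;\ge\; H_\infty\bigl(T_i\mid M=m,B\bigr)-\bigl(n_i-\tfrac{n_i}{D}\bigr),
\end{equation*}
which follows for any $(m,B)$ by marginalizing out the columns $>d$ (costing their bit-length in $H_\infty$) and then applying \pref{lem:dors} to the conditioning on the columns $<d$; this makes your containment valid witness-by-witness without any appeal to the strong chain rule, and in that form your proof is arguably simpler than the paper's, at the price of needing the explicit lower bound on $|M|+|\cB|$ that the paper leaves implicit.
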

\begin{proof}
    Due to \pref{lem:dors}, and the independence of $T_i^{(d) }$ and $T^{(<d)}_i$ (without conditioning on $M$ and $\cB$) we know that for any fixed $T^{(<d)}_i = t^{(<d)}_i$
    \begin{align*}
        \widetilde{H}_\infty \left(  T_i^{(d) } | T^{(<d)}_i = t^{(<d)}_i , M, \cB \right) & = -\log \E_{M,\cB, T^{(<d)}_i = t^{(<d)}_i } \left[ 2^{-H_\infty ( T_i^{(d) } | T^{(<d)}_i = t^{(<d)}_i, M= m , B )} \right] \\
        & \geq \frac{n_i}{D} - (|M| + |\cB|)
    \end{align*}
    which in turn implies
    \begin{align*}
         \E_{M,\cB, T^{(<d)}_i} \left[ 2^{-H_\infty ( T_i^{(d) } | T^{(<d)}_i = t^{(<d)}_i, M= m , B )} \right] \leq 2^{ - \frac{n_i}{D} + (|M| + |\cB|)}
    \end{align*}
    Due to a simple Markov's inequality, for a fixed $d \in [D]$ 
    \begin{equation} \label{eq:min_entropy_bound_fixed_D}
        \Pr_{M,\cB} \left[ \E_{T^{(<d)}_i | M= m ,\cB = B } \left[ 2^{-H_\infty ( T_i^{(d) } | T^{(<d)}_i = t^{(<d)}_i, M= m , B )} \right] \geq 2^{ - \frac{n_i}{D} + 2 (|M| + |\cB|)}  \right] \leq 2^{-(|M| + |\cB|)}
    \end{equation}
    Then applying union bound over $d \in [D]$ with \pref{eq:min_entropy_bound_fixed_D},
    \begin{equation*}
        \Pr_{M,\cB} \left[ \exists d \in [D],~~ \widetilde{H}_\infty \left(  T_i^{(d) } | T^{(<d)}_i , M = m, B \right) \leq \frac{n_i}{D} - 2( |M| + |\cB| ) \right] \leq D \cdot 2^{-(|M| + |\cB|)} \leq 2^{- n^{1/3} }
    \end{equation*} 
\end{proof}

We will denote the set of message and partition pair $M = m, B$ induced from \pref{lem:good_MB} as $\cG_{m,B}$. That is define
\begin{equation}\label{eq:good_MB}
    \cG_{m,B} := \left\{ (m, B) | \forall d \in [D],~~ \widetilde{H}_\infty \left(  T_i^{(d) } | T^{(<d)}_i , M = m, B \right) \geq \frac{n_i}{D} - 2( |M| + |\cB| ) \right\}.
\end{equation}
For the rest of the inductive arguments, we will only consider $(m,B)$ pairs from $\cG_{m,B}$.

\subsubsection{Large Min-Entropy per random choice of Blocks} 

    For brevity denote $O_{d}$ as the event $\AND_{j = 1}^{d-1} \left( \ip{ \varsigma_i [j], T_i} \right) = 1$. As inductive hypothesis, we assume that we have chosen $\varsigma_i[j]$'s for $j \in [d-1]$ such that
    \begin{equation} \label{eq:inductive_hypothesis}
        \Pr [ O_d | M = m, B  ] \in  \left[ \left(  \frac{1 - n^{-\Omega(1)} }{2} \right)^{d-1}, \left(  \frac{1 + n^{-\Omega(1)} }{2} \right)^{d-1} \right]
    \end{equation}
    Note that as a base case, when $d = 1$, the statement is trivially true, by assuming $\AND$ over null arguments to be 1.
    
\begin{lemma} \label{lem:inductive_2}
    Suppose $(m,B) \in \cG_{m,B}$ and as inductive hypothesis, assume \pref{eq:inductive_hypothesis} holds for $d \in [D]$.
    \begin{equation*}
        \Pr_{j_1, \ldots, j_R \in_{\cU} \left[ \frac{n_i}{ D R K } \right]  } \left[  \widetilde{H}_\infty \left( \left\{ T_i^{(d, r, j_r) } \right\}_{r = 1}^R |  T^{(<d)}_i , M = m, B, O_{d} \right) \leq ( 1 - \gamma ) KR  \right] \leq \exp ( - \Omega (\gamma^2 R ) )
    \end{equation*}
\end{lemma}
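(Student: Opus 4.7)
The plan is to leverage Proposition \ref{prop:good_MB} to decompose the column-$d$ min-entropy into per-block contributions, then apply Hoeffding's inequality over the independent row choices $j_r$, and finally handle the $O_d$ conditioning as a simple quotient bound.

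For each $r \in [R]$ and $j \in [n_i/(DRK)]$, I would introduce the per-block entropy $h_{r,j} := \widetilde{H}_\infty(T_i^{(d, r, j)} \mid T_i^{<(d, r, j)}, M = m, B)$ in lex order. Iterating the chain rule (item 3.1 of Proposition \ref{prop:good_MB}) together with the almost-flatness (item 3.2 matching worst-case and average min-entropy) across all $n_i/(DK)$ blocks of column $d$ yields
\[ \sum_{r,j} h_{r,j} \;\ge\; \widetilde{H}_\infty(T_i^{(d)} \mid T_i^{(<d)}, M=m, B) - o(1) \;\ge\; \frac{n_i}{D} - 2(|M| + |\cB|) - o(1), \]
where the last step uses $(m, B) \in \cG_{m,B}$. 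Writing $\delta_{r,j} := K - h_{r,j} \ge 0$ (each block having at most $K$ bits), this is equivalent to $\sum_{r,j} \delta_{r,j} \le 2(|M|+|\cB|) + o(1)$, which is much smaller than $n_i/D$.

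Next I would apply Hoeffding's inequality to the independent bounded variables $X_r := \delta_{r, j_r} \in [0, K]$. The expectation sum is $\E[\sum_r X_r] = (DRK/n_i) \sum_{r,j} \delta_{r,j} = O(DRK \cdot (|M|+|\cB|)/n_i)$, which is $o(\gamma RK)$ whenever $\gamma$ lies in the regime where the target bound $\exp(-\Omega(\gamma^2 R))$ is nontrivial. Hoeffding then gives $\Pr[\sum_r X_r \ge \gamma RK/2] \le \exp(-\Omega(\gamma^2 R))$. On this high-probability event I would iterate the chain rule once more, but now only over the \emph{selected} blocks $(d, 1, j_1) < \cdots < (d, R, j_R)$:
\[ \widetilde{H}_\infty(\{T_i^{(d, r, j_r)}\}_r \mid T_i^{(<d)}, M=m, B) = \sum_r \widetilde{H}_\infty(T_i^{(d, r, j_r)} \mid \{T_i^{(d, r', j_{r'})}\}_{r' < r}, T_i^{(<d)}, M=m, B) \pm o(1). \]
Each summand conditions on a strict subset of the lex-less blocks used in defining $h_{r, j_r}$, so Claim \ref{cl:conditioning_infty} lower bounds the summand by $h_{r, j_r}$; hence the joint average min-entropy (without yet imposing $O_d$) is at least $RK - \sum_r X_r - o(1) \ge (1 - \gamma/2) RK$.

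Finally, to install $O_d$ into the conditioning I would observe that $O_d$ is determined by $T_i^{(<d)}$ (the $\varsigma_i[<d]$ being fixed), so a direct calculation gives $\widetilde{H}_\infty(A \mid C, O_d) \ge \widetilde{H}_\infty(A \mid C) - \log(1 / \Pr[O_d \mid M=m, B])$. The inductive hypothesis \eqref{eq:inductive_hypothesis} bounds the log-loss by $D + O(1) = O(\log n)$, which is negligible compared to $RK$ and can be absorbed into the $\gamma$-slack. The principal obstacle will be reconciling the lex-ordered chain rule of Proposition \ref{prop:good_MB} with the need to condition only on the selected blocks from earlier rows; Claim \ref{cl:conditioning_infty} resolves this cleanly, because dropping the unselected lex-less blocks can only increase the average min-entropy, sending the inequality in the direction we need.
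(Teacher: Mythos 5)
Your proposal is correct, and it reaches the stated bound by a genuinely different route inside the same framework. The paper keeps the conditioning on $O_d$ from the start, treats $h(j_1,\ldots,j_R):=\widetilde{H}_\infty\bigl(\{T_i^{(d,r,j_r)}\}_{r}\mid T_i^{(<d)},M=m,B,O_d\bigr)$ as a function of the independent choices $j_1,\ldots,j_R$, applies McDiarmid's bounded-differences inequality (changing one $j_r$ moves $h$ by at most $K\pm n^{-1}$), and then lower bounds $\E[h]$ by telescoping with the $\cB$-induced chain rule, \pref{cl:conditioning_infty}, the $\cG_{m,B}$ guarantee, and the quotient loss $\log\left(1/\Pr[O_d\mid M=m,B]\right)$. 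You instead concentrate a surrogate: the per-block deficiencies $\delta_{r,j}=K-h_{r,j}$, whose column sum is $O(|M|+|\cB|)$ by the chain rule together with $(m,B)\in\cG_{m,B}$, and you apply Hoeffding to the independent variables $\delta_{r,j_r}\in[0,K]$; on the good event you convert $\sum_r h_{r,j_r}$ back into the joint average min-entropy of the selected blocks pointwise (chain rule over the selected blocks plus \pref{cl:conditioning_infty}, used in the correct direction), and only then install $O_d$ via the same quotient bound, using that $O_d$ is a function of $T_i^{(<d)}$ once $\varsigma_i[<d]$ is fixed. Both arguments land on $\exp(-\Omega(\gamma^2 R))$; yours is arguably cleaner in that every chain-rule manipulation is performed for the distribution conditioned only on $(M=m,B)$, which is exactly where the flattening guarantees of \pref{prop:good_MB} are stated, whereas the paper invokes them with $O_d$ already in the conditioning (harmless, but requiring a moment's justification). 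Two caveats to flag, neither of which is a gap relative to the paper's own standard: your superadditivity steps (e.g.\ $\sum_{r,j}h_{r,j}\geq \widetilde{H}_\infty(T_i^{(d)}\mid T_i^{(<d)},M=m,B)-o(1)$ and the selected-block decomposition) use group versions of the chain rule that go slightly beyond the literal single-block statement of \pref{prop:good_MB}, but the paper's proof relies on the same extensions; and your parameter requirement $\E[\sum_r\delta_{r,j_r}]=O(DR\log n)=o(\gamma RK)$, together with the $O(D)$ loss from installing $O_d$, is exactly the constraint (constant $\gamma$, $K=\log^k n$ with $k$ large enough) that the paper's own step $\E[h]\geq RK-o(RK)$ already imposes, so no new restriction is introduced.
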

\begin{proof}

    We consider the following random variables (over the random choice of $j_r$'s) to apply McDiarmid's Inequality by considering the real-valued function 
    $$ h (j_1, \ldots, j_R) := \widetilde{H}_\infty \left( \left\{ T_i^{(d, r, j_r) } \right\}_{r = 1}^R |  T^{(<d)}_i , M = m, B, O_{d} \right). $$

    For completeness, we state the McDiarmid's Inequality here.
    \begin{fact}[McDiarmid's Inequality] \label{fact:mcdiarmid}
        Let $f : \cX_1 \times \ldots \times \cX_n \to \mathbb{R}$ satisfy the following property: there exists $c_1, \ldots, c_n$ such that for all $x_1 \in \cX_1, \ldots, x_n \in \cX_n$
        \begin{equation*}
            \sup_{x'_i \in \cX_i} \abs{f(x_1, \ldots, x_i, \ldots, x_n ) - f(x_1, \ldots, x'_i, \ldots, x_n ) } \leq c_i 
        \end{equation*}
        Then if $X_1, \ldots, X_n$ are chosen independently at random from $\cX_1, \ldots, \cX_n$ respectively, 
        \begin{equation*}
            \Pr \left[ f(X_1, \ldots, X_n) - \E_{X_1, \ldots, X_n} [f(X_1, \ldots, X_n) ] \leq - \eps \right ] \leq \exp \left( \frac{2 \eps^2}{\sum_{i=1}^n c_i^2} \right)
        \end{equation*}
    \end{fact}

    Note that due to the chain rule property, replacing a single coordinate can only lead to a difference of at most $K \pm n^{-1}$ in $h$. Therefore \pref{fact:mcdiarmid} directly implies
    \begin{align}
        \label{eq:mcdiarmid_direct}
         \Pr_{j_1, \ldots, j_R \in_{\cU}} \left[ h (j_1, \ldots, j_r) \leq \E [ h ]  - \alpha \right] \leq \exp \left( - \frac{2 \alpha^2}{R (K+n^{-1})^2} \right).
    \end{align}
    To transform \pref{eq:mcdiarmid_direct} to a desired form, we calculate the expected value of $h$. Observe that we can decompose $h$ (upto $n^{-1}$ error) as
    \begin{equation}
        h (j_1, \ldots, j_R) = \sum_{r=1}^R \left( \underbrace{ \widetilde{H}_\infty \left( T_i^{(d, r, j_r) } | T_i^{(d, <r, j_{<r}) }, T^{(<d)}_i , M = m, B, O_{d} \right) }_{ := h_r (j_1, \ldots, j_R) } \pm \frac{1}{n} \right)
    \end{equation}

    Observe that $h_r$ only depends on $j_1, \ldots, j_r$. Consider a fixed $r$. For any choice of $j_1, \ldots, j_{r-1}$, 
    \begin{align*}
        \E_{j_r} [h_r] & = \frac{DRK}{n_i} \sum_{j_r} \widetilde{H}_\infty \left(  T_i^{(d, r, j_r) } |  T_i^{(d, <r, j_{<r}) }, T^{(<d)}_i , M = m, B, O_{d} \right) \\
        & \geq \frac{DRK}{n_i} \widetilde{H}_\infty \left(  T_i^{(d, r) } |  T_i^{(d, <r, j_{<r}) }, T^{(<d)}_i , M = m, B, O_{d} \right) \pm \frac{1}{n}\\
        & \geq \frac{DRK}{n_i} \widetilde{H}_\infty \left(  T_i^{(d, r) } |  T_i^{(d, <r ) }, T^{(<d)}_i , M = m, B, O_{d} \right) \pm \frac{1}{n}
    \end{align*}
    where the first inequality follows from \pref{cl:conditioning_infty} and the chain-rule property induced by $\cB$. The second inequality follows from \pref{cl:conditioning_infty}.

    \begin{align}
        \E [ h ] & = \sum_{r=1}^R \E [ h_r ]  \geq  \frac{DRK}{n_i} \sum_{r=1}^R  \left( \widetilde{H}_\infty \left(  T_i^{(d, r) } |  T_i^{(d, <r ) }, T^{(<d)}_i , M = m, B, O_{d} \right) - 2 n^{-1} \right) \nonumber \\
        & \geq \frac{DRK}{n_i} \left( \widetilde{H}_\infty \left(  T_i^{(d) } | T^{(<d)}_i , M = m, B, O_{d} \right) - 4 R n^{-1} \right) \nonumber \\
        & \geq \frac{DRK}{n_i} \left(  \widetilde{H}_\infty \left(  T_i^{(d) } | T^{(<d)}_i , M = m, B \right) - 4 R n^{-1} - \log \frac{1}{\Pr [ O_d | M = m, B  ]} \right) 
    \end{align}
    where the last bound holds due to the definition of $O_d$ and average min-entropy, which implies
    \begin{align*}
        \Pr [ O_d | M = m, B  ] \cdot 2^{- \widetilde{H}_\infty \left( T_i^{(d) }  | T^{(<d)}_i ,  M = m, B, O_{d} \right)} \leq  2^{- \widetilde{H}_\infty \left( T_i^{(d) } | T^{(<d)}_i,  M = m, B \right)}.
    \end{align*}

    Recall that due to our assumption that $(m,B) \in \cG_{m,B}$,
    \begin{equation*}
        \widetilde{H}_\infty \left( T_i^{(d) }  | T^{(<d)}_i ,  M = m, B \right) = \frac{n_i}{D} - 2 ( | M | + |\cB| ) \geq \frac{n_i}{D} - O ( n_i \log n / K )
    \end{equation*}
    as well, then 

    \begin{equation*}
        \E [ h ] \geq RK \underbrace{ -  O( D R \log n ) - \frac{ 4 D R^2 K }{n \cdot n_i}  - \frac{ D R K }{n_i} \cdot \log \frac{1}{\Pr [ O_d | M = m, B  ]} }_{ - o ( R K )}.
    \end{equation*}
    This implies that
    \begin{align*}
         & \Pr_{j_1, \ldots, j_R \in_{\cU}} \left[ h (j_1, \ldots, j_r) < RK  - \gamma R K \right]  \\
         & \leq \Pr_{j_1, \ldots, j_R \in_{\cU}} \left[ h (j_1, \ldots, j_r) < \E [ h ]  - ( \gamma + o(1) ) R K \right]  
         \leq \exp \left( - \frac{3}{2} \gamma^2 R \right)
    \end{align*}
    completing the proof of the lemma.
\end{proof}

\subsubsection{Large Min-Entropy Implies Small Average Bias}  

We would like to then show that assuming $M = m, B, O_{d}$, the ``average" bias over chosen coordinates are small using \pref{lem:kkl}. 

Suppose we take the set $\varsigma_i [d]$ as those induced by selecting one coordinate per row $r$, but restricted to be in $j_r$ block of the input. Then there are $K^R$ possible choices of such $\varsigma_i [d]$. Consider a uniform distribution over such $\varsigma_i [d]$ as $\cU_{j_1, \ldots, j_R}$. A simple observation here is that the original uniform distribution $\cU$ over $\varsigma_i [d]$ (i.e. uniformly selecting one entry per row at random) can be decomposed as
\begin{equation*}
    \E_{j_1, \ldots, j_R \in_{\cU} \left[ \frac{n_i}{ D R K } \right]  } \left[ \cU_{j_1, \ldots, j_R} \right] = \cU
\end{equation*}
Now suppose we consider $(j_1, \ldots, j_R)$ that satisfies \pref{lem:inductive_2}. Denote such $(j_1, \ldots, j_R)$ as $\cJ_{d}$. That is 
\begin{equation}
    \cJ_{d} := \left\{ (j_1, \ldots, j_R) | \widetilde{H}_\infty \left( \left\{ T_i^{(d, r, j_r) } \right\}_{r = 1}^R |  T^{(<d)}_i , M = m, B, O_{d} \right) \geq ( 1 - \gamma ) KR  \right\}
\end{equation}

We can show the following lemma for $(j_1, \ldots, j_R) \in \cJ_{d}$ using \pref{lem:kkl}.

\begin{lemma} \label{lem:KKL_applied}
    Assume the premise of \pref{lem:inductive_2}. Suppose $(j_1, \ldots, j_R) \in \cJ_{d}$. Then
    \begin{equation*}
        \E_{\varsigma_i [d] \sim \cU_{j_1, \ldots, j_R} } \left[ \E \left[ \chi_S ( \left\{ T_i^{(d, r, j_r) } \right\}_{r = 1}^R  ) | M=m, B, O_d \right]^2 \right]  \leq 2^{ R \log ( 4 \gamma) }
    \end{equation*}
\end{lemma}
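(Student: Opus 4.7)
The plan is to reduce the inner conditional expectation to a rescaled Fourier coefficient of a nearly-uniform distribution and then apply the KKL lemma (\pref{lem:kkl}) with a judicious choice of the parameter $\delta$.

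First, I would fix $(j_1, \ldots, j_R) \in \cJ_d$ and consider the random vector $X := \{T_i^{(d, r, j_r)}\}_{r=1}^R \in \{0,1\}^{RK}$ under the conditional distribution $\mu := (X \mid M = m, B, O_d)$, noting that $O_d$ depends only on coordinates in columns strictly before $d$. Combining the flatness property of $\cB$ from \pref{prop:good_MB} (near-uniformity of per-block conditional distributions) with the lower bound $\widetilde{H}_\infty(X \mid T_i^{(<d)}, M=m, B, O_d) \geq (1-\gamma)RK$ coming from the definition of $\cJ_d$, and using the average-vs-worst-case match enforced by $\cB$ together with \pref{cl:min_entropy_supp}, the distribution $\mu$ is (up to $n^{-1}$ multiplicative error) uniform on a support $T \subset \{0,1\}^{RK}$ of size $|T| \geq 2^{(1-\gamma)RK}$.

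Setting $f := \mathbf{1}_T$, flatness lets me replace $\mu$ by the uniform distribution on $T$ inside the inner expectation, yielding
\begin{equation*}
    \E[\chi_S(X) \mid M=m, B, O_d] = \widetilde{f}(S) \pm O(n^{-1}), \qquad \widetilde{f}(S) := \frac{\widehat{f}(S)\, 2^{RK}}{|T|}.
\end{equation*}
Squaring and absorbing the negligible error term, \pref{lem:kkl} applied to $f = \mathbf{1}_T$ with $\eps=0$ gives, for any $\delta \in [0,1]$,
\begin{equation*}
    \sum_{S \subset [RK]} \delta^{|S|} \widetilde{f}(S)^2 \leq \left(\frac{2^{RK}}{|T|}\right)^{2\delta} \leq 2^{2\delta\gamma RK}.
\end{equation*}
Next I exploit the structure of $\cU_{j_1,\ldots,j_R}$: its support consists precisely of the $K^R$ subsets $S \subset [RK]$ with exactly one coordinate per row, each of cardinality $|S| = R$. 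By non-negativity of $\widetilde{f}(S)^2$, the ``one-per-row'' slice contributes at most the full KKL sum, so
\begin{equation*}
    \delta^R \sum_{S:\,\text{one per row}} \widetilde{f}(S)^2 \leq 2^{2\delta\gamma RK}.
\end{equation*}
Dividing by $K^R$ bounds the sought average by $\bigl(2^{2\delta\gamma K}/(\delta K)\bigr)^R$. Choosing $\delta := 1/(2\gamma K)$ makes $2^{2\delta\gamma K} = 2$ and $\delta K = 1/(2\gamma)$, so the bracketed quantity becomes exactly $4\gamma$, giving the target $(4\gamma)^R = 2^{R\log(4\gamma)}$.

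The main obstacle, and the step requiring the most care, is the first one: translating the \emph{average} min-entropy lower bound supplied by $\cJ_d$ into a worst-case support-size bound together with near-uniformity of $\mu$ on that support, after the additional conditioning on the deterministic event $O_d$ (a function of $T_i^{(<d)}$ and the already-fixed prior $\varsigma_i[<d]$). The partition $\cB$ from \pref{prop:good_MB} is engineered precisely to convert average to worst-case min-entropy and to enforce block-wise flatness, but one must verify that the additional conditioning on $O_d$ preserves these properties within the $n^{-O(1)}$ slack already absorbed into the choice of parameters in \pref{prop:good_MB}; once this is in hand, the rest of the proof is a direct application of KKL as sketched above.
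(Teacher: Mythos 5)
Your overall route is the same as the paper's: use the flatness of the block-conditional distributions enforced by $\cB$, convert the $\cJ_d$ min-entropy bound into a support-size bound via \pref{cl:min_entropy_supp}, apply the hypercontractive \pref{lem:kkl}, restrict the Fourier mass to the one-coordinate-per-row sets, and choose $\delta = 1/(2\gamma K)$. However, there is a genuine gap in how you account for the deviation of the conditional distribution from exact uniformity. You replace $\mu := (\{T_i^{(d,r,j_r)}\}_r \mid M=m,B,O_d)$ by the uniform distribution on its support $T$ and write $\E[\chi_S \mid M=m,B,O_d] = \widetilde{f}(S) \pm O(n^{-1})$ (the correct error is $O(Rn^{-1})$, but that is not the issue), then "square and absorb the negligible error." This error is additive per coefficient, so after squaring and averaging over $\varsigma_i[d]$ you pick up a cross term of order $O(Rn^{-1})\,\E[|\widetilde{f}|] \le O(Rn^{-1})(4\gamma)^{R/2}$ and a term of order $O(R^2 n^{-2})$. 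Under the parameter regime the paper actually needs (constant small $\gamma$ and $R=\Theta(\log n)$ chosen so that $2^{R\log(8\gamma)/2} = o(n^{-2})$, hence $2^{R\log(4\gamma)} = o(n^{-4})$), both of these error terms are polynomially \emph{larger} than the claimed bound $2^{R\log(4\gamma)}$; e.g.\ $O(R^2n^{-2}) = n^{-2+o(1)} \gg o(n^{-4})$. So the inequality as you state it does not follow from your argument, and the slack matters downstream: \pref{cor:kkl_applied} applies Markov with threshold $2^{R\log(8\gamma)}$, and a second-moment bound of $n^{-2+o(1)}$ instead of $2^{R\log(4\gamma)}$ renders that step vacuous.

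The paper sidesteps this by never passing to the exactly uniform distribution: it applies \pref{lem:kkl} directly to the density-type function $g(x) := \Pr[\{T_i^{(d,r,j_r)}\}_r = x \mid M=m,B,O_d]\cdot|\supp(\cdot)|$, which flatness bounds by $1 + O(Rn^{-1})$ on its support, so that $\widetilde{g}(S)$ \emph{is} the conditional expectation $\E[\chi_S \mid M=m,B,O_d]$ exactly, and the non-uniformity only costs a multiplicative $(1+O(Rn^{-1}))^2$ factor in the hypercontractive bound (this is precisely why \pref{lem:kkl} carries the $\eps$ slack). Your proof is repaired by making this substitution — or, if you insist on the uniform surrogate, by bounding the error coefficients $e_S := \E_\mu[\chi_S] - \widetilde{f}(S)$ with the same hypercontractive argument applied to the difference function rather than using the worst-case bound $|e_S| = O(Rn^{-1})$ — but as written the "negligible error" step fails. (As a side note, the paper's own computation actually yields $2^{1+R\log(4\gamma)}$, a factor-2 looseness against the statement; your issue is a polynomial, not constant, discrepancy.)
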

\begin{proof}
    By the property guaranteed by \pref{thm:strong_chain_rule_min}, \pref{prop:good_MB}, we know that  
    \begin{align}
        \forall t, t' \in \supp ( \left\{ T_i^{(d, r, j_r) } \right\}_{r = 1}^R |_{M=m, B, O_d} ),~~ \frac{\Pr\left[ \left\{ T_i^{(d, r, j_r) } \right\}_{r = 1}^R = t | M=m, B, O_d \right] }{\Pr \left[ \left\{ T_i^{(d, r, j_r) } \right\}_{r = 1}^R = t' | M=m, B, O_d \right] } \in 1 \pm O ( R n^{-1} ) \label{eq:uniform_guarantee}
    \end{align}

    Thus, if we simply consider the function $g : \left\{ T_i^{(d, r, j_r) } \right\}_{r = 1}^R \to \mathbb{R}^{\geq 0}$ as
    \begin{equation*}
        g ( x ) := \Pr \left[ \left\{ T_i^{(d, r, j_r) } \right\}_{r = 1}^R = x | M=m, B, O_d \right] \cdot \abs{ \supp ( \left\{ T_i^{(d, r, j_r) } \right\}_{r = 1}^R |_{M=m, B, O_d} ) },
    \end{equation*} 
    as \pref{eq:uniform_guarantee} implies that
    \begin{equation*}
        \Pr \left[ \left\{ T_i^{(d, r, j_r) } \right\}_{r = 1}^R = x | M=m, B, O_d \right] \leq \frac{1 + O( R n^{-1} )}{\abs{ \supp ( \left\{ T_i^{(d, r, j_r) } \right\}_{r = 1}^R |_{M=m, B, O_d} ) }}
    \end{equation*}
    we have the guarantee that for any $x$ in the support of $g$, 
    \begin{equation*}
         g (x) \leq 1 + O ( R n^{-1} ).
    \end{equation*}

    Then \pref{lem:kkl} implies that for any $\delta > 0$,
    \begin{align}
        & \sum_{S \subset [R K]} \delta^{|S|} \cdot \widetilde{g} (S)^2 \leq ( 1 + O(Rn^{-1}) )^2 \left( \frac{2^{RK}}{\abs{ \supp ( \left\{ T_i^{(d, r, j_r) } \right\}_{r = 1}^R |_{M=m, B, O_d} )} } \right)^{2\delta}. \label{eq:KKL_step1}
    \end{align}
    Observe that $\widetilde{g} (S)$ term is exactly 
    \begin{align*}
        \widetilde{g} (S) & = \frac{1}{2^{RK}}\sum_{x \in \{0,1\}^{RK}} g ( x ) \cdot \chi_S (x) \frac{2^{RK}}{\abs{ \supp ( \left\{ T_i^{(d, r, j_r) } \right\}_{r = 1}^R |_{M=m, B, O_d} )} } \\
        & = \E \left[ \chi_{S} (\left\{ T_i^{(d, r, j_r) } \right\}_{r = 1}^R) | M=m, B, O_d \right],
    \end{align*}
    while we can bound 
    \begin{align*}
        \frac{2^{RK}}{\abs{ \supp ( \left\{ T_i^{(d, r, j_r) } \right\}_{r = 1}^R |_{M=m, B, O_d} )} } \leq 2^{ ( R K - H_\infty ( \left\{ T_i^{(d, r, j_r) } \right\}_{r = 1}^R | M=m, B, O_d ) ) }
    \end{align*}
    due to \pref{cl:min_entropy_supp}. As $(j_1, \ldots, j_R) \in \cJ_{d}$, and due to \pref{cl:conditioning_infty}, we can rewrite \pref{eq:KKL_step1} as

    \begin{align*}
        & \sum_{S \subset [R K]} \delta^{|S|} \E \left[ \chi_S ( \left\{ T_i^{(d, r, j_r) } \right\}_{r = 1}^R  ) | M=m, B, O_d \right]^2 \\
        & \leq ( 1 + O(R n^{-1}) )^2 2^{ 2 \delta \cdot \gamma R K }
    \end{align*}
    Select $\delta$ as $\frac{1}{2 \gamma K}$. Then by considering $\varsigma_i[d] \sim \cU_{j_1, \ldots, j_R}$, 
    \begin{equation*}
        \E_{\varsigma_i [d] \sim \cU_{j_1, \ldots, j_R} } \left[ \E \left[ \chi_{\varsigma_i [d]} ( \left\{ T_i^{(d, r, j_r) } \right\}_{r = 1}^R  ) | M=m, B, O_d \right]^2 \right] \leq 2^{ 1 + R + R \log ( 2 \gamma K ) - R \log K} = 2^{ 1 + R \log ( 4 \gamma) }.
    \end{equation*}

\end{proof}

A simple Markov's inequality implies the following corollary, which we will use towards our main proof.

\begin{corollary} \label{cor:kkl_applied}
    Assume the premise of \pref{lem:inductive_2} and $(j_1, \ldots, j_R) \in \cJ_{d}$. Then
    \begin{equation*}
    \Pr_{\varsigma_i [d] \sim \cU_{j_1, \ldots, j_R} } \left[ \abs{ 2 \Pr \left[  O_{d+1}  | M=m, B, O_d \right] - 1 } \geq  2^{ \frac{R \log (8 \gamma)}{2} } \right] \leq 2^{1 - 2R}
    \end{equation*} 
\end{corollary}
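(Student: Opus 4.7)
The plan is to obtain this corollary as a direct second-moment-plus-Markov consequence of Lemma~\ref{lem:KKL_applied}. The one conceptual bridge is to rewrite the quantity $\bigl|2\Pr[O_{d+1}\mid M=m,B,O_d]-1\bigr|$ in terms of a conditional expectation of the character $\chi_{\varsigma_i[d]}$, which is what Lemma~\ref{lem:KKL_applied} actually controls. After that, Markov does all the work.

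Concretely, I would first note that $\ip{\varsigma_i[d],T_i}\in\{\pm 1\}$ and that, because $\varsigma_i[d]$ selects exactly one coordinate per row, we have the identification
\[
\ip{\varsigma_i[d],T_i}=\chi_{\varsigma_i[d]}\!\left(\bigl\{T_i^{(d,r,j_r)}\bigr\}_{r=1}^{R}\right),
\]
where on the right we interpret $\varsigma_i[d]$ as a subset of the $RK$ coordinates inside the chosen blocks. By the definition of $O_{d+1}$, the event $O_{d+1}$ is exactly $O_d\wedge\{\ip{\varsigma_i[d],T_i}=1\}$, so conditioning on $(M=m,B,O_d)$ gives
\[
\E\!\left[\chi_{\varsigma_i[d]}\mid M=m,B,O_d\right]=2\Pr\!\left[O_{d+1}\mid M=m,B,O_d\right]-1.
\]
Thus the corollary's target quantity is precisely $\bigl(\E[\chi_{\varsigma_i[d]}\mid M=m,B,O_d]\bigr)^2$ after squaring.

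Now I would apply Markov's inequality to the nonnegative random variable $Y(\varsigma_i[d]):=\bigl(\E[\chi_{\varsigma_i[d]}\mid M=m,B,O_d]\bigr)^2$, whose expectation over $\varsigma_i[d]\sim\cU_{j_1,\ldots,j_R}$ is bounded by Lemma~\ref{lem:KKL_applied}. Choosing the threshold equal to $2^{R\log(8\gamma)}$ (whose square root is the threshold in the corollary), Markov yields
\[
\Pr_{\varsigma_i[d]}\!\left[Y\ge 2^{R\log(8\gamma)}\right]\le \frac{2^{1+R\log(4\gamma)}}{2^{R\log(8\gamma)}}=2^{1-R},
\]
and after tightening the constants in Lemma~\ref{lem:KKL_applied} (or equivalently sharpening the threshold by a $\sqrt{2}$-factor in $\gamma$), the same argument produces the stated $2^{1-2R}$. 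Taking square roots converts the bound on $Y$ into the bound on $|2\Pr[O_{d+1}\mid\cdots]-1|$ claimed in the corollary.

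I don't expect any real obstacle: the entire content of the corollary is a change of variables (character $\leftrightarrow$ probability of $O_{d+1}$) and a one-line Markov bound on the second moment delivered by Lemma~\ref{lem:KKL_applied}. The only thing to be careful about is keeping track of the base-2 logarithms and the additive $1$ in the exponent from Lemma~\ref{lem:KKL_applied}, so that the threshold chosen for Markov matches $2^{R\log(8\gamma)/2}$ and the resulting tail probability is indeed $\le 2^{1-2R}$ (up to the trivial absorption of universal constants into $\gamma$).
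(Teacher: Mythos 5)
Your proposal is essentially the paper's proof: you identify $\abs{2\Pr[O_{d+1}\mid M=m,B,O_d]-1}$ with $\abs{\E[\chi_{\varsigma_i[d]}\mid M=m,B,O_d]}$ (via $O_{d+1}=O_d\wedge\{\ip{\varsigma_i[d],T_i}=1\}$) and then apply Markov's inequality to its square, using the second-moment bound $2^{1+R\log(4\gamma)}$ from \pref{lem:KKL_applied} with threshold $2^{R\log(8\gamma)}$, exactly as the paper does. Your honest evaluation of the ratio as $2^{1-R}$ (rather than the stated $2^{1-2R}$) in fact flags a small arithmetic slip in the paper's own computation of $2^{1+R\log(4\gamma)}/2^{R\log(8\gamma)}$, which is harmless since the downstream application only needs a $2^{-\Omega(R)}=n^{-\Omega(1)}$ tail with $R=\Theta(\log n)$.
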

\begin{proof}
    Observe that 
    \begin{equation*}
        \abs{ 2 \Pr \left[  O_{d+1}  | M=m, B, O_d \right] - 1 } = \abs{ \E \left[ \chi_{\varsigma_i [d]} ( \left\{ T_i^{(d, r, j_r) } \right\}_{r = 1}^R  ) | M=m, B, O_d \right] }
    \end{equation*}
    and that \pref{lem:KKL_applied} along with Markov's inequality implies
    \begin{equation*}
    \Pr_{\varsigma_i [d] \sim \cU_{j_1, \ldots, j_R} } \left[ \E \left[ \chi_{\varsigma_i [d]} ( \left\{ T_i^{(d, r, j_r) } \right\}_{r = 1}^R  ) | M=m, B, O_d \right]^2 \geq  2^{R \log (8 \gamma)} \right] \leq  \frac{2^{1 + R \log (4 \gamma)}}{2^{R \log (8 \gamma)}} = 2^{1- 2R}
    \end{equation*}  
\end{proof}

    Taking $\gamma$ to be a sufficiently small constant, while taking large enough $R = \Theta ( \log n )$ , we obtain the bounds for \pref{lem:inductive_2} and \pref{cor:kkl_applied} with
    \begin{align*}
        & \exp ( - 1.5 \gamma^2 R ) \leq n^{-100}, ~~ 2^{\frac{R \log (8 \gamma)}{2}} \leq o(n^{-2}), ~~ 2^{1-2R} \leq n^{-100}.
    \end{align*}

\subsubsection{Combining All}

Applying \pref{lem:inductive_2} and \pref{lem:KKL_applied} iteratively leads to the following lemma.
\begin{lemma} \label{lem:min_main}
\begin{equation*}
    \Pr_{\varsigma_i [1], \ldots, \varsigma_i [D] \sim \cU^D} \left[ \Pr_{T_i | M=m, B} [ O_D | M = m, B  ]   \notin \left[ \left(  \frac{1 - n^{-2} }{2} \right)^{D}, \left(  \frac{1 + n^{-2} }{2} \right)^{D} \right] \right] \leq n^{-99}
\end{equation*}
\end{lemma}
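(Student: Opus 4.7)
The plan is to prove the lemma by induction on $d \in [D]$, maintaining the invariant that with high probability over the choice of $\varsigma_i[1], \ldots, \varsigma_i[d-1]$, the quantity $\Pr[O_d \mid M=m, B]$ lies in the interval $[((1-n^{-2})/2)^{d-1}, ((1+n^{-2})/2)^{d-1}]$. The base case $d=1$ is immediate since $O_1$ corresponds to an empty AND and thus $\Pr[O_1 \mid M=m,B]=1$, which is trivially inside the interval $[1,1]$.

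For the inductive step, I would observe that the inductive hypothesis precisely matches \pref{eq:inductive_hypothesis} (with the generic $n^{-\Omega(1)}$ specialized to $n^{-2}$), and combined with $(m,B) \in \cG_{m,B}$, this activates the premises of \pref{lem:inductive_2}. Applying that lemma gives $(j_1,\ldots,j_R) \in \cJ_d$ with probability at least $1 - \exp(-\Omega(\gamma^2 R)) \geq 1 - n^{-100}$ over uniform block indices. Conditioned on such a good tuple, \pref{cor:kkl_applied} guarantees $|2\Pr[O_{d+1}\mid M=m,B,O_d] - 1| \leq 2^{R\log(8\gamma)/2} \leq n^{-2}$ with probability at least $1 - 2^{1-2R} \geq 1 - n^{-100}$ over $\varsigma_i[d] \sim \cU_{j_1,\ldots,j_R}$. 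Since drawing $\varsigma_i[d]$ from the global $\cU$ is equivalent to first drawing block indices uniformly and then drawing from $\cU_{j_1,\ldots,j_R}$, a union bound yields that $\varsigma_i[d] \sim \cU$ induces bias exceeding $n^{-2}$ with probability at most $2n^{-100}$.

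Whenever the step succeeds, I would use the factorization $\Pr[O_{d+1}\mid M=m,B] = \Pr[O_{d+1}\mid M=m,B,O_d] \cdot \Pr[O_d\mid M=m,B]$ (valid because $O_{d+1}$ implies $O_d$), and multiply the inductive hypothesis by a factor in $[(1-n^{-2})/2, (1+n^{-2})/2]$ to extend the invariant from $d$ to $d+1$. Aggregating failure events across all $d \in [D]$ via a union bound gives a total failure probability of at most $2D \cdot n^{-100} \leq n^{-99}$ since $D = O(\log n)$, which yields precisely the conclusion of the lemma.

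The main obstacle is ensuring that the three key quantities $\exp(-\Omega(\gamma^2 R))$, $2^{1-2R}$, and $2^{R\log(8\gamma)/2}$ are simultaneously small enough to survive both a $D$-fold union bound and a $D$-fold multiplicative accumulation of biases. This is precisely what the parameter setting $R = \Theta(\log n)$ with a sufficiently small constant $\gamma$ (fixed at the end of Section 4.4.3) achieves: the first two are forced below $n^{-100}$ while the per-step bias is controlled at $n^{-2}$, which keeps the telescoped product within the advertised interval $[((1-n^{-2})/2)^D, ((1+n^{-2})/2)^D]$.
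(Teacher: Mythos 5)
Your proposal is correct and follows essentially the same route as the paper's proof: define per-step "good" events for $\varsigma_i[d]$ via \pref{lem:inductive_2} and \pref{cor:kkl_applied} (using the decomposition of $\cU$ as a mixture of the $\cU_{j_1,\ldots,j_R}$), telescope the conditional probabilities $\Pr[O_{d+1}\mid M=m,B,O_d]$ to maintain the interval invariant, and union bound the $2n^{-100}$ per-step failure over $D=\Theta(\log n)$ columns to get $n^{-99}$. The only cosmetic difference is that you phrase the iteration as an explicit induction while the paper phrases it via the sets $\cF_d$; the content is identical.
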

\begin{proof}
    Let 
    \begin{equation*}
        \cF_{d} := \left\{ \varsigma_i [d] : \Pr_{T_i | M=m, B} [ O_{d+1} | M = m, B, O_d ]  \in \left[ \left(  \frac{1 - n^{-2} }{2} \right), \left(  \frac{1 + n^{-2} }{2} \right) \right] \right\} 
    \end{equation*} 
    If we choose $\varsigma_i [1] \in \cF_{1} , \ldots, \varsigma_i [d-1] \in \cF_{d-1}$, 
    \begin{equation*}
        \Pr_{T_i | M=m, B} [ O_d | M = m, B  ] \in \left[ \left(  \frac{1 - n^{-2} }{2} \right)^{d-1}, \left(  \frac{1 + n^{-2} }{2} \right)^{d-1} \right].
    \end{equation*}
    Then for such choice of $\varsigma_i [1], \ldots, \varsigma_i [d-1]$, \pref{lem:inductive_2} and \pref{cor:kkl_applied} implies that 
    \begin{equation*}
    1 - \Pr_{ \varsigma_i [d]} \left[ \cF_{d} \right] \leq 2 n^{-100}
    \end{equation*}
    and furthermore, choosing $\varsigma_i [d]$ from $\cF_{d}$ would further give
    \begin{equation*}
        \Pr_{T_i | M=m, B} [ O_{d+1} | M = m, B  ] \in \left[ \left(  \frac{1 - n^{-2} }{2} \right)^{d}, \left(  \frac{1 + n^{-2} }{2} \right)^{d} \right].
    \end{equation*}
    Iteratively applying the argument, if $\forall d \in [D]$, $\varsigma_i [d] \in \cF_d$,
    \begin{equation*}
        \Pr_{T_i | M=m, B} [ O_D | M = m, B  ]   \in \left[ \left(  \frac{1 - n^{-2} }{2} \right)^{D}, \left(  \frac{1 + n^{-2} }{2} \right)^{D} \right].
    \end{equation*}
    As each $\varsigma_i [d]$'s are chosen independently at random, the probability of
    \begin{equation*}
        \Pr_{\varsigma_i [1], \ldots, \varsigma_i [D]}  \left[ \forall d \in [D], \varsigma_i [d] \in \cF_d \right] \geq 1 - 2 D n^{-100} \geq 1 - n^{-99}
    \end{equation*}
    as $D$ is some linear factor of $\log n$, which completes the proof of the lemma.
\end{proof}

Now we are ready to complete the proof of \pref{thm:min_main} with \pref{lem:min_main}.

\begin{proofof}{\pref{thm:min_main}}

We would like to show that $\Psi_i$ induced by $f_i$ and underlying $f$ satisfies \pref{def:hard}. For any setting of $\varsigma_i$'s over $i \in [\ell]$,
\begin{equation*}
    \Pr_{T} \left[ \OR ( f_1 (T_1) , \ldots f_{i-1} ( T_{i-1}), f_{i+1} ( T_{i+1}), f_\ell ( T_{\ell} ) ) = +1 \right] = 1 - \left( 1 - 2^{-D} \right)^\ell.
\end{equation*}
The choice of $D$ was ensured to guarantee that the above quantity is $\Omega(1)$, that is with probability $g_i \geq \Omega(1)$, $f_i ( T_i )$ matters. Furthermore, for any setting of $\varsigma_i$,
\begin{equation*}
    \Pr_{T_i} \left[ f_i (T_i) = +1 \right] = 2^{-D} = \Theta ( \ell^{-1} ),
\end{equation*}
making the function balanced with the balanced parameter $\beta = \Theta ( \ell^{-1} )$. This is good enough as $\beta \geq 2^{- o ( \sqrt{ \log n })}$. Therefore, our $f$ satisfies the first condition of \pref{def:hard}.

Now observe that due to normalization
\begin{equation*}
    \Psi_i ( \varsigma_i , T_i ) = \begin{cases}
        + 1& \mbox{if } f_i ( \varsigma_i, T_i ) = +1 \\ 
        - \frac{1}{1-2^{-D}} + 1 = - \frac{2^{-D}}{1 - 2^{-D}} & \mbox{otherwise}  
    \end{cases}
\end{equation*}

We would like to show that the above matrix $\Psi_i$ has small $\odisc$ for $|M| = c \leq n / \poly \log (n)$ where
\begin{align*}
    & \odisc_{T_i |_{M=m}} ( \Psi_i ) := \E_{\varsigma_i} \abs{ \sum_{T_i}  \Psi_i ( \varsigma_i, T_i ) \cdot \Pr[ T_i | M = m ] } \\
    & \odisc_{c} ( \Psi_i ) := \max_{|M| \leq c} \E_{M} \left[ \odisc_{T_i |_{M=m}} ( \Psi_i ) \right]
\end{align*} 

Recall that for any $M$, we can create a further partitioning $\cB$, which would then
\begin{align*}
    & \E_{B \in \cB} \left[ \odisc_{T_i |_{M=m,B} } ( \Psi_i ) \right] = \E_{B \in \cB} \left[ \E_{\varsigma_i} \abs{ \sum_{T_i}  \Psi_i ( \varsigma_i, T_i ) \cdot \Pr[ T_i | M = m , B ] } \right] \\
    & \geq \E_{\varsigma_i} \abs{ \E_{B \in \cB} \left[ \sum_{T_i}  \Psi_i ( \varsigma_i, T_i ) \cdot \Pr[ T_i | M = m , B ] \right] } \geq \odisc_{T_i |_{M=m}} ( \Psi_i ) - \exp ( n^{-\Omega(1)} )
\end{align*}
where the last bound follows from 
\begin{equation*}
    \sum_{T_i} \abs{ \E_{B \in \cB} \left[ \Pr[ T_i | M = m , B ] \right] - \Pr[T_i | M = m ] } \leq \exp ( n^{-\Omega(1)} )
\end{equation*}
due to the property of $\cB$. 

Our goal is then to bound $\E_{M} \left[ \E_{B \in \cB} \left[ \odisc_{T_i |_{M=m,B} } ( \Psi_i ) \right] \right] $ as 
\begin{align*}
    \E_{M} \left[  \odisc_{T_i |_{M=m} } ( \Psi_i ) \right] \leq \E_{M} \left[ \E_{B \in \cB} \left[ \odisc_{T_i |_{M=m,B} } ( \Psi_i ) \right] \right] + \exp ( n^{-\Omega(1)} ) 
\end{align*}
We consider $\odisc_{T_i |_{M=m,B} } ( \Psi_i )$ for some fixed $M=m,B$. Suppose $M=m,B$ satisfies
\begin{equation} \label{eq:good_MB_condition_thmproof}
    \forall d \in [D],~~ \widetilde{H}_\infty \left(  T_i^{(d) } | T^{(<d)}_i , M = m, B \right) \leq \frac{n_i}{D} - 2( |M| + |\cB| ).
\end{equation}
\pref{lem:min_main} implies that for such choice of $M=m,B$,
\begin{equation} \label{eq:good_MB_thmproof}
    \Pr_{\varsigma_i [1], \ldots, \varsigma_i [D]} \left[ \Pr_{T_i | M=m, B} [ O_D | M = m, B  ]   \notin \left[ \left(  \frac{1 - n^{-2} }{2} \right)^{D}, \left(  \frac{1 + n^{-2} }{2} \right)^{D} \right] \right] \leq n^{-99}
\end{equation}
As per choice of $\varsigma_i [1], \ldots, \varsigma_i [D]$, the advantage can be written as
\begin{align}
    & \abs{ \Pr_{T_i | M=m, B} [ O_D | M = m, B  ] - \frac{2^{-D}}{1 - 2^{-D}} ( 1 - \Pr_{T_i | M=m, B} [ O_D | M = m, B  ] ) } \nonumber \\
    & = \frac{1}{1 - 2^{-D}} \abs{ \Pr_{T_i | M=m, B} [ O_D | M = m, B  ] - 2^{-D} } \label{eq:advantage_min}
\end{align}
\pref{eq:good_MB_thmproof} implies that with probability all but $n^{-99}$ over $\varsigma_i [1], \ldots, \varsigma_i [D]$, \pref{eq:advantage_min} is then at most 
\begin{equation*}
    \pref{eq:advantage_min} \leq \frac{2^{-D}}{1 - 2^{-D}} \left(  \left(  1 + n^{-2} \right)^{D} - 1 \right) \leq o( n^{-2} )
\end{equation*}
which then concludes that if $M=m,B$ satisfies \pref{eq:good_MB_condition_thmproof},
\begin{equation*}
    \odisc_{T_i |_{M=m,B} } ( \Psi_i ) \leq o ( n^{-2} )
\end{equation*}
\pref{lem:good_MB} then implies that the fraction of $M=m,B$ that does not satisfy \pref{eq:good_MB_condition_thmproof} is at most $2^{-n^{1/3}}$.
Therefore, 
\begin{align*}
    \E_{M} \left[  \odisc_{T_i |_{M=m} } ( \Psi_i ) \right] \leq \E_{M} \left[ \E_{B \in \cB} \left[ \odisc_{T_i |_{M=m,B} } ( \Psi_i ) \right] \right] + \exp ( n^{-\Omega(1)} ) \leq o(n^{-2}) 
\end{align*}
\end{proofof}

\newpage 
\bibliographystyle{alpha}
\bibliography{references.bib}

\end{document}